\PassOptionsToPackage{nosumlimits,nonamelimits}{amsmath}
\PassOptionsToPackage{colorlinks,linkcolor={blue},citecolor={blue},urlcolor={red},breaklinks=true,final}{hyperref}
\documentclass[runningheads,final]{llncs}
\sloppy

\usepackage{stel-common}
\newcommand{\by}[1]{\text{/$\mspace{-2mu}$/~#1}}		       %

%
%
%
%

\usepackage{bm} 

%
%

%
\newcommand{\mybar}[3]{%
  \mathrlap{\hspace{#2}\overline{\scalebox{#1}[1]{\phantom{\ensuremath{#3}}}}}\ensuremath{#3}
}

\newcommand{\barSigmas}{\mybar{0.9}{0pt}{\Sigmas}}

\newcommand{\gra}{\mathsf{gra}}
\newcommand{\dom}{\mathsf{dom}}


\providecommand{\catname}{\mathbf} 
\providecommand{\clsname}{\mathcal}
\providecommand{\oname}[1]{{\mathop{\mathsf{#1}}\xspace}}

\def\defcatname#1{\expandafter\def\csname B#1\endcsname{\catname{#1}}}
\def\defcatnames#1{\ifx#1\defcatnames\else\defcatname#1\expandafter\defcatnames\fi}
\defcatnames ABCDEFGHIJKLMNOPQRSTUVWXYZ\defcatnames

\def\defclsname#1{\expandafter\def\csname C#1\endcsname{\clsname{#1}}}
\def\defclsnames#1{\ifx#1\defclsnames\else\defclsname#1\expandafter\defclsnames\fi}
\defclsnames ABCDEFGHIJKLMNOPQRSTUVWXYZ\defclsnames

\def\defbbname#1{\expandafter\def\csname BB#1\endcsname{{\mathbb{#1}}}}
\def\defbbnames#1{\ifx#1\defbbnames\else\defbbname#1\expandafter\defbbnames\fi}
\defbbnames ABCDEFGHIJKLMNOPQRSTUVWXYZ\defbbnames

\def\Set{\catname{Set}}


\providecommand{\argument}{-}

\providecommand{\wave}[1]{\widetilde{#1}}		               

\DeclareOldFontCommand{\bf}{\normalfont\bfseries}{\mathbf}

\providecommand{\Id}{\operatorname{Id}}


\providecommand{\id}{\mathsf{id}}
\providecommand{\op}{\mathsf{op}}
\providecommand{\comp}{\mathbin{\circ}}
\providecommand{\iso}{\mathbin{\cong}}



\providecommand{\xto}[1]{\,\xrightarrow{#1}\,}

\providecommand{\To}{\mathrel{\Rightarrow}}			           

\providecommand{\dar}{\kern-1.2pt\operatorname{\downarrow}}	
\providecommand{\uar}{\kern-1.2pt\operatorname{\uparrow}}	


\providecommand{\bigor}{\bigvee}
\providecommand{\bigand}{\bigwedge}


\providecommand{\fst}{\oname{fst}}
\providecommand{\snd}{\oname{snd}}

\providecommand{\brks}[1]{\langle #1\rangle}

\providecommand{\inl}{\oname{inl}}
\providecommand{\inr}{\oname{inr}}
\providecommand{\inj}{\oname{in}}

\DeclareSymbolFont{Symbols}{OMS}{cmsy}{m}{n}
\DeclareMathSymbol{\iobj}{\mathord}{Symbols}{"3B}


\providecommand{\ev}{\oname{ev}}

\usepackage{stmaryrd}


\providecommand{\by}[1]{\text{/\!\!/~#1}}			             
\providecommand{\pacman}[1]{}					                     

\newcommand{\undefine}[1]{\let #1\relax}					                       


\providecommand{\mone}{{\text{\kern.5pt\rmfamily-}\mathsf{\kern-.5pt1}}}




 
\makeatletter
\def\mfix#1{\oname{#1}\@ifnextchar\bgroup\@mfix{}}	       
\def\@mfix#1{#1\@ifnextchar\bgroup\mfix{}}			           
\makeatother


\providecommand{\case}[3]{\mfix{case}{\mathbin{}#1}{of}{#2}{\kern-1pt;}{\mathbin{}#3}}

\DeclareMathSymbol{\mathinvertedexclamationmark}{\mathord}{operators}{'074}
\DeclareMathSymbol{\mathexclamationmark}{\mathord}{operators}{'041}
\makeatletter
\newcommand{\raisedmathinvertedexclamationmark}{%
  \mathord{\mathpalette\raised@mathinvertedexclamationmark\relax}%
}
\newcommand{\raised@mathinvertedexclamationmark}[2]{%
  \raisebox{\depth}{$\m@th#1\mathinvertedexclamationmark$}%
}
\makeatother

\newcommand{\sqdown}{\logp{\Downarrow}}

\newcommand{\wt}{\widetilde}

\newcommand{\Pt}{V}

\newcommand{\Pow}{\mathcal{P}}
\newcommand{\xTo}{\xRightarrow}

\newcommand{\hatini}{\hat{\ini}}

\newcommand{\qand}{\quad\text{and}\quad}
\newcommand{\qqand}{\qquad\text{and}\qquad}

\newcommand{\under}[1]{\lvert#1\rvert}


\newcommand{\Sigmas}{\Sigma^{\star}}

\newcommand{\ar}{\mathsf{ar}}

\newcommand{\seq}{\subseteq}
\newcommand{\ol}{\overline}
\newcommand{\out}{\mathsf{out}}
\newcommand{\outl}{\mathsf{outl}}
\newcommand{\outr}{\mathsf{outr}}

\providecommand{\C}{}

\renewcommand{\C}{{\mathcal{C}}}

\renewcommand{\id}{{\mathsf{id}}}
\newcommand{\Nat}{\mathds{N}}












\newcommand{\f}{\oname{f}}


\newcommand{\takeout}[1]{\empty}

\newcommand{\ini}{\iota}

\newcommand{\wh}{\widehat}

\DeclareMathOperator{\Alg}{\mathbf{Alg}}

\renewcommand{\rho}{\varrho}

\newcommand{\opp}{\mathsf{op}}

\newcommand{\pullbackangle}[2][]{\arrow[phantom,to path={
                     -- ($ (\tikztostart)!1cm!#2:([xshift=8cm]\tikztostart) $)
                        node[anchor=west,pos=0.0,rotate=#2,
                        inner xsep = 0]
                        {\begin{tikzpicture}[minimum
                        height=1mm,baseline=0,#1]
    \draw[-] (0,0) -- (.5em,.5em) -- (0,1em);
                        \end{tikzpicture}}}]{}}

\makeatletter
\newsavebox{\@brx}
\newcommand{\llangle}[1][]{\savebox{\@brx}{\(\m@th{#1\langle}\)}%
  \mathopen{\copy\@brx\kern-0.5\wd\@brx\usebox{\@brx}}}
\newcommand{\rrangle}[1][]{\savebox{\@brx}{\(\m@th{#1\rangle}\)}%
  \mathclose{\copy\@brx\kern-0.5\wd\@brx\usebox{\@brx}}}
\makeatother

\renewcommand{\comp}{\cdot}
\renewcommand{\c}{\colon}

\newcommand{\xra}[1]{\mathrel{\raisebox{-1.15pt}{$\xrightarrow{\;\smash{\raisebox{2.5pt}{\makebox(3,0)[t]{\scriptsize $#1$}}\;}}$}}}
\renewcommand{\xto}{\xra}

\newcommand{\monoto}{\rightarrowtail}

\renewcommand{\Nat}{\mathbb{N}}

\newcommand{\fset}{{\mathbb{F}}}

\newcommand{\mon}{\bullet}

\newcommand{\mS}{{\mu\Sigma}}

\renewcommand{\epsilon}{\varepsilon}

\newcommand*\xbar[1]{%
  \kern.2em\hbox{%
    \vbox{%
      \hrule height 0.5pt 
      \kern0.5ex
      \hbox{%
        \kern-0.2em
        \ensuremath{#1}%
        \kern-0.4em
      }%
    }%
  }\kern.4em %
} 

\makeatletter
\newcommand{\monto}{\@ifstar{\@mtolifted}{\@mto}}
\newcommand{\@mto}{\multimapdot}
\newcommand{\@mtolifted}{\mathbin{\xbar{\multimapdot}}}
\makeatother

\newcommand{\Ar}[1]{\mathbf{Ar}(#1)}

\newcommand{\app}[2]{\,}


\let\oldcheckmark\checkmark
\renewcommand{\checkmark}{\raisebox{-4pt}{\scalebox{1.2}[.65]{$\oldcheckmark$}}}

\usepackage[utf8]{inputenc}

\usepackage{hypcap}
\setcounter{tocdepth}{2}

\usepackage{microtype}
\usepackage[strict]{changepage}

\usepackage{enumitem}
\setlist[enumerate,1]{label=(\arabic*),font=\normalfont,align=left,leftmargin=0pt,labelindent=0pt,listparindent=\parindent,labelwidth=0pt,itemindent=!,topsep=2pt,parsep=0pt,itemsep=2pt,start=1}
\setlist[enumerate,2]{label=(\alph*),font=\normalfont,labelindent=*,leftmargin=*,start=1}
\setlist[itemize]{labelindent=*,leftmargin=*}
\setlist[description]{labelindent=*,leftmargin=*,itemindent=-1 em}

\usepackage{ifdraft}
\ifdraft{
  \usepackage[final]{showlabels}

  \usepackage[footnote,marginclue,nomargin]{fixme}
}{
 \usepackage[layout=footnote,final]{fixme}
}

\FXRegisterAuthor{hu}{ahu}{HU} %
\FXRegisterAuthor{sm}{asm}{SM} %
\FXRegisterAuthor{st}{ast}{ST} %
\FXRegisterAuthor{ls}{als}{LS} %
\FXRegisterAuthor{sg}{asg}{SG} %
\FXRegisterAuthor{as}{aas}{AS} %

\renewcommand{\comp}{\cdot}
\newcommand{\klstar}{\sharp}

\let\cedilla\c
\renewcommand{\c}{\colon}
\newcommand{\term}{1} %

\usepackage{seqsplit}
\usepackage{xstring}
\usepackage{xcolor}

\tikzstyle{shiftarr}=[
rounded corners,%
to path={--([#1]\tikztostart.center)
  -- ([#1]\tikztotarget.center) \tikztonodes
  -- (\tikztotarget)},
]

\tikzset{
  commutative diagrams/.cd,
  arrow style=tikz,
  diagrams={>={Straight Barb[length=1.75pt,width=3.85pt,inset=1.95pt]}}, %
  row sep=large,
  column sep = huge
}

\tikzset{cong/.style={draw=none,edge node={node [sloped, allow upside down, auto=false]{$\cong$}}},
  iso/.style={draw=none,every to/.append style={edge node={node [sloped, allow upside down, auto=false]{$\cong$}}}}}

\usetikzlibrary{decorations.pathmorphing}

\tikzcdset{scale cd/.style={every label/.append style={scale=#1},
    cells={nodes={scale=#1}}}}

\usepackage{stackengine}
\stackMath
\newcommand\tsup[2][2]{%
  \def\useanchorwidth{T}%
  \ifnum#1>1%
    \stackon[-1.05ex]{\tsup[\numexpr#1-1\relax]{#2}}{\scalebox{2}[1]{$\mathchar"307E$}\kern-.5pt}%
  \else%
    \stackon[-.9ex]{#2}{\scalebox{2}[1]{$\mathchar"307E$}\kern-.5pt}%
  \fi%
}

\usepackage{xspace}

\usepackage{etoolbox} %

\theoremstyle{plain}

\newtheorem{lem}[theorem]{Lemma}
\newtheorem{prop}[theorem]{Proposition}

\theoremstyle{definition}
\newtheorem{defn}[theorem]{Definition} %
\newtheorem{expl}[theorem]{Example} %
\newtheorem{rem}[theorem]{Remark} %
\newtheorem{notn}[theorem]{Notation} %

\newcommand{\Tr}{\mathsf{Tr}} %
\newcommand{\Trl}{\Lambda}

\newcommand{\synt}{\Sigma}
\newcommand{\stsc}{\textbf{xTCL}\xspace}
\newcommand{\stlc}{\textbf{STLC}\xspace}

\newcommand{\Ty}{\mathsf{Ty}}
\newcommand{\Tyl}{\mathsf{Ty}}

\newcommand{\arty}[2]{#1 \rightarrowtriangle #2} %

\newcommand{\Pred}[2][]{\mathbf{Pred}_{#1}(#2)}

\newcommand{\pred}[1]{\overset{#1}{\rightarrowtail}}
\newcommand{\utype}{\mathsf{unit}}

\newcommand{\sto}{\twoheadrightarrow}

\usepackage{tensor}

\newcommand{\iimg}[2]{#1{}^\star[#2]}
\newcommand{\fimg}[2]{#1{}_\star[#2]}

\newcommand{\DDar}{\rotatebox[origin=c]{-90}{$\Rrightarrow$}}

\newcommand{\logp}{\square}
\newcommand{\invp}{\square}

\newcommand{\tensorlift}{\mathbin{\ol\mon}}

\theoremstyle{definition}
\newtheorem{notation}[theorem]{Notation}
\newtheorem{assumptions}[theorem]{Assumptions}

\usepackage{scalerel,stackengine,graphicx}
\renewcommand{\blacksquare}{{\ThisStyle{\ensurestackMath{%
  \stackinset{c}{}{c}{-1.5\LMpt}%
  {\SavedStyle\scaleobj{1.5}{\cdot}}{\SavedStyle\square}}}}}

\begin{document}\allowdisplaybreaks
\title{Logical Predicates in Higher-Order\\ Mathematical Operational Semantics}
\author{Sergey Goncharov\inst{1}${}^{,}$\thanks{Supported by Deutsche Forschungsgemeinschaft (DFG, German Research Foundation) – project number 501369690}
     , Alessio Santamaria\inst{2}
     , Lutz Schröder\inst{1}${}^{,}$\thanks{Supported by Deutsche Forschungsgemeinschaft (DFG, German Research Foundation) – project numbers 419850228}
     , \\ Stelios Tsampas\inst{1}${}^{,}$\thanks{Supported by Deutsche
       Forschungsgemeinschaft (DFG, German Research Foundation) – project
       numbers 419850228 and 527481841}
   and Henning Urbat\inst{1}${}^{,}$\thanks{Supported by Deutsche Forschungsgemeinschaft (DFG, German Research Foundation) – project number 470467389}} 

\authorrunning{S.~Goncharov, A.~Santamaria, L.~Schröder, S.~Tsampas, H.~Urbat}
\institute{Friedrich-Alexander-Universität Erlangen-Nürnberg
  \email{$\{$name.surname$\}$@fau.de}
  \and University of Sussex
  \email{a.santamaria@sussex.ac.uk}}

\maketitle \thispagestyle{empty}

\begin{abstract}
  We present a systematic approach to logical predicates based on
  universal coalgebra and higher-order abstract GSOS, thus making a
  first step towards a unifying theory of logical relations. We first observe
  that logical predicates are special cases of
  \emph{coalgebraic invariants} on mixed-variance functors. We
  then introduce the notion of a \emph{locally maximal logical
    refinement} of a given predicate, with a view to enabling
  inductive reasoning, and identify sufficient conditions on the
  overall setup in which locally maximal logical refinements
  canonically exist. Finally, we develop induction-up-to techniques
  that simplify inductive proofs via logical predicates on systems
  encoded as (certain classes of) higher-order GSOS laws by
  identifying and abstracting away from their boiler-plate part.
\end{abstract}
\section{Introduction}

Logical relations are %
arguably the most widely used method for reasoning on higher-order
languages. Historically, early examples of logical
relations~\cite{MILNER1978348,DBLP:journals/iandc/OHearnR95,ongthesis,PITTS199666,plotkin1973lambda,sieber_1992,STATMAN198585,DBLP:journals/jsyml/Tait67} were based on
denotational semantics, before the method evolved into logical
relations based on operational
semantics~\cite{DBLP:journals/toplas/AppelM01,5230591,10.1145/2103621.2103666,DBLP:conf/lics/Pitts96,DBLP:journals/mscs/Pitts00,10.5555/645722.666533}.
Today, operationally-based logical
relations are ubiquitous and serve purposes ranging from strong
normalization proofs~\cite{altenkirch_et_al:LIPIcs:2016:5972} and
safety properties~\cite{10.1145/3623510,10.1145/3408996} to reasoning
about contextual equivalence~\cite{10.1007/11693024_6,10.1145/3158152} and formally verified
compilation~\cite{10.1145/1596550.1596567,10.1145/1926385.1926402,DBLP:conf/icfp/NewBA16,10.1145/3434302}, in a variety of
settings such as effectful~\cite{5571705},
probabilistic~\cite{10.1145/3571195,10.1007/978-3-662-46678-0_18,10.1145/3236782}, and
differential programming~\cite{10.1016/j.tcs.2021.09.027,DBLP:journals/pacmpl/LagoG22,dallago_et_al:LIPIcs:2019:10687}.

Unfortunately, despite the extensive literature, there is a distinct lack of
a general formal theory of (operational) logical relations. As a reasoning
method, logical relations are applied in a largely empirical manner,
more so because their core
principles are well understood on an intuitive level. For example, there is
typically no formal notion of a logical predicate or relation; instead, if a
predicate or relation is defined by induction on types and maps ``related inputs
to related outputs'', it then meets the informal criteria to be called
``logical''. However, the empirical character of logical relations is
problematic for two main reasons: (i) complex machinery associated to
logical relations needs to be re-established anew on a per-case basis, and (ii)
it is hard to abstract  and simplify said
machinery, even though certain parts of proofs via logical relations seem
generic.

Recently, \emph{Higher-order Mathematical Operational
  Semantics}~\cite{gmstu23}, or \emph{higher-order abstract GSOS}, has
emerged as a unifying approach to the operational semantics of
higher-order languages. In this framework, languages are represented
as \emph{higher-order GSOS laws}, a form of distributive law of a
syntax functor $\Sigma$ over a mixed-variance behaviour bifunctor
$B$. In further work~\cite{UrbatTsampasEtAl23}, an abstract form of
\emph{Howe's method}~\cite{DBLP:conf/lics/LagoGL17,DBLP:conf/lics/Howe89,DBLP:journals/iandc/Howe96} for higher-order
abstract GSOS has been identified, in which an otherwise complex and
application-specific operational technique is, at the same time,
lifted to an appropriate level of generality and reduced to a simple
\emph{lax bialgebra}
condition.%

In the present paper, we work towards establishing a theory of logical
relations based on coalgebra and higher-order abstract GSOS,
starting from \emph{logical predicates} (or \emph{unary} logical relations).
In more detail, we present the following contributions:

\begin{enumerate}[label=(\roman*)]
\item A systematization of the method of logical
  predicates (\Cref{sec:logpred}), achieved by
  \begin{enumerate}
  \item identifying logical predicates as certain coalgebraic
    invariants~(\Cref{def:logpred}), parametric in a predicate lifting of the underlying mixed-variance bifunctor,
  \item introducing the \emph{locally maximal logical refinement} $\logp P$ of a
	predicate $P$ (\Cref{def:refinement}), which enables
	inductive proofs of $\logp P$, and
      \item identifying an abstract setting in which locally
        maximal logical refinements of predicates exist and are unique
        (\Cref{sec:constructing-logical-predicates}).
  \end{enumerate}
\item The development of efficient reasoning techniques on logical
  predicates, which we call \emph{induction up-to} (Theorems
  \ref{th:main2} and \ref{thm:ind-up-to-blackquare}), for higher-order
  GSOS laws satisfying a \emph{relative flatness}
  condition~(\Cref{def:relativelyflat}).
\end{enumerate}
We illustrate (ii) by providing proofs of strong
normalization for typed combinatory logic and type safety for the
simply typed $\lambda$-calculus which, thanks to the use of our up-to techniques, are significantly shorter and simpler than standard arguments found in the literature. Finally, we exploit the genericity of
our framework to study strong normalization on the level of
higher-order GSOS laws (\Cref{thm:strong-normalization}).
We note that the implementation of typed languages as higher-order GSOS
laws as such is also novel.

\paragraph{Related work} While denotational logical relations have
been studied in categorical generality,
e.g.~\cite{DBLP:journals/mscs/Goubault-LarrecqLN08,DBLP:journals/entcs/HermidaRR14,hermida1993fibrations,DBLP:phd/ethos/Katsumata05},
general abstract foundations of operational logical relations are far
less developed. In recent
work~\cite{DBLP:conf/fscd/DagninoG22,DBLP:journals/corr/abs-2303-03271},
Dagnino and Gavazzo introduce a categorical notion of operational
logical relations that is largely orthogonal to ours, in particular
regarding the parametrization of the framework: In \emph{op.\ cit.},
the authors work with a fixed \emph{fine-grain
  call-by-value} language~\cite{LevyPowerEtAl03}, parametrized by a signature
of generic effects, while the notion of logical relation is kept
variable and in fact is parametrized over a fibration; contrastingly, we
keep to the traditional notion of logical relation but parametrize
over the syntax and semantics of the language. Moreover, we work with
a small-step operational semantics, whereas the semantics used in
\emph{op.\ cit.}  is an axiomatically defined categorical evaluation
semantics.

\section{Preliminaries}
\subsection{Category Theory}\label{sec:categories}
Familiarity with basic category theory~\cite{mac2013categories} (e.g.~functors, natural
transformations, (co)limits, monads) is
assumed. We review some concepts and notation.

\medskip\noindent\emph{Notation.}
Given objects
$X_1, X_2$ in a category~$\C$, we write $X_1\times X_2$ for the
product and $\langle f_1, f_2\rangle\c X\to X_1\times X_2$ for the
pairing of $f_i\c X\to X_i$, $i=1,2$. We let
$X_1+X_2$ denote the coproduct, $\inl\c X_1\to X_1+X_2$ and
$\inr\c X_2\to X_1+X_2$ the injections, $[g_1,g_2]\c X_1+X_2\to X$ the
copairing of $g_i\colon X_i\to X$, $i=1,2$, and
$\nabla=[\id_X,\id_X]\colon X+X\to X$ the codiagonal. The \emph{slice category} $\C/X$, where $X\in \C$, has as objects all pairs $(Y,p_Y)$ of an object $Y\in \C$
and a morphism $p_Y\c Y\to X$, and a morphism from $(Y,p_Y)$ to
$(Z,p_Z)$ is a morphism $f\c Y\to Z$ of $\C$ such that
$p_Y = p_Z\comp f$. The \emph{coslice category} $X/\C$ is defined dually.

\medskip\noindent\emph{Extensive categories.} A category $\C$ is
\emph{(finitely) extensive}~\cite{cbl93} if it has finite coproducts
and for every finite family of objects $X_i$ ($i\in I$) the functor
$E\colon \prod_{i\in I} \C/X_i \to \C/\coprod_{i\in I} X_i$ sending
$(p_i\colon Y_i\to X_i)_{i\in I}$ to
$\coprod_{i\in I} p_i\colon \coprod_i Y_i \to \coprod_i X_i$ is an
equivalence of categories. A \emph{countably extensive} category
satisfies the analogous property for countable coproducts. In
extensive categories, coproduct injections $\inl,\inr$ are monic, and
coproducts of monomorphisms are monic; generally, coproducts behave
like disjoint unions of sets.

\begin{expl}\label{ex:categories}
Examples of countably extensive categories include the category $\Set$ of sets and functions; the category $\Set^{\C}$ of presheaves on a small category $\C$ and natural transformations; and the categories of posets and monotone maps, nominal sets and equivariant maps, and metric spaces and non-expansive maps, respectively.
\end{expl}

\noindent\emph{Algebras.}
Given an endofunctor $F$ on a category $\C$, an \emph{$F$-algebra} is
a pair $(A,a)$ consisting of an object~$A$ and a morphism
$a\colon FA\to A$ (the \emph{structure}). A \emph{morphism} from
$(A,a)$ to an $F$-algebra $(B,b)$ is a morphism $h\colon A\to B$
of~$\C$ such that $h\comp a = b\comp Fh$. Algebras for~$F$ and their
morphisms form a category $\Alg(F)$, and an \emph{initial} $F$-algebra
is simply an initial object in that category.  We denote the initial
$F$-algebra by $\mu F$ if it exists, and its structure by
$\ini\colon F(\mu F) \to \mu F$. Initial algebras admit the
\emph{structural induction principle}: the algebra $\mu F$ has no
proper subalgebras, that is, every $F$-algebra monomorphism
$m\colon (A,a)\monoto (\mu F,\ini)$ is an isomorphism.

More generally, a \emph{free $F$-algebra} on an object $X$ of $\C$ is an
$F$-algebra $(F^{\star}X,\iota_X)$ together with a morphism
$\eta_X\c X\to F^{\star}X$ of~$\C$ such that for every algebra $(A,a)$
and every $h\colon X\to A$ in $\C$, there exists a unique
$F$-algebra morphism $h^\klstar\colon (F^{\star}X,\iota_X)\to (A,a)$
such that $h=h^\klstar\comp \eta_X$. If free algebras
exist on every object, their formation induces a monad
$F^{\star}\colon \C\to \C$, the \emph{free monad} generated by~$F$. Every $F$-algebra $(A,a)$ yields an
Eilenberg-Moore algebra $\wh{a} \colon F^{\star} A \to A$ as the free
extension of $\id_A\c A\to A$.

The most familiar example of functor algebras are algebras for a
signature. Given a set $S$ of \emph{sorts}, an \emph{$S$-sorted algebraic signature} consists of a set~$\Sigma$
of operation symbols together with a map $\ar\colon \Sigma\to S^{\star}\times S$
associating to every $\f\in \Sigma$ its \emph{arity}. We write $\f\colon s_1\times\cdots\times s_n\to s$ if $\ar(\f)=(s_1,\ldots,s_n,s)$, and $\f\colon s$ if $n=0$ (in which case $\f$ is called a \emph{constant}). Every
signature~$\Sigma$ induces a polynomial functor on the category $\Set^S$ of $S$-sorted sets, denoted by the
same letter $\Sigma$, given by $(\Sigma X)_s = \coprod_{\f\colon s_1\cdots s_n\to s} \prod_{i=1}^n X_{s_i}$ for $X\in \Set^S$ and $s\in S$. An algebra for the functor $\Sigma$ is
precisely an algebra for the signature $\Sigma$, viz.~an $S$-sorted set $A=(A_s)_{s\in S}$ in $\Set^S$ equipped with an operation $\f^A\colon \prod_{i=1}^n A_{s_i}\to A_s$ for every $\f\colon s_1\cdots s_n\to s$ in $\Sigma$. Morphisms of $\Sigma$-algebras are $S$-sorted
maps respecting the algebraic structure. Given an $S$-sorted set $X$ of
variables, the free algebra $\Sigmas X$ is the $\Sigma$-algebra of
$\Sigma$-terms with variables from~$X$; more precisely, $(\Sigmas X)_s$ is inductively defined by $X_s\seq (\Sigmas X)_s$ and $\f(t_1,\ldots,t_n)\in (\Sigmas X)_s$ for all $\f\colon s_1\cdots s_n\to s$ and $t_i\in (\Sigmas X)_{s_i}$. In particular, the free
algebra on the empty set is the initial algebra $\mu \Sigma$; it is
formed by all \emph{closed terms} of the signature. For every
$\Sigma$-algebra $(A,a)$, the induced Eilenberg-Moore algebra
$\wh{a}\colon \Sigmas A \to A$ is given by the map that evaluates terms
over~$A$ in the algebra~$A$.

\medskip\noindent\emph{Coalgebras.}
Dual to the notion of algebra, a \emph{coalgebra} for an
endofunctor $F$ on $\C$ is a pair $(C,c)$ of an object $C$ (the
\emph{state space}) and a morphism $c\colon C\to FC$ (the
\emph{structure}).

\subsection{Higher-Order Abstract GSOS}

We summarize the framework of higher-order abstract
GSOS~\cite{gmstu23}, which extends the original, first-order
counterpart introduced by Turi and
Plotkin~\cite{DBLP:conf/lics/TuriP97}. In higher-order abstract GSOS,
the operational semantics of a higher-order language is presented in the form of a
\emph{higher-order GSOS law}, a categorical structure parametric in
\begin{enumerate}
\item a category $\C$ with finite products and coproducts;
\item an object $\Pt \in \C$ of \emph{variables};
\item an endofunctor $\Sigma \c \C \to \C$, where $\Sigma = \Pt + \Sigma'$
  for some endofunctor~$\Sigma'$, such that free $\Sigma$-algebras exist on every object (hence $\Sigma$ generates a free monad $\Sigmas$);
\item a mixed-variance bifunctor $B\colon
  \C^\opp\times \C\to \C$.
\end{enumerate}
The functors $\Sigma$ and $B$ represent the \emph{syntax} and the \emph{behaviour} of a higher-order language. The motivation behind $B$ having two arguments is
that transitions have labels, which behave contravariantly, and
poststates, which behave covariantly; in term models the
objects of labels and states will coincide. The presence of
an object $V$ of variables is a technical requirement for the
modelling of languages with variable
binding~\cite{DBLP:conf/lics/FiorePT99,DBLP:conf/lics/FioreT01}, such
as the $\lambda$-calculus. An object of
$V/\C$, the coslice category of \emph{$\Pt$-pointed objects}, is
thought of as a set $X$ of programs with an embedding $p_X\c V\to X$
of the variables. In point-free calculi, e.g.\ \stsc
as introduced below, we put $V=0$ (the initial object).
\begin{defn}\label{def:ho-gsos-law}
  A \emph{($\Pt$-pointed) higher-order GSOS law} of $\Sigma$ over $B$
  is a family of morphisms~\eqref{eq:ho-gsos-law} that is dinatural in $(X,p_X) \in \Pt/\C$ and natural in $Y\in \C$:
  \begin{align}\label{eq:ho-gsos-law}
    \rho_{(X,p_X),Y} \c \Sigma (X \times B(X,Y))\to B(X, \Sigma^\star (X+Y))
  \end{align}
\end{defn}

\begin{notn}\label{not:rho}
\begin{enumerate}[label=(\roman*)]
\item In~\eqref{eq:ho-gsos-law}, we have implicitly applied the forgetful
  functor $\Pt/\C \to C$ at $(X,p_X)$. In addition, we write $\rho_{X,Y}$ for
  $\rho_{(X,p_X),Y}$ if the point $p_X$ is clear from the context.
\item For $(A,a)\in \Alg(\Sigma)$, we view $A$ as {$\Pt$-pointed} by
  $p_A = \bigl(\Pt\xra{\inl} \Pt+\Sigma' A = \Sigma  A \xra{a} A\bigr)$.
\end{enumerate}
\end{notn}
\noindent Informally,~$\rho_{X,Y}$ assigns to an operation of
the language with formal arguments from~$X$ having specified
next-step behaviours in~$B(X,Y)$ (i.e.\ with labels in~$X$ and formal
poststates in~$Y$) a next-step behaviour in
$B(X, \Sigma^\star (X+Y))$, i.e.\ with the same labels, and with
poststates being program terms mentioning variables from both~$X$
and~$Y$. Every higher-order GSOS law~\eqref{eq:ho-gsos-law} induces a canonical \emph{operational
model} $\gamma
\c \mS \to B(\mS,\mS)$, viz.\ a $B(\mS,-)$-coalgebra on the initial algebra $\mS$, defined by \emph{primitive
  recursion}~\cite[Prop. 2.4.7]{DBLP:books/cu/J2016} as the unique morphism~$\gamma$ making the following diagram commute:
\begin{equation*}
  \begin{tikzcd}[column sep=9ex, row sep=normal]
    \Sigma(\mS)
    \dar[dashed, "\Sigma \brks{\id,\,\gamma}"']
    \ar[rr,"\iota"]
    &[-2ex] &[2ex]    \mS
    \dar[dashed, "\gamma"]
    \\
    \Sigma (\mS\times {B(\mS,\mS)})
    \rar["{\rho_{\mS,\mS}}"]
    &
    B(\mS,\Sigma^\star(\mS+\mS))
    \rar["{B(\mS,\hat\iota\comp\Sigmas\nabla)}"] &
    B(\mS,\mS)
  \end{tikzcd}
\end{equation*}
Here, we regard the initial algebra $(\mS,\ini)$ as $V$-pointed as explained in \Cref{not:rho}.

\paragraph{Simply Typed SKI Calculus.}
We illustrate the ideas behind higher-order abstract GSOS with 
an extended version of the simply typed SKI calculus~\cite{hindley2008lambda}, a typed combinatory logic 
which we call \stsc. It is expressively equivalent to the simply typed $\lambda$-calculus but does not use variables; hence it avoids the complexities associated to variable binding and
substitution in the $\lambda$-calculus, which we treat in \Cref{sec:lambda-laws}. The set~$\Ty$ of \emph{types} is inductively defined as
\begin{equation}\label{eq:type-grammar}
  \Ty \Coloneqq \utype \mid \arty{\Ty}{\Ty}.
\end{equation}
The constructor $\arty{}{}$ is right-associative, i.e.\
$\arty{\tau_{1}}{\arty{\tau_{2}}{\tau_{3}}}$ is parsed as
$\arty{\tau_{1}}{({\arty{\tau_{2}}{\tau_{3}}})}$. The terms of $\stsc$ are formed over the $\Ty$-sorted signature $\Sigma$ whose operation symbols are listed below, with
$\tau,\tau_1,\tau_2,\tau_3$ ranging over all types in $\Ty$:
\begin{align*}
& \mathsf{e} \c \utype &&  \mathsf{app}_{\tau_1,\tau_2}\colon (\arty{\tau_{1}}{\tau_{2}})\times \tau_1\to \tau_2  \\
& S_{\tau_{1},\tau_{2},\tau_{3}} \c \arty{(\arty{\tau_{1}}{\arty{\tau_{2}}{\tau_{3}}})}{\arty{(\arty{\tau_{1}}{\tau_{2}})}{\arty{\tau_{1}}{\tau_{3}}}} && K_{\tau_{1},\tau_{2}} \c \arty{\tau_{1}}{\arty{\tau_{2}}{\tau_{1}}} \\
& S'_{\tau_1,\tau_2,\tau_3}\c (\arty{\tau_{1}}{\arty{\tau_{2}}{\tau_{3}}})\to (\arty{{(\arty{\tau_{1}}{\tau_{2})}}}{\arty{\tau_{1}}{\tau_{3}}})  && K'_{\tau_1,\tau_2}\colon \tau_1\to (\arty{\tau_{2}}{\tau_{1}}) \\
& S''_{\tau_1,\tau_2,\tau_3}\c (\arty{\tau_{1}}{\arty{\tau_{2}}{\tau_{3}}})\times (\arty{\tau_{1}}{\tau_{2}})\to (\arty{\tau_{1}}{\tau_{3}}) &&  I_{\tau} \c \arty{\tau}{\tau}
\end{align*}
We let $\Tr=\mS$ denote the $\Ty$-sorted set of closed $\Sigma$-terms. Informally, $\mathsf{app}$ represents function application (we write $s\, t$ for $\mathsf{app}(s,t)$), 
and the constants $I_\tau$, $K_{\tau_1,\tau_2}$, $S_{\tau_1,\tau_2,\tau_3}$ represent the $\lambda$-terms $\lambda t.\,t$, $\lambda t.\,\lambda s.\, t$ and $\lambda t.\,\lambda s.\,\lambda u.\, (s\, u)\, (t\, u)$, respectively.
\begin{figure*}[t]
  \begin{gather*}
    \inference{}{\mathsf{e} \xto{\checkmark}}
    \qquad
    \inference{}{S_{\tau_{1},\tau_{2},\tau_{3}}\xto{t}S'_{\tau_{1},\tau_{2},\tau_{3}}(t)}
    \qquad
    \inference{}{S'_{\tau_{1},\tau_{2},\tau_{3}}(p)\xto{t}S''_{\tau_{1},\tau_{2},\tau_{3}}(p,t)}     \\[2ex]
    \inference{}{S''_{\tau_{1},\tau_{2},\tau_{3}}(p,q)\xto{t}(p\app{\tau_{1}}{\arty{\tau_{2}}{\tau_{3}}} t)\app{\tau_{2}}{\tau_{3}} (q\app{\tau_{1}}{\tau_{2}} t)}
    \qquad
    \inference{}{K_{\tau_{1},\tau_{2}}\xto{t}K'_{\tau_{1},\tau_{2}}(t)}
    \qquad
    \inference{}{K'_{\tau_{1},\tau_{2}}(p)\xto{t}p}
    \\[1ex]
    \qquad
    \inference{}{I_{\tau}\xto{t}t}
    \qquad
    \inference{p\to p'}{p \app{\tau_{1}}{\tau_{2}} q\to p' \app{\tau_{1}}{\tau_{2}} q}
    \qquad
    \inference{p\xto{q} p'}{p \app{\tau_{1}}{\tau_{2}} q\to p'}
  \end{gather*}
  \caption{(Call-by-name) operational semantics of \stsc.}
  \label{fig:skirules}
\end{figure*}
The operational semantics of \stsc involves three kinds of transitions:
$\xto{\checkmark}$, $\xto{t}$ and $\xto{}$. It is presented in
\Cref{fig:skirules}; here, $p,p',q,t$ range over terms in $\Tr$ of appropriate type. Intuitively, $s\xto{\checkmark}$ identifies~$s$ 
as an explicitly irreducible term; $s\xto{t} r$ states that $s$ acts as a function mapping $t$ to~$r$; and $s\to t$ indicates that~$s$ reduces to
$t$. Our use of labelled transitions in higher-order operational semantics is inspired by work on bisimilarity in the $\lambda$-calculus~\cite{Abramsky:lazylambda,DBLP:journals/tcs/Gordon99}.
The use of $K'$, $S'$ and $S''$ does not
impact the behaviour of programs, except for possibly adding more unlabelled
transitions. For example, the standard rule r $S t{} s{} e\to
(te)(se)$ for the $S$-combinator is rendered as the chain of transitions $ S t{} s{} e\to S'(t)\, s e\to S''(t,s)\, e\to (te)(se).$
The transition system for \stsc is deterministic: for every term $s$, either $s \xto{\checkmark}$, or there exists a unique $t$ such that $s\to t$, or for each appropriately typed $t$ there exists a unique $s_t$ such that $s\xto{t} s_t$. Therefore, given
\begin{align}
  \label{eq:beh}
  B_\tau(X,Y) &= Y_\tau + D_\tau(X,Y),\\
  \label{eq:val} \qquad
  D_{\utype}(X,Y) &= \term = \{\ast\}\qquad \text{and} \qquad
                     D_{\arty{\tau_{1}}{\tau_{2}}}(X,Y) = Y_{\tau_{2}}^{X_{\tau_{1}}},
\end{align}
the operational rules in \Cref{fig:skirules} determine a $\Set^\Ty$-morphism 
$\gamma \c \Tr \to B(\Tr,\Tr)$:
\begin{flalign}
\hspace{4em}&& \gamma_\utype(s)=&\;\inr(*) && \text{if $s \xto{\checkmark}$ where $s\c \utype$,} \notag\\
&&\gamma_\tau(s)=&\;\inl(t) && \text{if $s \xto{} t$ where $s,t\c\tau$,}\label{exa:gamma}\\
&&\gamma_{\arty{\tau_{1}}{\tau_{2}}}(s) =&\; \inr(\lambda t.\,s_t) &&\text{if $s \xto{t} s_t$ for $s \c \arty{\tau_{1}}{\tau_{2}}$ and $t\c \tau_1$.}&\hspace{3em}\notag
\end{flalign}
\begin{prop}
  \label{prop:bif}
  The object assignments \eqref{eq:beh} and \eqref{eq:val} extend to mixed-variance bifunctors
  \begin{equation}
    \label{eq:bif}
    B,D \c (\Set^{\Ty})^{\opp} \times \Set^{\Ty} \to \Set^{\Ty}.
  \end{equation}
\end{prop}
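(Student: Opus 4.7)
The plan is to define the action on morphisms for $B$ and $D$ componentwise, indexed by type, by structural induction on $\tau\in\Ty$, and then to reduce functoriality to the functoriality of the standard bifunctors on $\Set$.

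First, I would recall that $\Set$ is cartesian closed, so the exponential $\mathrm{exp}\c \Set^\opp\times \Set\to \Set$, $(A,B)\mapsto B^A$, is a bifunctor: on morphisms $f\c A'\to A$ and $g\c B\to B'$ it sends $h\in B^A$ to $g\comp h\comp f\in (B')^{A'}$. Similarly, the coproduct $+\c \Set\times \Set\to \Set$ is a bifunctor (in fact covariant in both arguments), and for each $\tau\in \Ty$ the projection $\pi_\tau\c \Set^\Ty\to \Set$, $X\mapsto X_\tau$, is a functor.

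Next, I would define $D_\tau$ and $B_\tau$ by induction on $\tau$ as bifunctors $(\Set^\Ty)^\opp\times\Set^\Ty\to \Set$: for $\tau=\utype$ let $D_\utype$ be the constant bifunctor at $\term$; for $\tau=\arty{\tau_1}{\tau_2}$ set
\begin{equation*}
  D_{\arty{\tau_1}{\tau_2}} \;=\; \mathrm{exp}\comp\bigl(\pi_{\tau_1}^\opp\times \pi_{\tau_2}\bigr),
\end{equation*}
which is a composition of bifunctors and hence a bifunctor. Concretely, on $f\c X'\to X$ and $g\c Y\to Y'$ in $\Set^\Ty$, the morphism $D_{\arty{\tau_1}{\tau_2}}(f,g)\c Y_{\tau_2}^{X_{\tau_1}}\to (Y')_{\tau_2}^{(X')_{\tau_1}}$ is $h\mapsto g_{\tau_2}\comp h\comp f_{\tau_1}$. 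Then define $B_\tau(f,g) = g_\tau + D_\tau(f,g)$, which is bifunctorial as a pointwise coproduct of bifunctors (the summand $g_\tau$ being covariant in $Y$ and constant in $X$). Collecting the components yields the desired bifunctors $D,B\c (\Set^\Ty)^\opp\times \Set^\Ty\to \Set^\Ty$.

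Finally, I would verify preservation of identities and composition. For identities, $D_\tau(\id,\id) = \id$ holds in each case (for $\utype$ trivially; for $\arty{\tau_1}{\tau_2}$ because $h\mapsto \id\comp h\comp \id = h$), whence $B_\tau(\id,\id)=\id$. For composition with $f\c X''\to X'$, $f'\c X'\to X$, $g\c Y\to Y'$, $g'\c Y'\to Y''$, one checks $D_\tau(f'\comp f,g'\comp g) = D_\tau(f,g')\comp D_\tau(f',g)$, which in the function-space case amounts to associativity of composition $(g'_{\tau_2}\comp g_{\tau_2})\comp h\comp(f'_{\tau_1}\comp f_{\tau_1})$; the additive summand behaves functorially by functoriality of $+$ and $\pi_\tau$. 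There is no real obstacle here: the only point requiring attention is the bookkeeping of variances, which is taken care of by presenting $D_{\arty{\tau_1}{\tau_2}}$ as a composition of named bifunctors rather than by a bare elementwise argument.
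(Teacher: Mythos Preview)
Your proposal is correct and takes essentially the same approach as the paper: define $D_\tau$ componentwise via the hom (exponential) bifunctor of $\Set$ composed with projections, set $B_\tau(f,g)=g_\tau+D_\tau(f,g)$, and reduce functoriality to that of $+$, the constant functor, and $\Set(-,-)$. One cosmetic point: the definition of $D_{\arty{\tau_1}{\tau_2}}$ does not refer to $D_{\tau_1}$ or $D_{\tau_2}$, so this is a case analysis on the shape of $\tau$ rather than a structural induction.
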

The semantics of \stsc{} in \Cref{fig:skirules} corresponds to a
($0$-pointed) higher-order GSOS law of the syntax functor
$\Sigma$ over the behaviour bifunctor $B$, i.e.\ to a family of maps~\eqref{eq:ho-gsos-law}
dinatural in $X\in \Set^\Ty$ and natural in $Y\in \Set^\Ty$. The maps $\rho_{X,Y}$ are cotuples defined by distinguishing cases on the constructors $\mathsf{e}, S, S', S'', K, K', I,
\mathsf{app}$ of \stsc{}, and each component of $\rho$ is determined by the 
rules that apply to the corresponding constructor. We provide a few illustrative cases; see~\Cref{app:omitted-proofs}, p.~\pageref{page:rho-stsc}, for a complete definition.
\begin{flalign}
  \rho_{X,Y} \c \Sigma(X \times B(X,Y)) 												& \to B(X, \Sigma^\star (X+Y)) && \\[1ex]
  \hspace{3em} \rho_{X,Y}~   (S''_{\tau_{1},\tau_{2},\tau_{3}}((p,f),(q,g))) & =   \lambda t.\,(p \app{}{} t)\app{}{}(q \app{}{} t)&&\\
  \rho_{X,Y}~   ((p,f)\app{}{}(q,g)) 														& =   f(q)&&\hspace{-2em}\text{if $f \c Y_{\tau_{2}}^{X_{\tau_{1}}}$}\label{eq:def-rho-cbn} \\ 
  \rho_{X,Y}~   ((p,f)\app{}{}(q,g)) 														& =   f q &&\hspace{-2em}\text{if $f \c Y_{\arty{\tau_{1}}{\tau_{2}}}$}&&\hspace{3em}
\end{flalign}
The
operational model $\gamma \c \Tr \to B(\Tr,\Tr)$ of $\rho$ coincides with the coalgebra  \eqref{exa:gamma}.

\begin{rem}\label{rem:cbv-nondet}
The rules for application in \Cref{fig:skirules} implement the call-by-name
evaluation strategy. Other strategies can be captured by varying the rules and
consequently the corresponding higher-order GSOS law. For the call-by-value
strategy, one replaces the last rule with	 \eqref{eq:rule-mod-1} and
\eqref{eq:rule-mod-2} below and modifies clause \eqref{eq:def-rho-cbn} in the
definition of $\rho$ accordingly. One can also model the traditional view of 
combinatory logic as a rewrite system~\cite{hindley2008lambda} where any redex
can be reduced, no matter how deeply. This amounts to specifying a maximally
nondeterministic strategy by adding the rule \eqref{eq:rule-mod-3} below to
\Cref{fig:skirules}.  Notably, this makes the operational model
nondeterministic, and hence the corresponding  higher-order GSOS law relies on
the behaviour functor $\Pow B$ instead of the original $B$ given
by~\eqref{eq:beh}, where $\Pow$ is the powerset functor.\\
\begin{minipage}{.33\textwidth}
\begin{equation}\label{eq:rule-mod-1}
\inference{p\xto{t} p'\quad q\to q'}{p \app{\tau_{1}}{\tau_{2}} q\to p \app{}{} q'}
\end{equation}
\end{minipage}
~
\begin{minipage}{.35\textwidth}
\begin{equation}\label{eq:rule-mod-2}
\inference{p\xto{q} p'~\quad q\xto{t}q'}{p \app{\tau_{1}}{\tau_{2}} q\to p'}
\end{equation}
\end{minipage}
~
\begin{minipage}{.27\textwidth}
\begin{equation}\label{eq:rule-mod-3}
\inference{q\to q'}{p \app{\tau_{1}}{\tau_{2}} q\to p \app{}{} q'}
\end{equation}
\end{minipage}
\end{rem}

\section{Coalgebraic Logical Predicates}
\label{sec:logpred}

\subsection{Predicate Lifting}
\label{subsec:predlift}

Predicates and relations
on coalgebras are often most conveniently modelled through \emph{predicate} and \emph{relation
  liftings}~\cite{DBLP:journals/jlp/KurzV16} of the underlying type functors. In the following we introduce a framework of
predicate liftings for mixed-variance bifunctors, adapting existing notions of relation
lifting~\cite{UrbatTsampasEtAl23}, which enables reasoning about
``higher-order'' coalgebras, such as operational models of higher-order GSOS laws. The following global
assumptions ensure that predicates and relations behave in an expected manner:
\begin{assumptions}
  \label{assumptions}
  From now on, we fix $\C$ to be a complete, well-powered and
  extensive category in which, additionally, strong epimorphisms are stable
  under pullbacks.
\end{assumptions}
\noindent The categories of \Cref{ex:categories} satisfy these assumptions. \noindent Since $\C$ is complete and well-powered, every morphism $f$ admits a
(strong epi, mono)-factorization $f=m\comp e$~\cite[Prop. 4.4.3]{borceux94}; we call $m$ the \emph{image} of $f$. The category $\Pred{\C}$ of
\emph{predicates} over $\C$ has as objects all monics (predicates) $P \pred{} X$ from
$\C$, and as morphisms $(p \c P \pred{} X) \to (q \c Q \pred{} Y)$
all pairs $(f \c X \to Y, f|_P \c P \to Q)$ such that
$q\comp f|_P = f\comp p$ (so $f|_P$ is uniquely determined by~$f$).
(Co)products in $\Pred{\C}$ are lifted from
$\C$. The \emph{fiber
  $\Pred[X]{\C}$} is the subcategory of all monics $P\pred{} X$
for fixed~$X$ and morphisms $(\id_X,-)$. It is is preordered by $p\leq q$ if~$p$ factors
through $q$; identifying $p,q$ if $p\leq q$ and $q\leq p$, we regard $\Pred[X]{\C}$ as a poset. Since $\C$ is complete and well-powered, $\Pred[X]{\C}$ is a complete lattice; we write 
$\bigwedge$ for meets (i.e.\ pullbacks) and $\bigvee$ for joins. We will also write $\iimg{f}{P}$ 
for the \emph{inverse image} of a predicate $p\c P\pred{} X$ under $f\c Y\to X$, i.e.\ 
the pullback of $p$ along $f$. The \emph{direct image} $\fimg{f}{Q}$ of $q\c Q\pred{} Y$ 
under $f\c Y\to X$ is the image of the composite $f\comp p\c Q\to X$.
This yields an adjunction between $\Pred[X]{\C}$ and $\Pred[Y]{\C}$, i.e.\ $Q\leq\iimg{f}{P}$
iff $\fimg{f}{Q}\leq P$.

A \emph{predicate lifting} of an endofunctor ${\Sigma \c \C \to \C}$  is an endofunctor   $\ol{\Sigma}\c
\Pred{\C}\to \Pred{\C}$ making the left-hand diagram below commute; similarly, a \emph{predicate lifting} of a 
mixed-variance bifunctor
${B\c \C^\opp\times \C\to \C}$ is a bifunctor $\ol{B} \c
\Pred{\C}^\opp\times \Pred{\C} \to \Pred{\C}$ making the right-hand diagram below commute. Here $\under{-}$ is the forgetful functor sending $p\c P\pred{} X$ to~$X$.
\begin{equation}
  \label{eq:liftingd}
  \begin{tikzcd}
    \Pred{\C} \ar{d}[swap]{\under{-}}  \ar{r}{\ol{\Sigma}} & \Pred{\C} \ar{d}{\under{-}}  \\
    \C \ar{r}{\Sigma}  & \C
  \end{tikzcd}
  \qquad
  \begin{tikzcd}
    \Pred{\C}^\opp\times \Pred{\C}
    \ar{d}[swap]{\under{-}^\opp\times \under{-}} \ar{r}{\ol{B}} & \Pred{\C} \ar{d}{\under{-}}  \\
    \C^\opp \times \C \ar{r}{B}  & \C
  \end{tikzcd}
\end{equation}
		We denote by $\overline{\Sigma}$ both the action on predicates
		and on the corresponding objects in $\C$, i.e.\ $\overline{\Sigma}(p\c P\pred{}
		X)\c \overline{\Sigma} P\pred{} \Sigma X$.
Every endofunctor $\Sigma$ on $\C$ admits a canonical predicate lifting $\ol{\Sigma}$ mapping $p \c P \pred{} X$ to the image $\overline{\Sigma}p \c {\overline{\Sigma}P \pred{} \Sigma X}$ of $\Sigma p\c \Sigma P\to \Sigma X$~\cite{DBLP:books/cu/J2016}. Note that $\ol{\Sigma}p=\Sigma p$ if $\Sigma$ preserves monos. In the remainder we will only consider canonical liftings of endofunctors.

\begin{prop}\label{prop:free-monad-lift}
If $\Sigma$ preserves strong epis, then
\( \ol{\Sigma}^\star = \ol{\Sigmas}. \) 
\end{prop}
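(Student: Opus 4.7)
The plan is to show that for every predicate $p\c P\pred{}X$ in $\mathbf{Pred}(\C)$, the predicate $\overline{\Sigmas}p\c \overline{\Sigmas}P\pred{}\Sigmas X$, equipped with suitable structure inherited from the free monad $\Sigmas$, is precisely the free $\overline{\Sigma}$-algebra on $p$. Since free $\overline{\Sigma}$-algebras are determined up to iso, this identifies $\overline{\Sigma}^\star p$ with $\overline{\Sigmas}p$ naturally and, with a little extra work, as monads.

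Because the canonical lifting is defined via image factorization, we have a strong epi $e_p\c \Sigmas P\epito \overline{\Sigmas}P$ with $\Sigmas p = \overline{\Sigmas}p\comp e_p$. First I would endow $\overline{\Sigmas}p$ with an $\overline{\Sigma}$-algebra structure $\bar{\iota}_p\c \overline{\Sigma}(\overline{\Sigmas}p)\to \overline{\Sigmas}p$ together with a unit $\bar{\eta}_p\c p\to \overline{\Sigmas}p$. The unit comes for free from naturality: $\eta_X\comp p = \Sigmas p\comp \eta_P = \overline{\Sigmas}p\comp (e_p\comp \eta_P)$, so $\eta_X\comp p$ factors through $\overline{\Sigmas}p$. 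For $\bar{\iota}_p$, applying $\Sigma$ to the factorization and using that $\Sigma$ preserves strong epis, we see that $\Sigma\Sigmas p$ factors as a strong epi followed by $\overline{\Sigma}(\overline{\Sigmas}p)$; combined with naturality $\iota_X\comp \Sigma\Sigmas p = \Sigmas p\comp \iota_P$ and the (strong epi, mono)-diagonal fill-in, we obtain $\bar{\iota}_p$ with $\overline{\Sigmas}p\comp \bar{\iota}_p = \iota_X\comp \overline{\Sigma}(\overline{\Sigmas}p)$.

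Next I would verify the universal property. Given an $\overline{\Sigma}$-algebra $(q\c Q\pred{}Y,\ a\c \overline{\Sigma}q\to q)$ in $\mathbf{Pred}(\C)$ and a morphism $h\c p\to q$, let $a_1\c \Sigma Y\to Y$ and $a_2\c \overline{\Sigma}Q\to Q$ be the components of $a$; by freeness of $\Sigmas X$ we get a unique $\Sigma$-algebra morphism $h^\klstar\c \Sigmas X\to Y$ extending $\under{h}$. The key observation is that $e_q\c \Sigma Q\epito \overline{\Sigma}Q$ composed with $a_2$ equips $Q$ with a $\Sigma$-algebra structure making $q$ a $\Sigma$-algebra homomorphism into $(Y,a_1)$; hence, by uniqueness of $h^\klstar$, we have $h^\klstar\comp \Sigmas p = q\comp (h|_P)^\klstar$, so $h^\klstar\comp \Sigmas p$ factors through $q$. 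Since $e_p$ is a strong epi and $q$ is a mono, the diagonal fill-in then shows $h^\klstar\comp \overline{\Sigmas}p$ factors through $q$, producing the required restriction $\overline{\Sigmas}p\to q$. Uniqueness and the fact that this morphism is indeed an $\overline{\Sigma}$-algebra morphism follow from uniqueness of $h^\klstar$ and from $q$ being monic. The monad structure (multiplication and the monad laws) is then inherited from that of $\Sigmas$ by the same factorization arguments.

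The main obstacle I expect is the construction of the $\Sigma$-algebra structure on $Q$ that turns $q$ into a $\Sigma$-algebra homomorphism, since this is what allows one to transport the universal property of $\Sigmas$ to $\overline{\Sigmas}$ and ultimately identify it with $\overline{\Sigma}^\star$. This step, and the existence of $\bar{\iota}_p$, are precisely where the assumption that $\Sigma$ preserves strong epimorphisms is used, via the compatibility of $\Sigma$ with image factorizations together with the pullback-stability of strong epis from Assumption~\ref{assumptions}.
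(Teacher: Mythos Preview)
Your approach is correct but takes a genuinely different route from the paper. The paper works indirectly via the arrow category $\mathbf{Ar}(\C)$: it first observes that $\Pred{\C}$ is a full reflective subcategory of $\mathbf{Ar}(\C)$ via the image-factorization reflector $(-)^\dag$, then proves the easy statement $(\ol{\Sigma}_{\mathbf{Ar}})^\star = \ol{\Sigmas}_{\mathbf{Ar}}$ (which holds componentwise since the arrow-category lifting is just $(X,P)\mapsto (\Sigma X,\Sigma P)$), and finally transfers this to $\Pred{\C}$ using a general theorem of Hermida--Jacobs on lifting adjunctions along functors that commute with the reflector. The key lemma there is that $(\ol{\Sigma}_{\mathbf{Ar}}(X,P))^\dag \cong \ol{\Sigma}_{\mathbf{Pred}}((X,P)^\dag)$ when $\Sigma$ preserves strong epis.

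Your argument instead works directly in $\Pred{\C}$: you construct the algebra structure and unit on $\ol{\Sigmas}p$ by diagonal fill-in and verify the universal property by hand, using that $q$ can be equipped with a $\Sigma$-algebra structure via the epi $\Sigma Q\epito \ol{\Sigma}Q$. This is more elementary and self-contained, avoiding the detour through $\mathbf{Ar}(\C)$ and the external citation. The paper's approach, on the other hand, sets up infrastructure (the $(-)^\dag$ reflector commuting with liftings) that it reuses later for lifting higher-order GSOS laws to predicates, so the indirection pays off downstream. One small correction: pullback-stability of strong epis is not actually needed in your argument; you only use that $\Sigma$ preserves strong epis and the diagonal fill-in property of the factorization system.
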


The canonical predicate liftings for mixed-variance bifunctors are slightly more complex. Similarly to the case of relation liftings of such functors developed in recent work~\cite{UrbatTsampasEtAl23}, their construction involves suitable pullbacks.
\begin{prop}
  \label{prop:liftingb}
Every bifunctor
  $B\c \C^\opp\times \C \to \C$ admits a canonical predicate lifting $\ol{B}\colon \Pred{\C}^\op \times \Pred{\C}\to \Pred{\C}$ sending $(p\colon P\monoto X,\, q\colon Q\monoto Y)$ to the predicate $m_{P,Q}\colon \ol{B}(P,Q)\monoto B(X,Y)$, the image of the morphism $r_{P,Q}$ given by the pullback below:
  \begin{equation}
    \label{eq:liftingpb}
    \begin{tikzcd}[column sep=tiny, row sep=.3ex]
      &[1em]& {T_{P,Q}}
    \pullbackangle{-45}
    &&&& {B(P,Q)} \\
    \\
    {\overline{B}(P,Q)} \\
    \\
    && {B(X,Y)} &&&& {B(P,Y)}
    \arrow["{e_{P,Q}}"', two heads, from=1-3, to=3-1]
    \arrow["{m_{P,Q}}"', tail, pos=.8, from=3-1, to=5-3]
    \arrow["s_{P,Q}", from=1-3, to=1-7]
    \arrow["r_{P,Q}"', from=1-3, to=5-3]
    \arrow["{B(\id,q)}", from=1-7, to=5-7]
    \arrow["{B(p,\id)}", from=5-3, to=5-7]
  \end{tikzcd}
\end{equation}
\end{prop}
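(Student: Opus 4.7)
The plan is to construct $\ol{B}$ in three stages: first on objects, then on morphisms, and finally verify functoriality; the commutativity of the right-hand diagram in~\eqref{eq:liftingd} will then be immediate, since the lifting of $(p,q)$ is by construction a mono with codomain $B(X,Y)$. On objects, the pullback $T_{P,Q}$ exists by completeness of $\C$, and the (strong epi, mono)-factorization $r_{P,Q} = m_{P,Q} \comp e_{P,Q}$ exists because $\C$ is complete and well-powered; so $m_{P,Q} \c \ol{B}(P,Q) \monoto B(X,Y)$ is well-defined up to isomorphism.

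For the action on morphisms, consider a morphism in $\Pred{\C}^\opp \times \Pred{\C}$ given by $\alpha = (f, f|_P) \c (p_2, X_2) \to (p_1, X_1)$ in $\Pred{\C}$ and $\beta = (g, g|_Q) \c (q_1, Y_1) \to (q_2, Y_2)$ in $\Pred{\C}$, subject to $p_1 \comp f|_P = f \comp p_2$ and $q_2 \comp g|_Q = g \comp q_1$. I first construct an auxiliary map $t \c T_{P_1, Q_1} \to T_{P_2, Q_2}$ via the universal property of the pullback defining $T_{P_2, Q_2}$, using the legs $B(f,g) \comp r_{P_1,Q_1}$ into $B(X_2, Y_2)$ and $B(f|_P, g|_Q) \comp s_{P_1, Q_1}$ into $B(P_2, Q_2)$. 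Their compatibility
\[
  B(p_2, \id_{Y_2}) \comp B(f,g) \comp r_{P_1,Q_1} = B(\id_{P_2}, q_2) \comp B(f|_P, g|_Q) \comp s_{P_1, Q_1}
\]
follows by bifunctoriality of $B$: both sides reduce, using $p_1 \comp f|_P = f \comp p_2$ and $q_2 \comp g|_Q = g \comp q_1$, to $B(f|_P, g) \comp B(p_1, \id) \comp r_{P_1, Q_1}$, which equals $B(f|_P, g) \comp B(\id, q_1) \comp s_{P_1, Q_1}$ by the defining square of $T_{P_1,Q_1}$.

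Given $t$, the outer rectangle with left edge $e_{P_1, Q_1}$, top edge $e_{P_2, Q_2} \comp t$, right edge $m_{P_2, Q_2}$ and bottom edge $B(f,g) \comp m_{P_1, Q_1}$ commutes by construction of $t$ and the definition of $r,m,e$; since $e_{P_1,Q_1}$ is a strong epi and $m_{P_2, Q_2}$ is a mono, diagonal fill-in yields a unique $\ol{B}(\alpha,\beta) \c \ol{B}(P_1, Q_1) \to \ol{B}(P_2, Q_2)$ with $m_{P_2, Q_2} \comp \ol{B}(\alpha, \beta) = B(f,g) \comp m_{P_1, Q_1}$, which immediately witnesses the commutativity of~\eqref{eq:liftingd}. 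Preservation of identities and composition then follows by uniqueness of both the pullback mediator and the diagonal fill-in: for instance, for composition, both $\ol{B}(\alpha_2, \beta_2) \comp \ol{B}(\alpha_1, \beta_1)$ and $\ol{B}(\alpha_2 \comp \alpha_1, \beta_2 \comp \beta_1)$ are seen to solve the same diagonal fill-in problem sitting over $B(f_1 \comp f_2, g_2 \comp g_1) \comp m$.

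The main obstacle is the compatibility computation above: it is a routine diagram chase, but the interplay of contravariance in the first argument and covariance in the second must be tracked carefully, and it is essential that the (strong epi, mono)-factorization system interacts well with the pullbacks defining $T_{P,Q}$ — which is guaranteed by our standing \Cref{assumptions} (stability of strong epis under pullback), and which will also underlie later functoriality arguments concerning $\ol{B}$.
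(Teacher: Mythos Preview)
Your proposal is correct and follows essentially the same route as the paper's proof: construct the mediating map between the pullbacks $T_{-,-}$ via the universal property (the compatibility check you give is exactly the one the paper does, via bifunctoriality of $B$ and the two predicate-morphism squares), then obtain the lift $\ol{B}(\alpha,\beta)$ by diagonal fill-in against $e_{P_1,Q_1}$ (strong epi) and $m_{P_2,Q_2}$ (mono), and conclude functoriality from the uniqueness of the fill-in.

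One small inaccuracy in your closing paragraph: the present construction does \emph{not} use stability of strong epis under pullback. All that is needed is completeness (for the pullbacks $T_{P,Q}$), the existence of (strong epi, mono)-factorizations (from completeness and well-poweredness), and the diagonal fill-in property, which is intrinsic to the notion of strong epimorphism. Pullback stability of strong epis is invoked elsewhere in the paper (e.g.\ in comparing $\ol{\Sigma}^\star$ with $\ol{\Sigma^\star}$ and in lifting higher-order GSOS laws), but not here.
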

\noindent If ${B}$ preserves monos in the covariant argument, then $B(\id,q)$ is monic and, since monos are pullback-stable, $\ol{B}(P,Q)$ is simply the predicate $r_{P,Q}\c T_{P,Q}\monoto B(X,Y)$.

\begin{expl}
  \label{ex:lifting}
   The bifunctors $B$ and $D$ of \eqref{eq:beh} and \eqref{eq:val} have canonical predicate liftings 
\begin{align}
     \ol{B}_\tau(P,Q) & = Q_\tau + \overline{D}_\tau(P,Q)\quad\text{where} \\
\overline{D}_{\utype}(P,Q)
      &=\term, \quad
      \overline{D}_{\arty{\tau_{1}}{\tau_{2}}}(P,Q)
      = \{f\c {X_{\tau_{1}}}\to {Y_{\tau_{2}}}\mid\forall x
         \in P_{\tau_{1}}.\, f(x)\in Q_{\tau_{2}} \}\subseteq Y_{\tau_{2}}^{X_{\tau_{1}}}. \label{eq:val-lift}
\end{align}
\end{expl}

\noindent Predicate liftings allow us to generalize \emph{coalgebraic
  invariants}~\cite[\textsection 6.2]{DBLP:books/cu/J2016}, viz.\
predicates on the state space of a coalgebra that are closed under the
coalgebra structure in a suitable sense, from endofunctors to
mixed-variance bifunctors:

\begin{notation}
For the remainder of the paper, we fix a mixed-variance bifunctor $B\colon \C^\op \times \C\to \C$ and a predicate lifting $\ol{B}\colon \Pred{\C}^\op\times \Pred{\C}\to \Pred{\C}$.
\end{notation}
\begin{defn}[Coalgebraic invariant]
  \label{def:coalginv}
 Let
  $c \c Y \to B(X,Y)$ be a $B(X,-)$-coalgebra. Given predicates $S \pred{} X$,
  $P \pred{} Y$, we say that $P$ is an \emph{$S$-relative
    ($\ol{B}$-)invariant (for~$c$)} if $P \leq \iimg{c}{\ol{B}(S,P)}$,
  equivalently, $\fimg{c}{P} \leq \ol{B}(S,P)$. (Mention of~$\ol{B}$
  is usually omitted.)
\end{defn}
\noindent Coalgebraic invariants will feature centrally in our notion
of logical predicate.

\subsection{Logical Predicates via Lifted Bifunctors}
\label{sec:logpred-lifted}

As a reasoning device, the method of logical predicates (which are
unary logical relations) typically applies to the
following scenario: One has an operational semantics on an
inductively defined set~$\mS$ of $\synt$-terms and a target predicate
$P \pred{} \mS$ to be proved, in the sense that one wants to show
$P = \mS$.  Logical predicates come into play when a direct proof
of $P = \mS$ by structural induction is not possible. The
classical example of such a predicate is \emph{strong
  normalization}~\cite{girard1989proofs,DBLP:journals/jsyml/Tait67}. The
idea is to strengthen~$P$, obtaining a predicate featuring a certain
``logical'' structure that does allow for a proof by
induction. We now develop this scenario in our abstract bifunctorial setting.

\begin{defn}[Coalgebraic logical predicate]
  \label{def:logpred} Suppose that
   $c \c X \to B(X,X)$ is a $B(X,-)$ coalgebra with state space $X$. A predicate
  $P \pred{} X$ is \emph{logical (for $c$)} if it is a $P$-relative
  $\ol{B}$-invariant (as per Def. \ref{def:coalginv}), i.e.\
  $P \leq \iimg{c}{\ol{B}(P,P)}$, equivalently,
  $\fimg{c}{P} \leq \ol{B}(P,P)$.
\end{defn}
\noindent In applications, $c$ is the operational model $\gamma\colon \mS\to B(\mS,\mS)$ of a higher-order language, or some coalgebra derived from it.
\noindent The self-referential nature of logical predicates (as
relative to themselves) is meant to cater for the property
that ``inputs in $P$ are mapped to outputs in $P$''. The following
example from \stsc illustrates this:
\begin{expl}
For $B$ given by \eqref{eq:beh} and its canonical lifting $\ol{B}$, a predicate $P \pred{} \Tr$ is logical for the operational model $\gamma \c \Tr \to B(\Tr,\Tr)$ from \eqref{exa:gamma} if $\fimg{\gamma}{P} \leq
  {\ol{B}(P,P)}$, that is,
  \begin{equation*}
    \begin{aligned}
        \fimg{(\gamma_{\utype})}{P_{\utype}} \leq&\;
       P_{\utype} + 1, \\
       \forall \tau_{1},\tau_{2}.\, \fimg{(\gamma_{\arty{\tau_{1}}{\tau_{2}}})}{P_{\arty{\tau_{1}}{\tau_{2}}}} \leq&\;
        P_{\arty{\tau_{1}}{\tau_{2}}} + \{f\c {\Tr_{\tau_{1}}}\to {\Tr_{\tau_{2}}}\mid\forall s \in P_{\tau_{1}}.\, f (s)\in P_{\tau_{2}} \},
    \end{aligned}
  \end{equation*}
  using the description of $\ol{B}$ from~\Cref{ex:lifting}. More explicitly, this means that
  \begin{itemize}
   \item if $s\in P_{\tau}$ and $s\to t$ then $t\in P_{\tau}$;
   \item if $s\in P_{\arty{\tau_{1}}{\tau_{2}}}$ and $s\xto{t} u$, then $t\in P_{\tau_1}$ implies $u\in P_{\tau_2}$.
  \end{itemize}
  As we can see in the second clause, function terms that satisfy $P$
  produce outputs that satisfy $P$ on all inputs that satisfy $P$. This is the key property of any logical predicate.
\end{expl}

Defining a suitable logical predicate (or relation) is the centerpiece of various
sophisticated arguments in higher-order settings. One standard application of logical predicates are proofs of strong normalization, which we now illustrate in the case of $\stsc$. For the operational model $\gamma\colon \Tr\to B(\Tr,\Tr)$ and terms $r,s,t$ of compatible type, put
\begin{itemize}
\item $s\To t$ if $s=s_0\to s_1\to \cdots \to s_n=t$ for some $n\geq 0$ and terms $s_0,\ldots,s_n$;
\item $s\xTo{t} r$ if $s\To s'$ and $s'\xto{t} r$ for some (unique) $s'$;
\item ${\Downarrow}(s)$ if $s\To s'$ and $\gamma(s')\in D(\Tr,\Tr)$ for some (unique) $s'$.
\end{itemize} 
Coalgebraically, this associates a \emph{weak operational model} $\wave{\gamma} \c \Tr \to
\Pow B(\Tr,\Tr)$ to $\gamma$, where $\wave{\gamma}(t)=\{t'\mid t\To t'\}\cup \{\gamma(t')\mid t\To t',\gamma(t')\in D(\Tr,\Tr)\}$.

\emph{Strong normalization} of $\stsc$ asserts that ${\Downarrow}=\Tr$: every term eventually reduces to a function or explicitly terminates. We now devise three different logical predicates on
$\Tr$, each of which provides a proof of that property. The idea is to refine the target predicate ${\Downarrow}\monoto \Tr$ to a logical predicate, for 
which showing that it is totally true will be facilitated by its invariance
w.r.t.\ a corresponding coalgebra structure. Our first example will be based on the following notion of refinement:
\begin{defn}[Locally maximal logical refinement]
  \label{def:refinement} Let $c\colon X\to B(X,X)$ be a coalgebra and let $P\monoto X$ be a predicate. A predicate $\logp{P}\monoto X$ is a \emph{locally maximal logical refinement of $P$} if (i) $\logp{P} \leq P$, (ii) $\logp{P}$ is logical (i.e.\ a $\logp{P}$-relative $\ol{B}$-invariant), and (iii) for every predicate $Q\leq P$ that is a $\logp{P}$-relative $\ol{B}$-invariant, one has $Q\leq \logp{P}$. 
\end{defn}

\begin{expl}%
  \label{ex:strongnorm1}
We define the predicate $\sqdown\monoto \Tr$, i.e.\ a family of subsets $\sqdown_{\tau}\seq \Tr_\tau$ ($\tau\in \Ty$), by induction on the structure of the type $\tau$: we put $\logp {\Downarrow}_{\utype}={\Downarrow}_\utype$, and we 
take $\logp {\Downarrow}_{\arty{\tau_{1}}{\tau_{2}}}$ to be the greatest subset of $\Tr_{\arty{\tau_{1}}{\tau_{2}}}$ satisfying
  \[
    \begin{aligned}
      \logp {\Downarrow}_{\arty{\tau_{1}}{\tau_{2}}}(t) \implies &\;
        {\Downarrow}_{\arty{\tau_{1}}{\tau_{2}}}(t) \land
        \begin{cases}
          \logp {\Downarrow}_{\arty{\tau_{1}}{\tau_{2}}}(t')
          & \text{if} \quad t \to t'  \\
          \logp {\Downarrow}_{\tau_{1}}(s) \implies \logp {\Downarrow}_{\tau_{2}}(t')
          & \text{if} \quad t \xto{s} t' 
        \end{cases}
    \end{aligned}
  \]
From this definition it is not difficult to verify by induction on the type that
\begin{equation}\label{eq:sqdown-greatest} \text{$\sqdown$ is a locally maximal logical refinement of $\Downarrow$.} \end{equation}
  Our goal is to
  show that $\sqdown$ is a subalgebra of $\mS$, equivalently $\ol{\Sigma} (\sqdown) \leq
  \iimg{\iota}{\sqdown}$, which then implies $\sqdown = \Tr$ and hence ${\Downarrow}=\Tr$ by structural
  induction.
  Taking the partition $\Sigma=\Xi+\Delta$ where $\Xi$ is the part of the 
  signature for application and $\Delta$ is the part of the signature for the remaining
  term constructors, we separately prove $\ol{\Xi} (\sqdown) \leq
  \iimg{\iota}{\sqdown}$ and $\ol{\Delta} (\sqdown) \leq
  \iimg{\iota}{\sqdown}$. It suffices to come up with 
  $\sqdown$-relative invariants $A,C\seq {\Downarrow}$ such that $\ol{\Xi}(\sqdown)\leq \iimg{\iota}{A}$
  and $\ol{\Delta}(\sqdown) \leq \iimg{\iota}{C}$. Then by \eqref{eq:sqdown-greatest} we can conclude $A,C\seq \sqdown$, so
  \[
  \ol{\Xi} (\sqdown) \leq \iimg{\iota}{A}\leq \iimg{\iota}{\sqdown}\qqand
  \ol{\Delta} (\sqdown) \leq \iimg{\iota}{C}\leq \iimg{\iota}{\sqdown}.
\]
  Let us record for further reference what it means for $Q\pred{} \Tr$ to be 
  a $\sqdown$-relative invariant contained in $\Downarrow$. Given $t\in Q_\tau$, 
  the following must hold:
\[ \text{(1) ${\Downarrow}_\tau\, t$, (2) if $t \to t'$ then $Q_{\tau}(t')$, (3) if $t \c \arty{\tau_{1}}{\tau_{2}}$ and $t \xto{s} t'$ and
    	$\sqdown_{\tau_1} s$ then $Q_{\tau_{2}}(t')$}.  \]
We first put
$A = \sqdown \lor \fimg{(\iota\comp\inl)}{\ol{\Xi}\sqdown}$, and prove (1)--(3) for $Q=A$. So let $t\in A_\tau$; we distinguish cases on the disjunction defining~$A$.
If $\sqdown_\tau\, t$, then (1)--(3) follow easily by
definition.  Otherwise, we have $t=p\app{}{} q$ such
that $\sqdown_{\arty{\tau_{1}}{\tau_{2}}}p$ and $\sqdown_{\tau_1}q$.
    \begin{enumerate}
    \item 
      By definition, $\sqdown_{\arty{\tau_{1}}{\tau_{2}}}p$ and
      $\sqdown_{\tau_1}q$ entail that $p\xTo{q} p'$ for a (unique)
      term~$p'$, and that $\sqdown_{\tau_2}p'$, hence
      $\Downarrow_{\tau_2} p'$. Since $p \app{}{}q\To p'$, it follows that 
      $\Downarrow_{\tau_2}p\app{}{} q$.
    \item We distinguish cases over the semantic rules for application:
      \begin{enumerate}
      \item $p \app{}{}q \to p' \app{}{}q$ where $p \to p'$. Then
        $\sqdown_{\arty{\tau_{1}}{\tau_{2}}}p'$, hence
        $A_{\tau_{2}}(p' \app{}{}q)$.
      \item $p \app{}{}q \to p'$ where $p \xto{q} p'$. Since
        $\sqdown_{\arty{\tau_{1}}{\tau_{2}}} p$ and
        $\sqdown_{\tau_1}q$, we have $\sqdown_{\tau_2}p'$, so
		$A_{\tau_2}(p')$.
      \end{enumerate}
    \item $t$ does not have labelled transitions, hence this case is void.
    \end{enumerate}
  
    Next, we show that
    $C = \sqdown \lor \fimg{(\iota\comp\inr)}{\ol{\Delta}(\sqdown)}$ is
    a $\sqdown$-relative invariant. We consider two representative
    cases; the remaining cases are handled similarly.
      \begin{itemize}
  \item Case $I_{\tau} \c \arty{\tau}{\tau}$. Since $I$ terminates immediately, property (1) holds by definition of $\Downarrow$ and (2) holds vacuously. For (3), if $I\xto{s}t'$ and $\sqdown_\tau s$, then $t'=s\in \sqdown_\tau\seq C_\tau$.
  \item Case
    $S''_{\tau_{1},\tau_{2},\tau_{3}}(t,s) \c
    \arty{\tau_{1}}{\tau_{3}}$ with
    $\sqdown_{\arty{\tau_{1}}{\arty{\tau_{2}}{\tau_{3}}}} t$ and
    $\sqdown_{\arty{\tau_{1}}{\tau_{2}}} s$. Again,~(1) holds because
    $S''(t,s)$ terminates immediately, and~(2) holds
    vacuously. For~(3), suppose that $\sqdown_{\tau_1}r$; we have to
    show $(t\app{}{}r)\app{}{}(s\app{}{}r)\in C_{\tau_{3}}$. This
    follows from the inequality
    $\ol{\Xi}(\sqdown) \leq \fimg{\iota}{\sqdown}$ shown above, because
    $\sqdown_{\arty{\tau_{2}}{\tau_{3}}}(t \app{}{}r)$,
    $\sqdown_{\tau_2}(s \app{}{}r)$ by definition of $\sqdown$.
  \end{itemize}
\end{expl}
Note that the definition of $\sqdown$ uses both induction (over the structure of types) and coinduction (by taking at every type the greatest predicate satisfying some property).
\begin{expl}\label{ex:logical-plotkin}
  \label{ex:log1}
  We give an alternative logical predicate defined purely inductively. It resembles
  Plotkin's original concept of logical relation~\cite{plotkin1973lambda}. We define $\DDar\monoto \Tr$ by
\begin{equation}
  \label{eq:log1}
  \begin{aligned}
     \DDar_{\utype}\, (t) \iff &\; {\Downarrow}_\utype\, (t), \\
     \DDar_{\arty{\tau_{1}}{\tau_{2}}}\, (t)
      \iff &\; {\Downarrow}_{\arty{\tau_{1}}{\tau_{2}}}\,t
      \land (\forall s\c\tau_1.\, t \xTo{s} t' \land \DDar_{\tau_{1}}\, (s) \implies
      \DDar_{\tau_{2}}\, (t')).
  \end{aligned}
\end{equation}
It is evidently logical for the restriction $\tsup{\gamma} \c \Tr \to \Pow D(\Tr,\Tr)$ 
of the weak operational model to labelled transitions, given by $\tsup{\gamma}(t) := \{\gamma(t')\}$ 
if $t\To t'$ and $\gamma(t')\in D(\Tr,\Tr)$, and $\tsup{\gamma}(t) :=\emptyset$ otherwise. A proof of strong normalization using $\DDar$ is given in \Cref{app:omitted-proofs}
\end{expl}

\begin{expl}\label{ex:logical-tait}
 A more popular
  (cf.~\cite{DBLP:journals/corr/abs-1907-11133,STATMAN198585}) and subtly
  different variant of $\DDar$ for proving strong normalization goes back to
Tait~\cite{DBLP:journals/jsyml/Tait67}. We define $\mathrm{SN}\monoto \Tr$ by
\begin{equation}
  \label{ex:sn}
  \begin{aligned}
    \mathrm{SN}_\utype\, (t) \iff
    &\;
      {\Downarrow}_\utype\, (t) \\
    \mathrm{SN}_{\arty{\tau_{1}}{\tau_{2}}}\,(t)\iff
    &\;
      {\Downarrow}_{\arty{\tau_{1}}{\tau_{2}}}\,(t)
      \land (\forall s \c \tau_{1}.\, \mathrm{SN}_{\tau_{1}}(s) \implies
      \mathrm{SN}_{\tau_{2}}(t \app{}{}s))
  \end{aligned}
\end{equation}
  Unlike $\DDar$, it is not immediate that $\mathrm{SN}$
  is logical for $\tsup{\gamma}$ (see \Cref{app:omitted-proofs}). For a proof of strong
  normalization based on $\mathrm{SN}$ in the context of the $\lambda$-calculus,
  see \cite[Sec.~2]{DBLP:journals/corr/abs-1907-11133}.
\end{expl}

While all three logical predicates $\sqdown$, $\DDar$, $\mathrm{SN}$ are eligible for proving strong normalization, with proofs of similar length and complexity, the predicate $\sqdown$  arguably has the most generic flavour, as it  depends neither on a system-specific notion of weak transition (which appears in the definition of $\DDar$) nor on the syntax of the language (such as the application operator appearing in the definition of $\mathrm{SN}$). Thus, our abstract categorical approach to logical predicates will be based on a generalization of $\sqdown$.

\subsection{Constructing Logical Predicates}\label{sec:constructing-logical-predicates}
Our abstract coalgebraic notion of logical predicate (\Cref{def:logpred}) is
parametric in the bifunctor $B$ and its lifting $\ol{B}$ and decoupled from any
specific syntax. Next, we develop a systematic construction that promotes a predicate~$P$
to a logical predicate, specifically to a locally maximal refinement of $P$, generalizing $\sqdown$ in \Cref{ex:strongnorm1}. 
The construction proceeds in two stages. First, we fix the contravariant argument of the lifted bifunctor $\ol{B}$ and construct a greatest coalgebraic invariant w.r.t.\ the resulting endofunctor~\cite[\textsection
6.3]{DBLP:books/cu/J2016}:
\begin{defn}[Relative henceforth]
  \label{def:henceforth}
  Let $c \c Y \to B(X,Y)$ and let $S \pred{} X$ be a predicate. The
  \emph{($S$-)relative henceforth modality} sends $P \pred{} Y$ to $\invp^{\ol{B},c} (S, P)
  \pred{} Y$, which is the supremum in $\Pred[Y]{\C}$ of all $S$-relative invariants contained in $P$:
  \begin{equation}\label{eq:hence}
    \invp^{\ol{B},c} (S, P) = \bigvee\{ Q \leq P \mid Q
    \text{ is an $S$-relative $\ol{B}$-invariant for $c$} \}.
  \end{equation}
We will omit the superscripts $\ol B,c$ when they are irrelevant or clear from the context.
\end{defn}
\begin{prop}
  \label{prop:invarprop}
  The predicate $\invp (S, P)$ is the greatest $S$-relative $\ol{B}$-invariant contained in $P$. Moreover, the map $(S,P)\mapsto \invp (S, P)$ is antitone in $S$ and monotone in $P$.
\end{prop}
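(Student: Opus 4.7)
\medskip\noindent\textbf{Proof plan.} The statement has three parts: that $\invp(S,P)$ is itself an $S$-relative $\ol{B}$-invariant contained in $P$ (hence a greatest such), antitonicity in $S$, and monotonicity in $P$. The supremum in~\eqref{eq:hence} is taken in the complete lattice $\Pred[Y]{\C}$, so the inequality $\invp(S,P)\leq P$ is immediate. The nontrivial point is closure of the invariant property under joins. My plan is therefore to first record two monotonicity lemmas and then combine them.

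\textbf{Step 1 (ingredients).} I would first verify that (a) the restriction of the bifunctor $\ol{B}$ to fibers is antitone in the first and monotone in the second argument, and that (b) the inverse image $\iimg{c}{-}\c \Pred[B(X,Y)]{\C}\to \Pred[Y]{\C}$ is monotone. Item~(a) follows from functoriality of $\ol{B}$: whenever $P\leq P'$ in the fiber $\Pred[Y]{\C}$, the witnessing comparison morphism $(\id_Y,f)\c P\to P'$ in $\Pred{\C}$ is sent by $\ol{B}(S,-)$ to a morphism $\ol{B}(S,P)\to \ol{B}(S,P')$ whose underlying $\C$-component is $B(\id,\id)=\id$, hence a fiber morphism witnessing $\ol{B}(S,P)\leq \ol{B}(S,P')$; antitonicity in the first argument is symmetric. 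Item~(b) is routine: $\iimg{c}{-}$ is the right adjoint to $\fimg{c}{-}$ under the Galois connection $\fimg{c}{Q}\leq P \Leftrightarrow Q\leq \iimg{c}{P}$ recorded earlier in the excerpt, hence preserves all meets and in particular is monotone.

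\textbf{Step 2 (closure under joins).} Write $\invp(S,P)=\bigvee_i Q_i$ with $Q_i\leq P$ each an $S$-relative invariant. For every index $i$, the inclusion $Q_i\leq \bigvee_i Q_i$ together with (a) yields $\ol{B}(S,Q_i)\leq \ol{B}(S,\bigvee_i Q_i)$, and applying (b) gives
\[ Q_i \leq \iimg{c}{\ol{B}(S,Q_i)} \leq \iimg{c}{\ol{B}(S,\textstyle\bigvee_j Q_j)}. \]
Taking the join over $i$ on the left produces $\bigvee_i Q_i\leq \iimg{c}{\ol{B}(S,\bigvee_i Q_i)}$, i.e.\ $\invp(S,P)$ is itself an $S$-relative $\ol{B}$-invariant contained in $P$. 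It is maximal by construction, so it is the greatest such.

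\textbf{Step 3 (monotonicity in the arguments).} Both remaining claims follow by comparing the defining families. For antitonicity in $S$, assume $S\leq S'$ and let $Q\leq P$ be an $S'$-relative invariant: by (a) applied in the contravariant slot, $\iimg{c}{\ol{B}(S',Q)}\leq \iimg{c}{\ol{B}(S,Q)}$ using (b), so $Q\leq \iimg{c}{\ol{B}(S,Q)}$ and $Q$ also lies in the family defining $\invp(S,P)$; hence $\invp(S',P)\leq \invp(S,P)$. For monotonicity in $P$, assume $P\leq P'$: every $Q\leq P$ also satisfies $Q\leq P'$, so the family defining $\invp(S,P)$ is contained in the one defining $\invp(S,P')$, and the inequality $\invp(S,P)\leq \invp(S,P')$ follows by the universal property of joins.

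I do not anticipate a serious obstacle: the only subtlety is to be precise that the comparisons in Step~1 are actually morphisms in the fibers of $\Pred{\C}$ over the correct base objects, so that functoriality of $\ol{B}$ on $\Pred{\C}$ really descends to (anti)monotonicity on each fiber.
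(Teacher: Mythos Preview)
Your proposal is correct and follows essentially the same approach as the paper. The paper compresses your Step~2 into a single invocation of the Knaster--Tarski theorem, observing that $\invp(S,P)=\nu G.\,P\wedge\iimg{c}{\ol{B}(S,G)}$ and that the $S$-relative invariants contained in $P$ are exactly the post-fixed points of this monotone operator; your argument spells out the relevant part of Knaster--Tarski directly, and your Step~3 makes explicit what the paper leaves to ``the mixed variance of $\ol{B}$''.
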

\begin{proof}
The first statement follows from the Knaster-Tarski theorem since $\invp (S, P)$ is the greatest fixed point
$
    \invp(S, P) = \nu G.\ P \land \iimg{c}{\overline{B}(S,G)}
$
in the complete lattice $\Pred[Y]{\C}$.
The second statement holds due to the mixed variance of the predicate lifting $\ol{B}$.
\end{proof}
The relative henceforth modality only yields relative invariants. To obtain a
logical predicate, i.e.\ an invariant relative to itself, we move to the second stage
of our construction, which is based on ultrametric semantics, see~e.g.~\cite{235f3d3a0dee4f339a8f59beff18cae3}. Let us briefly recall some terminology.
A metric space $(X,\, d\colon X\times X\to \mathbb{R})$ is \emph{$1$-bounded} if $d(x,y)\leq 1$ for all $x,y$, an \emph{ultrametric space} if $d(x,y)\leq\max\{d(x,z),d(z,y)\}$ for all $x,y,z$, and \emph{complete} if every Cauchy sequence converges. A map $f\colon (X,d)\to (X',d')$ between metric spaces
is \emph{nonexpansive} if $d'(f(x),f(y))\leq d(x,y)$ for all $x,y$, and \emph{contractive}
if there exists $c\in [0,1)$, called a \emph{contraction factor}, such that $d'(f(x),f(y))\leq c\cdot d(x,y)$
for all $x,y$. A family of maps $(f_i\c X\to X')_{i\in I}$ is \emph{uniformly contractive}
if there exists $c\in [0,1)$ such that each $f_i$ is contractive with factor $c$. By Banach's fixed point theorem, every contractive endomap $f\colon X\to X$ on a non-empty complete metric space has a unique fixed point.

\begin{defn}\label{ass:contr}
The category $\C$ is \emph{predicate-contractive} if
\begin{enumerate}
  \item every $\Pred[X]{\C}$ carries the structure of a complete $1$-bounded ultrametric space;
  \item for every $f\c X\to Y$ in $\C$, the map $\iimg{f}{\argument}\c\Pred[Y]{\C}\to\Pred[X]{\C}$
  is non-expansive;
  \item for any two co-well-ordered families $(P^i\pred{} X)_{i\in I}$ and $(Q^i\pred{} X)_{i\in I}$ of predicates,
  \begin{displaymath}\textstyle
    d\bigl(\bigand_{i\in I} P^i,\bigand_{i\in I} Q^i\bigr)\leq \sup_{i\in I} d(P^i,Q^i).
  \end{displaymath} 
  Here $(P^i\pred{} X)_{i\in I}$ is \emph{co-well-ordered} if each nonempty subfamily has a greatest element.
\end{enumerate}
\end{defn}
\begin{expl}
  The category $\C=\Set^\Ty$ is predicate-contractive when equipped with the
  ultrametric on $\Pred[X]{\C}$ given by $d(P,Q)=2^{-n}$ for
  $P,Q\monoto X$, where $n=\inf\{\sharp\tau\mid P_\tau\neq Q_\tau\}$
  and $\sharp\tau$ is the size of~$\tau$, defined by $\sharp\utype=1$
  and $\sharp(\arty{\tau_1}{\tau_2})=\sharp\tau_1+\sharp\tau_2$. By
  convention, $\inf\,\emptyset=\infty$ and $2^{-\infty}=0$. To see
  predicate-contractivity, first note that a function
  $\mathcal{F}\c\Pred[Y]{\C}\to\Pred[X]{\C}$ is non-expansive iff
\begin{displaymath}
\inf\{\sharp\tau \mid (\mathcal{F} P)_\tau \neq (\mathcal{F} Q)_\tau\}
\geq \inf\{\sharp\tau \mid P_\tau \neq Q_\tau\} \qquad\text{for all $P, Q\monoto Y$},
\end{displaymath}
and contractive (necessarily with factor at most $1/2$) iff that inequality holds strictly.

This immediately implies clause~(2) of~\Cref{ass:contr}: inverse images in $\Set^\Ty$ are computed 
pointwise, and $\iimg{f_\tau}{P_\tau} \neq\iimg{f_\tau}{Q_\tau}$
implies $P_\tau\neq Q_\tau$ for $f\colon X\to Y$ and $P,Q\monoto Y$. Similarly, since intersections are computed pointwise, clause~(3) amounts to
\begin{align*}
\inf\Bigl\{\sharp\tau\mid \bigcap_{i\in I}P^i_\tau\neq \bigcap_{i\in I}Q^i_\tau\Bigr\}
\geq&\;\inf\{\sharp\tau\mid \exists i\in I: P^i_\tau\neq Q_\tau^i\},
\end{align*}
which is clearly true, for if $\bigcap_{i\in I}P^i_\tau\neq \bigcap_{i\in I}Q^i_\tau$ then 
$P^i_\tau\neq Q^i_\tau$ for some $i\in I$.
\end{expl}
\begin{defn}[Contractive lifting]
Suppose that $\C$ is predicate-contractive. The predicate lifting $\ol{B}\colon \Pred{\C}^\op \times \Pred{\C}\to \Pred{\C}$ is
\emph{contractive} if for every $S\pred{} X$ the map $\ol{B}(S,\argument)$
  is non-expansive, and the family $(\ol{B}(\argument,P))_{P\pred{} X}$ is uniformly contractive.
\end{defn}

\begin{prop}\label{pro:square}
Let $\ol{B}$ be contractive and $c\colon X\to B(X,X)$. For every $S\pred{} X$, the map $\invp^{\ol{B},c}(S,\argument)$ 
  is non-expansive, and the family $(\invp^{\ol{B},c}(\argument,P))_{P\pred{} X}$
  is uniformly contractive. 
\end{prop}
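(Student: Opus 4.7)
My plan is to express $\invp^{\ol B, c}(S, P)$ as the greatest fixed point of the monotone operator $\Phi_{S,P}\colon \Pred[X]{\C} \to \Pred[X]{\C}$, $\Phi_{S,P}(G) = P \land \iimg{c}{\ol{B}(S, G)}$ (cf.~the proof of \Cref{prop:invarprop}), and to prove both bounds by transfinite induction along its canonical descending approximation. Concretely, I fix a common contraction factor $c_0 \in [0, 1)$ for the family $(\ol{B}(\argument, G))_{G}$ and set $G^0 := \top$, $G^{\alpha+1} := \Phi_{S,P}(G^\alpha)$, and $G^\lambda := \bigand_{\beta<\lambda} G^\beta$ at limit ordinals. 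By monotonicity of $\Phi_{S,P}$, the chain $(G^\alpha)_\alpha$ is decreasing (hence co-well-ordered), and by Knaster--Tarski it stabilizes at some ordinal $\kappa$ with $G^\kappa = \invp^{\ol B, c}(S, P)$.

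Writing $G^\alpha_{S,P}$ for the iterate with parameters $(S, P)$, the core inductive claim is (i) $d(G^\alpha_{S, P}, G^\alpha_{S, Q}) \leq d(P, Q)$ and (ii) $d(G^\alpha_{S, P}, G^\alpha_{S', P}) \leq c_0 \cdot d(S, S')$ for all $\alpha$; evaluating at $\alpha = \kappa$ yields the proposition. The base case is trivial. The limit case applies clause~(3) of \Cref{ass:contr} to the parallel descending chains $(G^\beta_{S, P})_{\beta<\lambda}$ and $(G^\beta_{S, Q})_{\beta<\lambda}$ (resp.~$(G^\beta_{S', P})_{\beta<\lambda}$), reducing to the supremum of the inductive bounds. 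For the successor case, let $H_S(G) := \iimg{c}{\ol{B}(S, G)}$; clause~(2) together with the non-expansivity of $\ol{B}(S, \argument)$ makes $H_S$ non-expansive in~$G$, while the uniform contractivity of $\ol{B}(\argument, G)$ makes the family $(H_S)_S$ uniformly contractive with factor~$c_0$. The ultrametric triangle inequality then yields $d(H_S(G^\alpha_S), H_{S'}(G^\alpha_{S'})) \leq \max(d(G^\alpha_S, G^\alpha_{S'}),\, c_0 \cdot d(S, S'))$, from which (ii) follows by the induction hypothesis; (i) is analogous and simpler.

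The main obstacle sits at the successor stage, where each iterate is a binary meet of $P$ (or $Q$) with $H_S(G^\alpha)$. Propagating the estimate across this meet requires $d(P \land A, P' \land A') \leq \max(d(P, P'), d(A, A'))$, i.e.~non-expansivity of binary meet in both arguments, which is \emph{not} directly covered by clause~(3): a two-element family is co-well-ordered only when its two elements are comparable, so the clause does not apply to arbitrary pairs. I would discharge this either by invoking an auxiliary lemma establishing binary meet non-expansivity in predicate-contractive categories---it holds manifestly in $\Set^\Ty$ because meets are computed pointwise and the metric is type-stratified---or by presenting binary meets categorically as pullbacks and combining the non-expansivity of inverse images from clause~(2) with an extensivity argument on $\Pred[X]{\C}$.
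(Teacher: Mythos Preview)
Your approach is essentially the paper's: transfinite iteration of $\Phi_{S,P}$ from $\top$, clause~(3) at limits, and clause~(2) plus the ultrametric inequality at successors. The step you flag as the ``main obstacle'' is not one: since $P \wedge S = \iimg{s}{P}$ for $s\colon S\monoto X$, clause~(2) immediately gives $d(P\wedge S,\,Q\wedge S)\le d(P,Q)$ (this is precisely the paper's auxiliary lemma), and then the ultrametric triangle inequality yields the two-variable bound $d(P\wedge A,\,P'\wedge A')\le\max(d(P,P'),d(A,A'))$; your second suggested resolution is the correct one and needs no extensivity detour.
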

\noindent Contractive liftings allow us to augment every predicate $P$ to a logical
predicate:
\begin{defn}[Henceforth]
  \label{def:square}
  Let $\ol{B}$ be contractive and $c\colon X\to B(X,X)$. For each
  predicate $P\monoto X$ we define $\logp^{\ol{B},c} P \pred{} X$
  (where we usually omit the superscripts) to be the unique fixed point of
  the contractive endomap
  \begin{equation}\label{eq:fix}
      S\mapsto \invp^{\ol{B},c} (S, P) \quad \text{on}\quad \Pred[X]{\C}.
  \end{equation}
\end{defn}

\begin{theorem}\label{thm:square-refinement}
The predicate $\logp{P}$ is the unique locally maximal logical refinement of $P$.
\end{theorem}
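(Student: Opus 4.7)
The plan is to verify the three defining properties of a locally maximal logical refinement for $\logp{P}$ using the characterization of $\invp$ as a greatest relative invariant (\Cref{prop:invarprop}), and then to obtain uniqueness directly from Banach's fixed point theorem.

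First I would unfold the fixed-point equation $\logp{P}=\invp(\logp{P},P)$ guaranteed by \Cref{def:square} together with contractivity of $\invp(\argument,P)$ (\Cref{pro:square}). From \Cref{prop:invarprop}, $\invp(\logp{P},P)$ is the greatest $\logp{P}$-relative $\ol{B}$-invariant contained in $P$. Reading off the three conditions of \Cref{def:refinement}: (i) $\logp{P}\leq P$ is immediate, since $\invp(\logp{P},P)\leq P$; (ii) $\logp{P}$ is a $\logp{P}$-relative invariant because, being equal to $\invp(\logp{P},P)$, it is the largest such invariant below $P$, hence in particular one; (iii) if $Q\leq P$ is a $\logp{P}$-relative invariant, then by maximality $Q\leq \invp(\logp{P},P)=\logp{P}$. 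This establishes that $\logp{P}$ is a locally maximal logical refinement.

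For uniqueness, suppose $R\pred{} X$ is another locally maximal logical refinement of $P$. By (i) and (ii) applied to $R$, $R$ is an $R$-relative invariant contained in $P$, so by \Cref{prop:invarprop}, $R\leq \invp(R,P)$. Conversely, $\invp(R,P)$ is an $R$-relative invariant contained in $P$, so property~(iii) for $R$ gives $\invp(R,P)\leq R$. Hence $R=\invp(R,P)$, i.e.\ $R$ is a fixed point of the contractive map~\eqref{eq:fix}. Banach's fixed point theorem then yields $R=\logp{P}$.

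I do not foresee a substantial obstacle: the whole argument is a matter of rearranging the definitions once one notices that both the greatest-invariant property underlying $\invp$ and local maximality of $\logp{P}$ phrase themselves naturally as the two inequalities needed for the fixed-point equation. The only point that deserves a brief mention in the write-up is that \Cref{pro:square} guarantees uniqueness of the fixed point, so no ambient non-emptiness assumption needs to be discharged separately.
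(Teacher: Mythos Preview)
Your proposal is correct and follows essentially the same approach as the paper: both observe that $R=\invp(R,P)$ is equivalent to $R$ being a locally maximal logical refinement of $P$, and then invoke uniqueness of the Banach fixed point. The paper compresses this into two sentences by noting directly that the fixed-point equation reads ``$\logp P$ is the greatest $\logp P$-relative invariant contained in $P$,'' whereas you spell out conditions (i)--(iii) and the converse for an arbitrary refinement $R$ explicitly.
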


\begin{proof}
By \eqref{eq:fix}, $\logp{P}$ is the unique predicate satisfying
$\logp{P}=\invp(\logp{P},P)$. By \eqref{eq:hence}, this equality says
that $\logp{P}$ is the greatest $\logp{P}$-relative invariant contained in $P$,
as needed.
\end{proof}
\begin{expl}%
  \label{ex:basicsquare}
  Let~$B$ be the behaviour bifunctor on $\Set^\Ty$ given by \eqref{eq:beh}. Its canonical lifting~$\ol{B}$ (\Cref{ex:lifting}) is contractive because $\ol{B}_{\arty{\tau_1}{\tau_2}}(P,Q)$ depends only on $P_{\tau_1}$, $Q_{\tau_2}$, $Q_{\arty{\tau_1}{\tau_2}}$; in other words, $\ol{B}$ decreases the size of types in the contravariant argument and does not increase it in the covariant argument. Given a coalgebra $c \c X \to B(X,X)$ and %
  ${P \pred{} X}$, the fixed point $\logp^{\ol{B},c} P$ is given by the
  $\Ty$-indexed family of  greatest fixed points
  \begin{equation*}
    \begin{aligned}
      \logp P_{\utype} =
      &\; \nu G.\, P_{\utype} \land \iimg{c_{\utype}}{G + \term}, \\
      \logp P_{\arty{\tau_{1}}{\tau_{2}}} =
      &\; \nu G.\,
        P_{\arty{\tau_{1}}{\tau_{2}}} \land \iimg{c_{\arty{\tau_{1}}{\tau_{2}}}}{G +
        \{f\c {\Tr_{\tau_{1}}}\to {\Tr_{\tau_{2}}}\mid\forall s \in \logp{P}_{\tau_{1}}.\, f (s)\in \logp{P}_{\tau_{2}} \}}.
    \end{aligned}
  \end{equation*}
This follows from \Cref{thm:square-refinement} since the above predicate is clearly a locally maximal refinement of $P$. By instantiating $c$ to the operational model $\gamma\colon \mS\to B(\mS,\mS)$ of $\stsc$ and taking $P={\Downarrow}$, we recover the definition of $\sqdown$ in \Cref{ex:strongnorm1}. 
\end{expl}
\begin{expl}%
  The logical predicate $\DDar \pred{} \Tr$ of \Cref{ex:logical-plotkin} is precisely
  $\logp {\Downarrow}$ for $\Pow D$ w.r.t.\ its canonical lifting
  and the coalgebra $\tsup{\gamma} \c \Tr \to
\Pow D(\Tr,\Tr)$.
More generally, for a coalgebra $c \c X \to \Pow D(X,X)$, the predicate $\logp P$ is inductively defined as
follows:
  \begin{equation*}
    \begin{aligned}
       \logp P_{\utype} =&\; P_{\utype}, \\
       \logp P_{\arty{\tau_{1}}{\tau_{2}}} =&\;
        P_{\arty{\tau_{1}}{\tau_{2}}} \land \iimg{c_{\arty{\tau_{1}}{\tau_{2}}}}{\{F\subseteq X_{\tau_{2}}^{X_{\tau_{1}}}\mid\forall f\in F.\, s\in \square P_{\tau_{1}}\implies f (s)\in \square P_{\tau_{2}} \}}.
    \end{aligned}
  \end{equation*}
\end{expl}

\begin{rem}
The construction of logical predicates for typed languages is enabled by the ``type-decreasing'' nature of the associated behaviour bifunctors. In untyped settings, e.g.\ for $B(X,Y) = Y+Y^{X}$ on $\Set$ modelling untyped combinatory logic~\cite{gmstu23}, the canonical lifting $\ol{B}$ is not contractive, hence the fixed point $\logp{P}$ in general fails to exist.
\end{rem}

\begin{rem}\label{rem:fibrations}
The forgetful functor $\under{-}\colon \Pred{\C}\to \C$ forms a complete lattice
fibration~\cite{jacobs99}, equivalently a topological functor~\cite{AdamekEA90},
and all notions and results of the present subsection extend to that level of
generality. We leave the details for future work, as our reasoning techniques
found in the upcoming sections are tailored to logical predicates.
\end{rem}

We are now in a position to state precisely what a {proof via logical
predicates} is in our framework. Given the operational model $\gamma \c \mS
\to B(\mS,\mS)$ of a higher-order language, a predicate lifting
$\ol{B}$, and a target predicate $P \pred{}
\mS$, a \emph{proof of $P$ via logical predicates} is a proof that $\logp P$ forms a subalgebra of the initial algebra $\mS$, which means
\begin{equation}
  \label{eq:logrelproof}
  \overline{\Sigma} (\logp P) \leq \iimg{\iota}{\logp P}, \text{\quad equivalently\quad}
    \fimg{\iota}{\overline{\Sigma} (\logp P)} \leq \logp P.
\end{equation}
Then $\logp{P}=\mS$ by structural induction, whence $P=\mS$ because $\logp{P}\leq P$.

Up to this point, we have streamlined and formalized coalgebraic logical predicates as a
certain abstract construction on predicates (\Cref{def:square}) and
presented proofs by coalgebraic logical predicates as standard structural
induction on said
construction. This presentation is indeed that of an abstract method: the
various parts of the problem setting, namely the syntax, the behaviour and its
predicate lifting, as well as the operational semantics, are all parameters. In the
next section, we exploit the parametric and generic nature of this method in two
main ways. First, we present up-to techniques that simplify the proof goal \eqref{eq:logrelproof}
as much as possible. Second, we look to instantiate our method to problems on
\emph{classes of higher-order languages}, as opposed to reasoning about operational models
of individual languages such as \stsc or the $\lambda$-calculus.

\section{Logical Predicates and Higher-Order Abstract GSOS}
\label{sec:hogsos}

As indicated before, substantial parts of the proof of strong normalization in \Cref{ex:strongnorm1} look generic. Specifically, the properties (2) and (3) established for $Q=A$ and $Q=C$  are independent of the choice of 
predicate $P = {\Downarrow}$ in $\logp P$. Moreover, these steps are either
obvious or follow immediately from the operational rules of $\stsc$: the predicates~$A$ and~$C$
being invariants can be attributed to the fact that except for terms of the form $S''(-,-)$, all terms evolve either to a variable or to some flat term such as $p' \app{}{}q$. 
The core of the proof, which is tailored to the choice of~$P$, 
lies in proving property~(1).

As it turns out, for a class of higher-order GSOS laws that we call
\emph{relatively flat higher-order GSOS laws}, conditions (2) and (3) are automatic. This insight leads us to a powerful up-to technique that simplifies proofs via logical predicates.

\subsection{Relatively Flat Higher-Order GSOS Laws}\label{sec:relatively-flat-laws}
The following definition abstracts the restricted nature of the rules of $\stsc$ to the level of higher-order GSOS laws. For simplicity, we confine ourselves to $0$-pointed laws, however all the  results of this subsection easily extend to the $V$-pointed case.
\begin{defn}
  \label{def:relativelyflat}
  Let $\Sigma \c \C \to \C$ be a syntax functor of the form $\Sigma = \coprod_{j \in J}
  \Sigma_{j}$, where $(J,\prec)$ is a non-empty well-founded strict partial order, and put
  $\Sigma_{\prec k} = \coprod_{j \prec k}\Sigma_{j}$. A \emph{relatively flat}
  \emph{($0$-pointed) higher-order GSOS law} of~$\Sigma$ over $B$ is a
  $J$-indexed family of morphisms
  \begin{align}\label{eq:rflat}
    \rho^{j}_{X,Y} \c \Sigma_{j} (X \times B(X,Y))\to
    B(X, \Sigma^{\star}_{\prec j} (X+Y) + \Sigma_{j}\Sigma^{\star}_{\prec j} (X+Y))
  \end{align}
  dinatural in $X\in \C$ and natural in $Y\in \C$.
\end{defn}
We put $e_{j,X} = [\inj^{\klstar}_{\prec j},
\iota\comp \inj_j \comp \Sigma_{j}(\inj^{\klstar}_{\prec j})]\c 
\Sigma^{\star}_{\prec j} X + \Sigma_{j}\Sigma^{\star}_{\prec j} X\to \Sigma^{\star} X$ where $\inj_{\prec j}\c \Sigma_{\prec j} \to \Sigma$ and $\inj_{j}\c \Sigma_{j} \to \Sigma$ are the coproduct injections, with free extensions  $\inj^{\klstar}_{\prec j} \c \Sigma^{\star}_{\prec j}
\to \Sigma^{\star}$ and $\inj_{j}^\klstar\c \Sigma_{j}^\star \to \Sigma^\star$. Every relatively flat higher-order GSOS law \eqref{eq:rflat} determines an ordinary higher-order
GSOS law of $\Sigma$ over $B$ with components
\begin{align*}
  \rho_{X,Y} = \coprod_{j \in J} \Sigma_{j} (X \times B(X,Y)) \xrightarrow{\coprod_{j \in
      J}\rho^{j}_{X,Y}}\,&
  \coprod_{j \in J} B(X,
  \Sigma^{\star}_{\prec j} (X+Y) + \Sigma_{j}\Sigma^{\star}_{\prec j} (X+Y))\\*
  \xrightarrow{[B(X, e_{j,X+Y})]_{j\in J}}\,&
  B(X,\Sigma^{\star}(X + Y)).
\end{align*}
When we interpret a higher-order GSOS law as a set of operational rules, relative flatness means that the operations of the language can be ranked in a way that every term $\f(-,\cdots,-)$ with $\f$ of rank $j$ evolves into a term that uses only operations of strictly lower rank, except possibly its head symbol which may have the same rank $j$.
\begin{expl}
  \label{ex:tclisessflat}
  \stsc{} is relatively flat: put $J = \{0\prec 1\}$, let $\Sigma_0$ contain application, and let $\Sigma_1$ contain all other operation symbols. This is immediate from the rules in \Cref{fig:skirules}.
\end{expl}

\begin{defn}
Suppose that each $\Sigma_j$ preserves strong epimorphisms. A \emph{predicate lifting} of \eqref{eq:rflat} is a relatively flat $0$-pointed higher-order GSOS law $(\ol{\rho}^j)_{j\in J}$ of $\ol{\Sigma}=\coprod_j \ol{\Sigma_j}$ over~$\ol{B}$ where for every $P\monoto X$ and $Q\monoto Y$ the $\Pred{\C}$-morphism $\ol{\rho}^{j}_{P,Q}$ is carried by $\rho^j_{X,Y}$.
\end{defn}

\begin{rem}
\begin{enumerate}
\item The condition on $\Sigma_j$ ensures $\ol{\Sigma_j}^\star=\ol{\Sigma_j^\star}$ (\Cref{prop:free-monad-lift}), so that the first component of $\ol{\rho}^{j}_{P,Q}$ has type $\Sigma_{j} (X \times B(X,Y))\to
    B(X, \Sigma^{\star}_{\prec j} (X+Y) + \Sigma_{j}\Sigma^{\star}_{\prec j} (X+Y))$.
\item Liftings are unique if they exist: since $\ol{\rho}^{j}_{P,Q}$ is a $\Pred{\C}$-morphism, it is determined by its first component $\rho^j_{X,Y}$. Moreover, the (di)naturality of $\ol{\rho}^{j}$ follows from that of $\rho^j$.
\item For the canonical lifting $\ol{B}$, a lifting $(\ol{\rho}^{j})_{j\in J}$ of $(\rho^j)_{j\in J}$ always exists (\Cref{appsec:liftings}).
\end{enumerate}
\end{rem}
The following theorem establishes a sound up-to technique for logical predicates. It states that for operational models of relatively flat laws, the proof goal \eqref{eq:logrelproof} can be established by checking a substantially relaxed property.

\begin{theorem}[Induction up to $\square$]
  \label{th:main2}
  Let $\gamma \c \mS \to B(\mS,\mS)$ be the operational model of a relatively flat $0$-pointed higher-order GSOS law that admits a predicate lifting. Then for every predicate $P \pred{} \mS$ and every locally maximal logical refinement $\logp^{\gamma,\ol{B}}P$,
  \[
    \ol{\Sigma} (\logp^{\gamma,\ol{B}} P) \leq \iimg{\iota}{P} \qquad \text{implies} \quad
    \ol{\Sigma} (\logp^{\gamma,\ol{B}} P) \leq \iimg{\iota}{\logp^{\gamma,\ol{B}} P} \quad\text{(hence ${P}=\mS$)}.
  \]
\end{theorem}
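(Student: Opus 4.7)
The plan is to reduce the target to a simple application of the maximality property in \Cref{thm:square-refinement}. Since $\iota\colon \Sigma\mS\to\mS$ is an isomorphism, the desired inequality is equivalent to $\fimg{\iota}{\ol{\Sigma}(\logp P)} \leq \logp P$, and it is natural to approach this via well-founded induction on $j\in J$, proving the per-rank statement $\fimg{\iota \circ \inj_j}{\ol{\Sigma_j}(\logp P)} \leq \logp P$ and then taking the join over $j$.

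For the inductive step at rank $j$, the induction hypothesis tells us that $\logp P$ is closed under every operation of rank $k\prec j$; a straightforward auxiliary argument (the Kleisli extension of the free monad $\Sigma^\star_{\prec j}$ restricts to a $\logp P$-algebra) then yields the closure property
$(*) \; \fimg{\hat\iota}{\ol{\Sigma^\star_{\prec j}}(\logp P)} \leq \logp P$.
We then define the candidate
$Q_j := \logp P \vee \fimg{\iota \circ \inj_j}{\ol{\Sigma_j}(\logp P)}$
and aim to show that $Q_j$ is a $\logp P$-relative $\ol B$-invariant contained in $P$. Containment $Q_j\leq P$ is immediate from $\logp P \leq P$ together with the hypothesis $\ol{\Sigma}(\logp P)\leq \iimg{\iota}{P}$. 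For invariance, the $\logp P$-summand is automatic since $\logp P$ is already $\logp P$-relative invariant. For the other summand, we unfold $\gamma$ via the primitive recursion equation using the relatively flat decomposition $\rho = \coprod_j \rho^j$ composed with $[B(\mS, e_{j,\mS+\mS})]_j$: since $\logp P$ is a $\logp P$-relative invariant, $\langle\id,\gamma\rangle$ restricts to $\logp P \to \logp P \times_{\mS} \ol{B}(\logp P,\logp P)$, so the pair feeds into $\ol{\Sigma_j}(\logp P \times \ol{B}(\logp P,\logp P))$; applying the lift $\ol{\rho}^j$ lands us in
$\ol{B}\bigl(\logp P,\; \ol{\Sigma^\star_{\prec j}}(\logp P+\logp P) \vee \ol{\Sigma_j\Sigma^\star_{\prec j}}(\logp P+\logp P)\bigr)$,
and covariant functoriality of $\ol B$ combined with $(*)$ shows that $\hat\iota\circ\Sigma^\star\nabla\circ e_{j,\mS+\mS}$ sends the first summand into $\logp P$ and the second into $\fimg{\iota\circ\inj_j}{\ol{\Sigma_j}(\logp P)}$, so the whole image lies in $\ol{B}(\logp P, Q_j)$.

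Invoking \Cref{thm:square-refinement}, we obtain $Q_j \leq \logp P$, which closes the induction. Taking the join over $j\in J$ yields $\ol{\Sigma}(\logp P)\leq \iimg{\iota}{\logp P}$. The bracketed ``$P=\mS$'' then follows by structural induction on the initial algebra $\mS$: the just-established inequality makes $\logp P$ a $\Sigma$-subalgebra of $\mS$, hence $\logp P=\mS$, and $\logp P\leq P$ promotes this to $P=\mS$.

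The main obstacle I expect to wrestle with is the bookkeeping around the canonical predicate liftings, namely verifying rigorously that the composite $\hat\iota\circ \Sigma^\star\nabla\circ e_{j,\mS+\mS}$ carries the two canonical lifts of $\Sigma^\star_{\prec j}(\logp P+\logp P)$ and $\Sigma_j\Sigma^\star_{\prec j}(\logp P+\logp P)$ into $\logp P$ and $\fimg{\iota\circ\inj_j}{\ol{\Sigma_j}(\logp P)}$ respectively. This depends on \Cref{prop:free-monad-lift} (so that canonical liftings commute with free monads whenever $\Sigma_j$ preserves strong epis), on covariant monotonicity of $\ol B$, and on the fact that direct images compose well with strong-epi/mono factorisations under \Cref{assumptions}; once these bits are aligned, the rest is essentially diagram chasing through the primitive recursion equation defining $\gamma$.
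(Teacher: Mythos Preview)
Your proposal is correct and follows essentially the same route as the paper's proof: well-founded induction on $j\in J$, defining the candidate $Q_j = \logp P \vee \fimg{(\iota\comp\inj_j)}{\ol{\Sigma_j}(\logp P)}$, and verifying that $Q_j$ is a $\logp P$-relative invariant contained in $P$ by unfolding $\gamma$ through the relatively flat law $\rho^j$ and using the induction hypothesis on the $\Sigma^\star_{\prec j}$-part. The paper packages the invariance check into one large commutative diagram with labelled cells, but the content of each cell is exactly what you describe in prose. One small point: when you conclude $Q_j\leq\logp P$, you should appeal directly to clause~(iii) of \Cref{def:refinement} (which is part of the hypothesis of the theorem) rather than to \Cref{thm:square-refinement}, since the latter concerns the specific construction of $\logp P$ via contractivity and is not assumed here.
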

\noindent We stress that the theorem applies to any refinement $\logp^{\gamma,\ol{B}}P$ and does not assume a specific construction (e.g.\ that of \Cref{sec:constructing-logical-predicates}). The up-to technique facilitates proofs via logical predicates quite dramatically. For illustration, we revisit strong normalization:

\begin{expl}
  \label{ex:tclnorm}
 We give an alternative proof of strong normalization of \stsc (cf.\ \Cref{ex:strongnorm1}) via induction up to $\square$. Hence it suffices to prove
  \[
    {\ol{\Sigma} (\logp {\Downarrow})} \leq \iimg{\iota}{\Downarrow},
  \]
  which states that a term is terminating if all of its
  subterms are in the logical predicate $\logp {\Downarrow}$. This is clear for terms that are not applications, since they immediately terminate (cf.\ \Cref{fig:skirules}). Now consider an application $p \app{}{} q$ such that $
  \logp_{\arty{\tau_{1}}{\tau_{2}}} {\Downarrow}(p)$ and $\logp_{\tau_{1}}
  {\Downarrow}(q)$. Since $\sqdown$ is a logical predicate contained in $\Downarrow$, this entails that $p\xTo{q} p'$ for a (unique)
      term~$p'$, and that $\sqdown_{\tau_2}p'$, hence
      $\Downarrow_{\tau_2} p'$. Since $p \app{}{}q\To p'$, it follows that 
      $\Downarrow_{\tau_2}p\app{}{} q$.

Analogous reasoning shows that \stsc is strongly normalizing under the call-by-value
and the maximally nondeterministic evaluation strategy (\Cref{rem:cbv-nondet}). In
the latter case, strong normalization means that every term must eventually
terminate, independently of the order of evaluation.
\end{expl}
The reader should compare the above compact argument to the laborious original proof given in \Cref{ex:strongnorm1}. Our up-to technique can be seen to precisely isolate the non-trivial core of the proof, while providing its generic parts for free. For a further application -- type safety of the simply typed $\lambda$-calculus -- see \Cref{sec:lambda-laws}.

\subsection{\texorpdfstring{$\lambda$-Laws}
  {Lambda-Laws}}\label{sec:lambda-laws}
\label{sec:lambdalawsmain}

We proceed to explain how our theory of logical predicates applies to languages with variables and binders. We highlight the core ideas and technical challenges in the case of the $\lambda$-calculus, and briefly sketch their categorical generalization; a full exposition can be found in~\Cref{app:lambdashenanigans}. Let \stlc be the simply typed call-by-name $\lambda$-calculus with the set $\Ty$ of types given by \eqref{eq:type-grammar} and operational rules
\begin{equation}\label{eq:lambda-rules}
  \inference{t \xto{} t'}{t \app{}{}s \to t' \app{}{}s} \qquad
  \inference{}{(\lambda x \c \tau_{1}.\,t) \app{}{}s \to t[s/x]}
\end{equation}
where $s,t,t'$ range over $\lambda$-terms of appropriate type, and $[-/-]$ denotes capture-avoiding substitution. To model \stlc in higher-order abstract GSOS, we follow ideas by Fiore~\cite{DBLP:journals/mscs/Fiore22}. Our base category $\C$ is the presheaf category $(\Set^{\fset/{\Tyl}})^{\Tyl}$ where $\fset$ denotes the category of finite cardinals and functions, and the set $\Tyl$ is regarded as a discrete category. An object $\Gamma\colon n\to \Ty$ of $\fset/\Tyl$ is a \emph{typed context}, associating to each variable $x\in n$ a type; we put $|{\Gamma}|:=n$ . A presheaf $X\in (\Set^{\fset/{\Tyl}})^{\Tyl}$ associates to each context $\Gamma$ and each type $\tau$ a set $X_{\tau}(\Gamma)$ whose elements we think of as terms of type $\tau$ in context $\Gamma$.

The syntax of $\stlc$ is captured by the functor $\Sigma\colon (\Set^{\fset/{\Tyl}})^{\Tyl}\to (\Set^{\fset/{\Tyl}})^{\Tyl}$ where
\begin{equation}
  \label{eq:lambdasigma}
  \begin{aligned}
    \Sigma_{\utype}X =
    &\; V_{\utype} + K_{1}
      + \coprod_{\tau \in \Tyl}X_{\arty{\tau}{\utype}}
      \times X_{\tau},\\
    \Sigma_{\arty{\tau_{1}}{\tau_{2}}} X=
    &\; V_{\arty{\tau_{1}}{\tau_{2}}} + \delta^{\tau_{1}}_{\tau_{2}} X
      + \coprod_{\tau \in \Tyl}X_{\arty{\tau}{\arty{\tau_{1}}{\tau_{2}}}}
                                           \times X_{\tau}.
  \end{aligned}
\end{equation} 
Here $K_1\in \Set^{\fset/{\Tyl}}$ is the constant presheaf on $1$, $V$ is given by $V_{\tau}(\Gamma)=\{x\in |\Gamma|\mid \Gamma(x)=\tau\}$, and $\delta$ by
 $(\delta^{\tau_1}_{\tau_2}X)(\Gamma) = X_{\tau_2}(\Gamma + \check\tau_1)$ with $(-) + \check\tau_1$ denoting context extension by a variable of type $\tau_1$. Informally, $K_1$, $V$ and $\delta$ represent the constant $\mathsf{e}\colon \utype$, variables, and $\lambda$-abstraction, respectively. The initial algebra for $\Sigma$ is the presheaf $\Lambda$ of $\lambda$-terms, i.e.\ $\Lambda_\tau(\Gamma)$ is the set of $\lambda$-terms (modulo $\alpha$-equivalence) of type $\tau$ in context $\Gamma$~\cite{DBLP:journals/mscs/Fiore22}.

The behaviour bifunctor $B^{\lambda} \colon ((\Set^{\fset/{\Tyl}})^{\Tyl})^\op\times (\Set^{\fset/{\Tyl}})^{\Tyl}\to (\Set^{\fset/{\Tyl}})^{\Tyl}$ for $\stlc$ has two separate components: it is given by a product
\begin{flalign}\label{eq:lambda-beh} 
&&B^{\lambda}(X,Y)=\;&\llangle X,Y\rrangle \times B(X,Y)&&\\[1ex] 
\text{where}&& 
 \llangle X,Y \rrangle_{\tau}(\Gamma) =\;& \Set^{\fset/{\Tyl}}\Bigl(\prod_{x \in
    |\Gamma|}X_{\Gamma(x)}, Y_{\tau}\Bigr),\notag\\
 && B(X,Y) =\;& (K_{1} + Y + D(X,Y)),\notag\\
 && D_{\utype}(X,Y) =\;& K_{1} \qquad \text{and}
  \qquad D_{\arty{\tau_{1}}{\tau_{2}}}(X,Y) = Y_{\tau_{2}}^{X_{\tau_{1}}},\notag
\end{flalign}
and $Y_{\tau_{2}}^{X_{\tau_{1}}}$ is an exponential object in $\Set^{\fset/\Ty}$.
The bifunctor $\llangle -,-\rrangle$ models an abstract substitution structure;
for instance, every $\lambda$-term $t\in \Lambda_\tau(\Gamma)$ induces a natural
transformation $\prod_{x\in|\Gamma|} \Lambda_{\Gamma(x)}\to \Lambda_\tau$ in
$\llangle \Lambda,\Lambda\rrangle_{\tau}(\Gamma)$ mapping a tuple
$(t_1,\ldots,t_{|\Gamma|})$  to the term obtained by simultaneous substitution
of the terms $t_i$ for the variables of $t$. The summands of the bifunctor
$B$ abstract from the possible operational behaviour of $\lambda$-terms:
a term may explicitly terminate, reduce, get stuck (e.g.\ if it is a variable),
or act as a function.

The operational rules \eqref{eq:lambda-rules} of \stlc can be encoded into a
$V$-pointed higher-order GSOS law of $\Sigma$ over $B^{\lambda}$, similar to the untyped
$\lambda$-calculus treated in earlier work~\cite{gmstu23}. The operational model
$\langle \phi,\gamma\rangle\colon \Lambda\to \llangle \Lambda,\Lambda
\rrangle\times B(\Lambda,\Lambda)$ is the coalgebra whose components
$\phi,\gamma$ describe the substitution structure and the operational behaviour
of $\lambda$-terms.

At this point, a key technical issue can be observed: the canonical predicate
lifting $\ol{\llangle -,-\rrangle}$ is not contractive. Indeed, given $P\monoto
X$, $Q\monoto Y$, the predicate $\ol{\llangle P,Q\rrangle}_\tau$ consists of all
natural transformations  $\prod_{x \in
    |\Gamma|}X_{\Gamma(x)}\to Y_{\tau}$ that restrict to $\prod_{x \in
    |\Gamma|}P_{\Gamma(x)}\to Q_{\tau}$, and this expression depends on
  $P_{\Gamma(x)}$ where the type $\Gamma(x)$ may be of higher complexity than~$\tau$.
  In particular, we conclude that $\ol{B^{\lambda}}$ is not contractive. In
  contrast, the canonical lifting $\ol{B}$ is contractive and hence
  $\square^{\gamma,\ol{B}} P$ exists for every $P\monoto \Lambda$ (\Cref{def:square}).
 However, it is well-known that logical
  predicates do not do the trick for inductive proofs in
the $\lambda$-calculus, see e.g.~\cite[p. 9]{DBLP:journals/corr/abs-1907-11133}
and
\cite[p. 150]{DBLP:books/daglib/0005958}; rather, one needs to prove the
\emph{open extension} of the logical predicate, which is the larger predicate \[\blacksquare^{\gamma,\ol{B}} P = \iimg{\phi}{\ol{\llangle \square^{\gamma,\ol{B}}{P},\square^{\gamma,\ol{B}}{P}\rrangle}}.\]
The standard proof method is then to show $\blacksquare^{\gamma,\ol{B}} P=\Lambda$ directly by structural induction. However, this can be greatly simplified by the following up-to-principle, which works with the original predicate $\square^{\gamma,\ol{B}}{P}$ and forms a counterpart of  \autoref{th:main2} for the $\lambda$-calculus:
\begin{theorem}[Induction up to $\blacksquare$]\label{thm:ind-up-to-blackquare}
Let $P\monoto \Lambda$ be a predicate. Then
\[ \ol{\Sigma}(\square^{\gamma,\ol{B}})\leq \iimg{\ini}{P} \qquad\text{implies}\qquad  \ol{\Sigma}(\blacksquare^{\gamma,\ol{B}}P)\leq \iimg{\ini}{\blacksquare^{\gamma,\ol{B}}P} \quad\text{(hence $P=\Lambda$)}. \]
\end{theorem}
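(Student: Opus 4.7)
In what follows I abbreviate $\square^{\gamma,\ol{B}}P$ and $\blacksquare^{\gamma,\ol{B}}P$ to $\square P$ and $\blacksquare P$. The strategy is to establish the subalgebra inequality $\ol{\Sigma}(\blacksquare P) \leq \iimg{\ini}{\blacksquare P}$ by case analysis on the summands of~$\Sigma$ from~\eqref{eq:lambdasigma}, from which $\blacksquare P = \Lambda$ follows by structural induction on the initial $\Sigma$-algebra. To then derive $P = \Lambda$, I would use the chain $\Lambda = \blacksquare P \leq \square P \leq P$: the rightmost inclusion is by \Cref{def:refinement}, and the middle one follows because the identity substitution recovers $t$ from $\phi(t)$, and variables belong to $\square P$ as a consequence of the hypothesis together with the terminating (stuck) behaviour of variables under $\gamma$, which makes $\gamma(x)$ trivially reside in $\ol{B}(\square P,\square P)$.

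The variable and constant cases of the subalgebra check are immediate from the definition $\blacksquare P = \iimg{\phi}{\ol{\llangle \square P, \square P\rrangle}}$: substitution into a variable is a projection onto a $\square P$-component, and substitution into $\mathsf{e}$ is $\mathsf{e}$, which lies in $\square P$ by the hypothesis (applied at the $\mathsf{e}$-summand) together with its explicit-termination behaviour. For the application constructor, the identity $\phi(t_1\app{}{}t_2)(\vec s) = \phi(t_1)(\vec s)\app{}{}\phi(t_2)(\vec s)$ reduces closure of $\blacksquare P$ under application to closure of $\square P$ under application of $\square P$-terms. The latter is obtained by bootstrapping the hypothesis $\ol{\Sigma}(\square P)\leq \iimg{\ini}{P}$ to $\ol{\Sigma}(\square P)\leq \iimg{\ini}{\square P}$ via the logical property of $\square P$ and an analysis of the operational rules~\eqref{eq:lambda-rules}, in the same spirit as the up-to-$\square$ technique of \Cref{th:main2}: any behavioural successor of an application of $\square P$-terms is either another application of $\square P$-terms or a beta-reduct of a $\square P$ function applied to a $\square P$ argument, hence still in $\square P$ by the fixed-point definition at function type.

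The main obstacle is the $\lambda$-abstraction case. Given $t \in \blacksquare P_{\tau_2}(\Gamma + \check\tau_1)$, one must show $\lambda x.\,t \in \blacksquare P_{\arty{\tau_1}{\tau_2}}(\Gamma)$, which unpacks to requiring that for every $\square P$-tuple $\vec s$ indexed by $\Gamma$, the term $\phi(\lambda x.\,t)(\vec s) = \lambda x.\,\phi(t)(\vec s, x)$ lie in $\square P_{\arty{\tau_1}{\tau_2}}$. By the fixed-point definition of $\square P$ at function type, it suffices to show that for every $r \in \square P_{\tau_1}$ the unique labelled reduct of this $\lambda$-abstraction---namely the beta-reduct $\phi(t)(\vec s, r)$---lies in $\square P_{\tau_2}$. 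This is precisely what $t \in \blacksquare P$ provides, since $(\vec s, r)$ is a $\square P$-tuple indexed by $\Gamma + \check\tau_1$. The remaining clauses of the function-type condition on $\square P$ are immediate, since $\lambda$-abstractions admit no silent reductions according to~\eqref{eq:lambda-rules}. Verifying the naturality of this argument in the context $\Gamma$ (so that it really takes place inside the presheaf category $(\Set^{\fset/\Ty})^{\Ty}$, and the tuple $(\vec s,r)$ can be constructed uniformly) is the most delicate step, and the one that motivates the use of the open extension $\blacksquare P$ rather than $\square P$ alone.
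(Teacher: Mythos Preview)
Your plan is essentially the concrete instantiation of the paper's proof, which is carried out abstractly at the level of $\lambda$-laws (the general \Cref{thm:ind-up-to-blackquaregen} in the appendix). The case split on the summands of $\Sigma$ (variables, $\mathsf{e}$, application, $\lambda$) and the key manoeuvres---bootstrapping closure of $\square P$ under application via a $\square P$-relative invariant in the spirit of \Cref{th:main2}, and using $t\in\blacksquare P$ to control the body of an abstraction after substitution---correspond directly to the paper's treatment of the $\Xi'$ and $\Theta$ summands; your derivation of $P=\Lambda$ from $\blacksquare P=\Lambda$ via variables $\in\square P$ is also the intended one.

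There is, however, one genuine omission in the $\lambda$-case. You claim that ``by the fixed-point definition of $\square P$ at function type, it suffices'' to check the labelled-reduct clause, and dismiss ``the remaining clauses'' as vacuous because abstractions admit no silent reductions. But $\square P = \nu G.\,P\wedge \iimg{\gamma}{\ol{B}(\square P,G)}$ also demands membership in $P$ itself: you must show $\lambda x.\,\phi(t)(\mathsf{wk}(\vec s),x)\in P$, and this is not a consequence of the behavioural clauses. The repair uses ingredients you already have. Since variables lie in $\square P$ and $\square P$, being a sub-presheaf, is stable under context weakening, the extended tuple $(\mathsf{wk}(\vec s),x)$ is a $\square P$-tuple; hence $\phi(t)(\mathsf{wk}(\vec s),x)\in\square P$ by $t\in\blacksquare P$, and the hypothesis applied at the $\delta$-summand then yields the required $P$-membership. (This also corrects a minor imprecision: your ``$(\vec s,x)$'' should read $(\mathsf{wk}(\vec s),x)$, since the substitution must be weakened to the extended context before the fresh variable is appended.) The paper isolates exactly this $P$-clause as a separate sub-goal and discharges it by the categorical analogue of the same computation, via pointed strength and evaluation.
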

\begin{rem}\label{rem:up-to-blacksquare-concrete}
Concretely, 
the theorem states that to prove $P=\Lambda$, it suffices to prove that
(1) variables satisfy $P$, (2) the unit expression $\mathsf{e} \c \utype$ satisfies $P$, (3) for all application terms $p \app{}{} q$ such that $\logp_{\arty{\tau_{1}}{\tau_{2}}}
    P(\Gamma)(p)$ and $\logp_{\tau_{1}} P(\Gamma)(q)$, we have $P_{\tau_{2}}(\Gamma)(p
    \app{}{} q)$, and (4) for all $\lambda$-abstractions $\lambda x \c \tau_{1}.\,t$ such that $t \in \logp_{\tau_{2}}P(\Gamma,x)$, we have
    $P_{\arty{\tau_{1}}{\tau_{2}}}(\Gamma)(\lambda x \c \tau_{1}.\,t)$.
\end{rem}

\begin{expl}
  We prove type safety for \stlc{} via induction up to $\blacksquare$. Thus consider
 the predicate $\mathrm{Safe}\monoto \Lambda$ that is constantly true on open terms and given by 
  \[
    t\in \mathrm{Safe}_{\tau}(\varnothing) \iff \big(\forall {e}.\,t \To e \implies
    (\text{$e$ is not an application}) \lor \exists r.\, e \to r\big),
  \]
   on closed terms. We only need to check the conditions (1)--(4) of \Cref{rem:up-to-blacksquare-concrete}. Conditions (1), (2), (4) are clear since variables are open terms and the term $\mathsf{e}\colon \utype$ and $\lambda$-abstractions do not reduce.
The only interesting clause is (3) for the empty context. Thus let $p \app{}{} q$ be a closed application term with $p \in \logp
      {\mathrm{Safe}}_{\arty{\tau_{1}}{\tau_{2}}}(\varnothing)$ and $q \in \logp
      {\mathrm{Safe}}_{\tau_{1}}(\varnothing)$; we need to show $p \app{}{} q \in
      {\mathrm{Safe}}_{\tau_{2}}(\varnothing)$. We proceed by case distinction on $p \app{}{} q
  \To e$:
  \begin{enumerate}[label=(\alph*)]
  \item $p \To p'$ and $e = p' \app{}{} q$. Then $p'\in \logp
      {\mathrm{Safe}}_{\arty{\tau_{1}}{\tau_{2}}}(\varnothing)$ by invariance, in particular $p'$ is safe, so $p'$ is either not an application or reduces. In the former case, $p'$ is necessarily a $\lambda$-abstraction since it is closed and not of type $\utype$. Thus, in both cases, $e$ reduces.
  \item $p \To \lambda x.p'$ and $p'[q/x] \To e$. Since $\logp
    \mathrm{Safe}$ is a logical predicate, from $p \in \logp
      {\mathrm{Safe}}_{\arty{\tau_{1}}{\tau_{2}}}(\varnothing)$ and $q \in \logp_{\tau_{1}}\mathrm{Safe}(\varnothing)$ we can deduce 
    $p'[q/x] \in \logp_{\tau_{2}}\mathrm{Safe}(\varnothing)$, whence $e \in \logp_{\tau_{2}}\mathrm{Safe}(\varnothing)$. In particular, $e$ is safe, which implies that $e$ is either not an application or reduces.
  \end{enumerate}
\end{expl}
As an exercise, we invite the reader to prove strong normalization of $\stlc$ via induction up to $\blacksquare$. The reader should compare these short and simple proofs with more traditional ones, see e.g.~\cite{DBLP:journals/corr/abs-1907-11133}.

All the above results and observations for \stlc can be generalized and developed at the level of general higher-order abstract GSOS laws. To this end, we first abstract the behaviour functor \eqref{eq:lambda-beh} to a functor of the form $B(X,Y)=(X\monto Y)\times B'(X,Y)$, where $(-)\monto (-)$ is the internal hom-functor of a suitable closed monoidal structure on the base category $\C$. In the case of $\stlc$, this structure is given by Fiore's \emph{substitution tensor}~\cite{DBLP:journals/mscs/Fiore22}. Second, we observe that the higher-order GSOS law of $\stlc$ is an instance of a special kind of law that we coin \emph{relatively flat $\lambda$-laws}. The induction-up-to-$\blacksquare$ technique of \Cref{thm:ind-up-to-blackquare} then can be shown to hold for operational models of relatively flat $\lambda$-laws. More details can be found in \Cref{app:lambdashenanigans}.

\takeout{
It is well-known that logical predicates do not suffice for inductive proofs in
the $\lambda$-calculus, see e.g.~\cite[p. 9]{DBLP:journals/corr/abs-1907-11133} and
\cite[p. 150]{DBLP:books/daglib/0005958}; rather, the trick is to prove the
\emph{open extension} of a logical predicate, as it allows for a sufficiently
strong inductive hypothesis for the proof to go through. This technical problem
is present in our abstract setting in a more fundamental way, and it leads to
the introduction of \emph{$\lambda$-laws}.
Due to space restrictions, we do not expand on the full theory of
$\lambda$-laws; instead; we present an \emph{instantiation} of our main result
on the simply typed $\lambda$-calculus, which requires minimal groundwork, and
we motivate the need for $\lambda$-laws by explaining the
underlying technical problem.

Consider a simply typed, intrinsically typed, call-by-name $\lambda$-calculus a
base type of $\utype$, which we call \stlc{}. Typing contexts are maps
$\Gamma\c X \to \Tyl$ from finite
sets $X$ regarded as sets of variables to their corresponding types.
We use the
common notation $\Gamma \vdash t \c \tau$ for terms-in-context. The operational
semantics of \stlc{} are given by the standard rules 
\begin{equation*}
  \inference{t \xto{} t'}{t \app{}{}s \to t' \app{}{}s} \qquad
  \inference{}{(\lambda x \c \tau_{1}.\,t) \app{}{}s \to t[s/x]}
\end{equation*}
Instantiated to \stlc{}, our main result (\Cref{th:lambdamain2}) reads as
follows:
\begin{theorem}
  \label{th:instant}
  Let $P$ be a predicate on typed $\lambda$-terms. In order to prove that $\logp
  P$ holds for all terms, it suffices to show that
  \begin{enumerate}
  \item Variables satisfy $P$ (e.g. all variables terminate).
  \item The unit expression $\Gamma \vdash \mathsf{e} \c \utype$ satisfies
    $P_{\utype}(\Gamma)$.
  \item For all terms of the form $p \app{}{} q$, with $\Gamma \vdash p
    \c {\arty{\tau_{1}}{\tau_{2}}}$ and $\Gamma \vdash q \c
    \tau_{1}$ such that $\logp_{\arty{\tau_{1}}{\tau_{2}}}
    P(\Gamma)(p)$ and $\logp_{\tau_{1}} P(\Gamma)(q)$, we have $P_{\tau_{2}}(\Gamma)(p
    \app{}{} q)$.
  \item For all terms of the form $\lambda x \c \tau_{1}.\,t$ with $\Gamma,x \vdash t
    \c \tau_{2}$ such that $t \in \logp_{\tau_{2}}P(\Gamma,x)$, then
    $P_{\arty{\tau_{1}}{\tau_{2}}}(\Gamma,x)(\lambda x \c \tau_{1}.\,t)$.
  \end{enumerate}
\end{theorem}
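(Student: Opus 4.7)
My plan is to deduce \Cref{th:instant} as the concrete unfolding of \Cref{thm:ind-up-to-blackquare} for \stlc{}. The key observation is that conditions~(1)--(4) spell out, summand by summand, precisely the hypothesis $\ol{\Sigma}(\logp P) \leq \iimg{\ini}{P}$ of \Cref{thm:ind-up-to-blackquare}, given the decomposition of $\Sigma$ in~\eqref{eq:lambdasigma} into variables $V$, the constant $K_1$, the binder $\delta^{\tau_1}_{\tau_2}$, and application $\coprod_\tau X_{\arty{\tau}{\sigma}}\times X_\tau$.

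The first step is to split $\ol{\Sigma}(\logp P)$ along this coproduct. In the presheaf category $(\Set^{\fset/\Tyl})^{\Tyl}$, coproducts in $\Pred{\C}$ are computed pointwise and coproduct injections are jointly epic, so $\ol{\Sigma}(\logp P) \leq \iimg{\ini}{P}$ holds iff it holds on each summand separately. A routine match-up then pairs the four summands with the four enumerated clauses: the variable summand with clause~(1) (the input $\logp P$ is trivial here because $V$ is constant in $X$), $K_1$ with clause~(2), the application summand with clause~(3), and $\delta^{\tau_1}_{\tau_2}$ with clause~(4); in each case one reads off how the initial algebra structure $\ini$ interprets the summand (as variable inclusion, the constant $\mathsf{e}$, application, and $\lambda$-abstraction, respectively). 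Assembling these, the hypothesis of \Cref{thm:ind-up-to-blackquare} is satisfied, which yields $\blacksquare^{\gamma,\ol{B}} P = \Lambda$.

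To conclude that $\logp P$ holds on all terms and not just $P$, I would unfold the definition $\blacksquare P = \iimg{\phi}{\ol{\llangle \logp P,\logp P\rrangle}}$: for any $t \in \Lambda_\tau(\Gamma)$, the substitution natural transformation $\phi(t)$ takes tuples of $\logp P$-values to a $\logp P$-value, and specializing to the identity substitution (each variable of $\Gamma$ mapped to itself) recovers $t \in \logp P_\tau(\Gamma)$, provided the variables themselves lie in $\logp P$. The latter follows from clause~(1) together with the fact that variables have degenerate operational behaviour in the higher-order GSOS law for \stlc{} (they are ``stuck''), so the invariance clause needed to pass from $P$ to $\logp P$ holds vacuously for them.

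I expect this last passage from $\blacksquare P = \Lambda$ to $\logp P = \Lambda$ on \emph{open} terms to be the main technical obstacle, since it requires verifying that $\phi$ interacts correctly with the lifting $\ol{\llangle -,-\rrangle}$ and that the ``stuck-variables'' argument genuinely places variables inside $\logp P$. The preceding summand-by-summand matching is by comparison a bookkeeping exercise, relying only on the concrete form~\eqref{eq:lambdasigma} of $\Sigma$ and on how $\ini$ acts on each summand.
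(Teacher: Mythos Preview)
Your proposal is correct and follows essentially the same approach as the paper: the paper explicitly presents \Cref{th:instant} as a corollary of the fundamental property $\blacksquare^{\gamma,\ol{B}}P=\Lambda$ (obtained via \Cref{thm:ind-up-to-blackquare} after unfolding its hypothesis into clauses (1)--(4), cf.\ \Cref{rem:up-to-blacksquare-concrete}), and then passes to $\logp P=\Lambda$ by exactly your argument --- variables satisfy $P$, are stuck, hence lie in $\logp P$, and evaluating $\phi(t)$ at the identity substitution recovers $t$. Your identification of the last step as the main technical point is also in line with the paper's emphasis.
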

Contrast the above with standard proofs via logical predicates, where the
goal is to inductively prove that the \emph{open
  extension} of $\logp P$ holds,
e.g. one would have to show that for a term $\lambda x \c \tau.\,t$ where $t$
satisfies the open extension of $\logp P$ and a substitution $\sigma$ that
satisfies $\logp P$, $(\lambda x \c \tau.\,t)[\sigma]$ satisfies $\logp P$, a
task decidedly more laborious.

\begin{remark}
  \Cref{th:instant} is a corollary of all $\lambda$-terms satisfying the \emph{open
    extension} of $\logp P$ (the \emph{fundamental property} of the logical
  predicate). This is because if variables satisfy $P$ then
  they necessarily satisfy $\logp P$, which suffices to show that $\logp P$ holds
  for all terms.
\end{remark}
We briefly explain why \Cref{th:main2} does not apply to the
$\lambda$-calculus. The syntax of \stlc{} is given in terms of a typed binding signature which,
following~\cite{DBLP:conf/fossacs/Fiore05,DBLP:conf/csl/FioreH10,
DBLP:journals/mscs/Fiore22}, corresponds to an endofunctor $\Sigma \c
(\Set^{\fset/{\Tyl}})^{\Tyl} \to (\Set^{\fset/{\Tyl}})^{\Tyl}$ in the category
of \emph{(covariant) presheaves on typed variable contexts}. In
particular~\cite[p. 1038]{DBLP:journals/mscs/Fiore22},
\begin{equation}
  \label{eq:lambdasigma}
  \begin{aligned}
    \Sigma_{\utype}X =
    &\; V_{\utype} + K_{1}
      + \coprod_{\tau \in \Tyl}X_{\arty{\tau}{\utype}}
      \times X_{\tau}\\
    \Sigma_{\arty{\tau_{1}}{\tau_{2}}} X=
    &\; V_{\arty{\tau_{1}}{\tau_{2}}} + \delta^{\tau_{1}}_{\tau_{2}} X
      + \coprod_{\tau \in \Tyl}X_{\arty{\tau}{\arty{\tau_{1}}{\tau_{2}}}}
                                           \times X_{\tau}
  \end{aligned}
\end{equation}
where $\delta^{\tau}$ is the ``variable binding'' endofunctor on type $\tau$ and
$V$ is the presheaf of \emph{variables}, which is defined in terms of the Yoneda
embedding: $V_{\tau}(\Gamma) \cong \{x \in |\Gamma| \mid \Gamma(x) = \tau\}$.
Category $(\Set^{\fset/{\Tyl}})^{\Tyl}$ is closed monoidal w.r.t. the so-called
\emph{substitution tensor}, whose unit is the presheaf $V$ and whose right
adjoint $\llangle \argument, Y \rrangle$, modelling \emph{simultaneous
  substitution}, is defined as follows~\cite[Eq. (4)]{DBLP:conf/csl/FioreH10}:
\begin{equation}
  \llangle X,Y \rrangle_{\tau}(\Gamma) = \Set^{\fset/{\Tyl}}\Bigl(\prod_{i \in
    |\Gamma|}X_{\Gamma(i)}, Y_{\tau}\Bigr).
\end{equation}
More details on $(\Set^{\fset/{\Tyl}})^{\Tyl}$ are provided in
\Cref{app:lambdacat}.
The higher-order abstract GSOS approach to the untyped
$\lambda$-calculus~\cite[\textsection 5]{gmstu23} carries over to the typed
case: there is a $V$-pointed higher-order GSOS law
\begin{equation}
  \label{eq:lambdarho}
  \rho_{X,Y} \c \Sigma (X \times \llangle X,Y \rrangle\times B(X,Y))\to \llangle X,\Sigma^\star (X+Y) \rrangle\times B(X, \Sigma^\star (X+Y)),
\end{equation}
defined in a way analogous to~\cite[Def. 5.8]{gmstu23}, of
$\Sigma$~\eqref{eq:lambdasigma} over $\llangle \argument, \argument \rrangle \times B(\argument,\argument)$ 
where $B(X,Y) = K_{1} + Y + D^{\lambda}(X,Y)$,
where $D^{\lambda} \c
\bigl((\Set^{\fset/{\Tyl}})^{\Tyl}\bigr)^{\opp} \times
(\Set^{\fset/{\Tyl}})^{\Tyl} \to (\Set^{\fset/{\Tyl}})^{\Tyl}$ is as follows:
  $D^{\lambda}_{\utype}(X,Y) = K_{1}$, $D^{\lambda}_{\arty{\tau_{1}}{\tau_{2}}}(X,Y) = X_{\tau_{2}}^{Y_{\tau_{1}}}$.
Ideally, we would look to apply \Cref{th:main2} on the $V$-pointed higher-order
GSOS law of \stlc{}. However, the fundamental problem we face here is that only the 
bifunctor $B$ is contractive, while $\llangle \argument, \argument \rrangle \times B(\argument,\argument)$ 
is not. 
Intuitively, should we try to prove strong normalization like in
\Cref{ex:strongnorm1}, we would face a familiar situation (see e.g. ~\cite[p.
9]{DBLP:journals/corr/abs-1907-11133} and \cite[p.
150]{DBLP:books/daglib/0005958}): the strengthening of
the inductive hypothesis that $\logp P$ provides does not cover $\lambda$-abstractions,
whose semantics requires the $\llangle \argument, \argument \rrangle$ part of the 
behaviour functor. Instead, we have to build~$\logp P$ for $B$ and subsequently
apply the standard trick of extending $\logp P$ to open terms.

Our notion of a $\lambda$-law identifies the general patterns of variable
binding and substitution, as well as the inner working of
$\lambda$-abstractions, and allows for our earlier result on induction up to
$\logp$ to carry over to languages with variable binding. The full theory of
$\lambda$-laws is presented in \Cref{app:monclosed}.
}

\section{Strong Normalization for Deterministic Systems, Abstractly}\label{sec:det}
The high level of generality in which the theory of logical predicates is
developed above enables reasoning uniformly about whole families of languages and behaviours. In this section, we narrow our focus to deterministic
systems and establish a general strong normalization criterion, which can be checked in concrete instances 
by mere inspection of the operational rules corresponding to higher-order abstract GSOS laws. 

Throughout this section, we fix a $0$-pointed higher-order GSOS law $\rho$ of a signature endofunctor $\Sigma\c \C\to \C$ over a behaviour bifunctor $B\c \C^\op\times \C\to \C$, where
\[ B(X,Y)=Y+D(X,Y) \quad\text{for some}\quad D\colon \C^\op\times \C\to \C. \]
For instance, the type functor \eqref{eq:beh} for $\stsc$ is of that form.
The operational model $\gamma\c \mS\to \mS+D(\mS,\mS)$ has an $n$-step extension $\gamma^{(n)}\colon \mS\to \mS+D(\mS,\mS)$, for each $n\in \Nat$, where $\gamma^{(0)}$ is the left coproduct injection and $\gamma^{(n+1)}$ is the composite
\[
\mS\xrightarrow{\gamma}\mS+D(\mS,\mS) \xrightarrow{\gamma^{(n)}+\id} \mS+D(\mS,\mS)+D(\mS,\mS) \xrightarrow{\id+\nabla} \mS+D(\mS,\mS).\]

We regard $D(\mS,\mS)$ as a predicate on $B(\mS,\mS)$ via the right coproduct injection, which is monic by extensivity of $\C$, and define the following predicates on $\mS$:
\[  
\Downarrow_n \,=\, (\gamma^{(n)})^\star[D(\mS,\mS)]\qqand \Downarrow\,=\,\bigor_{n} \Downarrow_n.
\]
In \stsc, these are the predicates of strong normalization or strong normalization after at most $n$ steps, resp. Accordingly, we define strong normalization abstractly as follows:

\begin{defn} The higher-order GSOS law $\rho$ is \emph{strongly normalizing} if $\Downarrow\,=\mS$.  
\end{defn}

We next identify two natural conditions on the law $\rho$ that together ensure strong normalization. The first roughly asserts that for a term $t=\f(x_1,\ldots,x_n)$ whose variables~$x_i$ are non-progressing, the term $t$ is either non-progressing or it progresses to a variable.

\begin{defn}
The higher-order GSOS law $\rho$ is \emph{simple} if its components $\rho_{X,Y}$ restrict to morphisms $\rho_{X,Y}^0$ as in the diagram below, where $\eta$ is the unit of the free monad $\Sigmas$:
\[
\begin{tikzcd}
\Sigma(X\times D(X,Y)) \ar[dashed]{r}{{\rho}_{X,Y}^0} \ar{d}[swap]{\Sigma(\id\times \inr)} & X+Y+D(X,\Sigmas(X+Y)) \ar{d}{\eta_{X+Y}+\id} \\
\Sigma(X\times (Y+D(X,Y)) \ar{r}{\rho_{X,Y}} & \Sigmas(X+Y)+D(X,\Sigmas(X+Y))
\end{tikzcd}
\]  
\end{defn}
The second condition asserts that the rules represented by the higher-order GSOS law remain sound when strong transitions are replaced by weak ones. In the following, the \emph{graph} of a morphism $f\colon A\to B$ is the image $\gra(f)\monoto A\times B$ of $\langle \id,f\rangle\colon A\to A\times B$.

\begin{defn}
The higher-order GSOS law $\rho$ \emph{respects weak transitions} if for every $n\in \Nat$, the graph of the composite below is contained in $\bigvee_k\, \gra(\gamma^{(k)}\comp \ini)$.
\[ \Sigma(\mS)\xrightarrow{\Sigma\langle \id,\gamma^{(n)}\rangle} \Sigma(\mS\times B(\mS,\mS)) \xrightarrow{\rho_{\mS,\mS}} B(\mS,\Sigmas(\mS+\mS)) \xrightarrow{B(\id, \hat{\ini}\comp \Sigmas\nabla)} B(\mS,\mS)
\]
\end{defn}
\noindent Note that the higher-order GSOS law for $\stsc$ is simple and respects weak transitions. Thus, strong normalization of $\stsc$ is an instance of the following strong normalization theorem for higher-order abstract GSOS. Concerning its conditions, an \emph{$\omega$-directed union} is a colimit of an $\omega$-chain $X_0\monoto X_1\monoto X_2\monoto\cdots$ of monics. We say that monos in $\C$ are \emph{$\omega$-smooth} if any such colimit has monic injections, and moreover for every compatible cocone of monos, the mediating morphism is monic. This property holds in every locally finitely presentable category~\cite[Prop. 1.62]{adamek_rosicky_1994}, e.g.\ sets, posets, or presheaves.

\begin{theorem}[Strong normalization]\label{thm:strong-normalization}
Suppose that the following conditions hold:
\begin{enumerate}
\item On top of \Cref{assumptions}, $\C$ is countably extensive, and monos are $\omega$-smooth.
\item $\Sigma$ preserves $\omega$-directed unions, and $D$ preserves monos in the second component.
\item $\rho$ is relatively flat, simple, and respects weak transitions.
\item $\Downarrow$ has a locally maximal logical refinement w.r.t.\ $\gamma$ and the canonical lifting $\ol{B}$.  
\end{enumerate}
Then the higher-order GSOS law $\rho$ is strongly normalizing.
\end{theorem}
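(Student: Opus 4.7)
The plan is to reduce the statement to the induction-up-to-$\square$ principle of \Cref{th:main2} and then exploit the simplicity and weak-transition respect of $\rho$ to close the resulting obligation. Since $\rho$ is relatively flat by~(3), and since the canonical lifting $\ol{B}$ is a predicate lifting of $\rho$ (using that $D$ preserves monos in the second argument, so that the pullback of \Cref{prop:liftingb} trivializes), \Cref{th:main2} together with condition~(4) tells us it suffices to prove
\[
    \ol{\Sigma}(\logp \Downarrow) \leq \iimg{\iota}{\Downarrow}.
\]
Thus the whole game becomes: showing that any term $\iota(\f(t_1,\ldots,t_n))$ whose immediate subterms sit in $\logp \Downarrow$ already reduces to a $D$-value in finitely many steps.

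The second step would be to decompose $\Downarrow$ along the $\omega$-chain of its approximants $\Downarrow_0 \leq \Downarrow_1 \leq \cdots$, using $\omega$-smoothness of monos from~(1) so that $\Downarrow = \bigvee_n \Downarrow_n$ is well-behaved and $\Sigma$-preservation of $\omega$-directed unions from~(2) gives
\[
  \ol{\Sigma}(\logp \Downarrow) \;\leq\; \ol{\Sigma}(\Downarrow)\;=\;\bigvee_n \ol{\Sigma}(\Downarrow_n).
\]
Combined with the fact that $\logp \Downarrow \leq \Downarrow$, this lets me assume without loss of generality that each argument $t_i$ lies in some $\Downarrow_{k_i}$, and by monotonicity pass to a common level $n = \max_i k_i$ at which $\gamma^{(n)}(t_i) \in D(\mS,\mS)$ for every $i$. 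Countable extensivity is needed here to commute the coproduct summand $D(\mS,\mS) \hookrightarrow \mS+D(\mS,\mS)$ through this analysis cleanly.

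The third step is the heart of the argument and is where simplicity and respect of weak transitions combine. Since all $\gamma^{(n)}(t_i)$ land in the $D(\mS,\mS)$ summand, simplicity applied to $\rho^0_{\mS,\mS}$ forces the output of $\rho_{\mS,\mS}\comp \Sigma\langle\id,\gamma^{(n)}\rangle$ on $\f(t_1,\ldots,t_n)$ to lie in the image of $\eta_{\mS+\mS} + \id$, i.e.\ to be either a pure variable from among the $t_i$ or a genuine $D$-value, with no intermediate $\Sigma$-structure. After flattening through $B(\id,\hat\iota\comp\Sigmas\nabla)$, the respect-of-weak-transitions condition says that the graph of the resulting map is contained in $\bigvee_k \gra(\gamma^{(k)}\comp\iota)$: whatever we read off is precisely $\gamma^{(k)}(\iota(\f(t_1,\ldots,t_n)))$ for some $k$. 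In the variable case this $k$-fold iterate equals some $t_j$, and since $t_j \in \logp \Downarrow \leq \Downarrow$ there is a further $\ell$ for which $\gamma^{(\ell)}(t_j) \in D(\mS,\mS)$; by associativity of the composition of $\gamma^{(\cdot)}$, we get $\gamma^{(k+\ell)}(\iota(\f(\vec t))) \in D(\mS,\mS)$. In the $D$-case we are done immediately at stage $k$. Either way, $\iota(\f(\vec t)) \in \Downarrow$.

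The main obstacle will be pedantically justifying the graph chase in the third step: one has to read off the dichotomy given by simplicity \emph{inside} the coproduct $\mS + D(\mS,\mS)$ and translate it into a genuine factorization through one of the two injections, which requires extensivity together with the $\omega$-smoothness of monos when combining with the join $\bigvee_k \gra(\gamma^{(k)}\comp\iota)$. Once this bookkeeping is in place, all remaining steps are routine monotonicity and pullback manipulations in $\Pred{\C}$.
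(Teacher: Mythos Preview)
Your overall strategy matches the paper's: reduce via \Cref{th:main2} to $\ol{\Sigma}(\sqdown)\leq\iimg{\iota}{\Downarrow}$, decompose $\Downarrow$ into its finite-stage approximants, and combine simplicity with respect of weak transitions. Two steps, however, need tightening before the argument goes through.

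First, the displayed inequality in your step~2 is too lossy. Writing $\ol{\Sigma}(\sqdown)\leq\bigvee_n\ol{\Sigma}(\Downarrow_n)$ discards the $\sqdown$-constraint on the arguments, yet in step~3 you rely on $t_j\in\sqdown$ to terminate the ``variable'' case. The correct decomposition keeps both: $\ol{\Sigma}(\sqdown)=\ol{\Sigma}(\sqdown\wedge\Downarrow)=\bigvee_n\ol{\Sigma}(\sqdown\wedge\Downarrow_n)$, using distributivity of meets over countable joins and $\Sigma$-preservation of $\omega$-directed unions. Your prose already gestures at this (``assume each $t_i\in\Downarrow_{k_i}$'' while still invoking $t_j\in\sqdown$), but the formal reduction must carry $\sqdown\wedge\Downarrow_n$ throughout.

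Second, and more seriously, simplicity at $X=Y=\mS$ only tells you the output of $\rho^0_{\mS,\mS}$ lands in $\mS+\mS+D(\mS,\Sigmas(\mS+\mS))$; it does \emph{not} tell you the $\mS+\mS$ part is one of the input $t_i$'s, nor that it lies in $\sqdown$. That conclusion needs (di)naturality of $\rho^0$: one must instantiate the law at $X=Y=\sqdown$, which in turn requires feeding coalgebra data in $\ol{D}(\sqdown,\sqdown)$ --- and that is available precisely because $\sqdown$ is logical, so $\gamma$ restricts to $\tilde\gamma\colon\sqdown\to\sqdown+\ol{D}(\sqdown,\sqdown)$ and iterates to $\tilde\gamma^{(n)}$. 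The paper carries this out as a diagram chase showing that the composite $g_n$ (built from $\Sigma\langle\id,\gamma^{(n)}\rangle$, $\rho_{\mS,\mS}$, and the flattening) factors through $\sqdown+D(\sqdown,\Sigmas(\sqdown))$ after pulling back along $D(m,\id)$; only then does the graph intersection with $\bigvee_k\gra(\gamma^{(k)}\comp\iota)$ and the associativity lemma for $\gamma^{(m+n)}$ (your ``further $\ell$'' step) finish the argument. Your sketch treats the $X$-summand correctly in spirit but silently assumes naturality, and it omits the $Y$-summand (poststates extracted from the $D$-data), which is handled by the same mechanism.
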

Recall that condition (4) holds if $\ol{B}$ is contractive (\Cref{thm:square-refinement}). The proof uses the induction-up-to-$\square$ technique and a careful categorical abstraction of \Cref{ex:tclnorm}.

\section{Conclusion and Future Work}
\label{sec:conclusion}

Our work presents the initial steps towards a unifying,
efficient theory of logical relations for higher-order languages based on
higher-order abstract GSOS. This theory can be broadened in various directions.
One obvious direction would be to extend our theory from
predicates to relations. Binary logical relations are often utilized as sound
(and sometimes complete) relations w.r.t. \emph{contextual equivalence}.
Additional generalizations are suggested by the large amount of
existing work on logical relations. One important direction is to
generalize the type system to cover, e.g., recursive types, parametric
polymorphism, or dependent types. Supporting recursive types will
presumably require an adaptation of the method of
step-indexing~\cite{5230591} to our abstract setting. Another point of
interest is to apply and extend our framework to effectful (e.g.\ probabilistic) 
settings~\cite{DBLP:journals/pacmpl/LagoG22,Pitts:1999:ORF:309656.309671}, including e.g.\ an effectful version of the criterion of \Cref{sec:det}.

As indicated in \Cref{rem:fibrations}, large parts of our development in \Cref{sec:logpred}
can be reformulated in fibrational terms. This has the potential merit
of enabling abstract reasoning about higher-order programs in metric
and differential settings as done in previous work on fine-grain
call-by-value~\cite{DBLP:conf/fscd/DagninoG22,DBLP:journals/corr/abs-2303-03271}. In
future work, we aim to develop such a generalization, and to explore
the connection between our weak transition semantics and the general
evaluation semantics used in \emph{op.\ cit.}

\clearpage

\renewcommand{\c}{\cedilla}

\bibliography{../mainBiblio}
\bibliographystyle{splncs04}

\clearpage
\appendix

\begin{adjustwidth}{-3em}{-3em}
\section{Omitted Details and Proofs}
\label{app:omitted-proofs}

\section*{Proof of \Cref{prop:bif}}
The category $\Set^\Ty$ has as objects $\Ty$-indexed families of sets, and as morphisms $\Ty$-indexed families of functions between them. The identities and composition are given componentwise by those of $\Set$. If we write $\Set(-,-) \colon \Set^\opp \times \Set \to \Set$ for the hom-functor of $\Set$,
the action of $D$ on morphisms $f = (f_\tau \colon X'_\tau \to X_\tau)_{\tau \in \Ty}$ and $g = (g_\tau \colon Y_\tau \to Y'_\tau)_{\tau \in \Ty}$ is given by
\[
  D_\utype(f,g) = \id_\term \qquad D_{\arty{\tau_{1}}{\tau_{2}}}(f,g) = \Set(f_{\tau_1},g_{\tau_2}).
\]
Functoriality of $D$ follows directly from that of the constant functor into $\term$ and of $\Set(-,-)$
The action of $B$ on morphisms $f$ and $g$ as above is simply given by
\[
  B_\tau(f,g) = g_\tau + D_\tau(f,g).
\]
Functoriality of $B$ follows from that of $+$, $D$, and $\Id_{\Set}$.

\section*{Full Definition of $\rho$ for \stsc{}}
\label{page:rho-stsc}

\begin{align*}
  \rho_{X,Y} \c \quad
  & \Sigma(X \times B(X,Y))
  & \to \quad
  & B(X, \Sigma^\star (X+Y)) \\
  \rho_{X,Y}(tr) = \quad
  & \texttt{case}~tr~\texttt{of} \\
  & \mathsf{e} & \mapsto \quad & * & \\
  & S_{\tau_{1},\tau_{2},\tau_{3}} & \mapsto \quad & \lambda t.S'(t) & \\
  & S'_{\tau_{1},\tau_{2},\tau_{3}}(p,f) & \mapsto \quad & \lambda t.S''(p,t) & \\
  & S''_{\tau_{1},\tau_{2},\tau_{3}}((p,f),(q,g))
  & \mapsto \quad
  & \lambda t.(p \app{}{} t)\app{}{}(q \app{}{} t) & \\
  & K_{\tau_{1},\tau_{2}} & \mapsto \quad & \lambda t.K'(t) & \\
  & K'_{\tau_{1},\tau_{2}}(p,f) & \mapsto \quad & \lambda t.p & \\
  & I_{\tau} & \mapsto \quad & \lambda t.t & \\
  & (p,f)\app{}{}(q,g) & \mapsto \quad
  &
    \begin{cases}
      f(q) \text{\quad if $f \c Y_{\tau_{2}}^{X_{\tau_{1}}}$} \\
      f \app{}{} q \text{\quad if $f \c Y_{\arty{\tau_{1}}{\tau_{2}}}$}
    \end{cases} &
\end{align*}

\section*{Proof of \Cref{prop:free-monad-lift}}

This proposition is best understood in the more general context of lifting free monads to arrow categories, which is discussed in \Cref{sec:free-monad-lift}.

\section*{Proof of \Cref{prop:liftingb}}
We defined $\overline B$ on objects, we now need to define it on morphisms in
such a way that the right-hand diagram in~\eqref{eq:liftingd} commutes. To this
end, consider objects $p \c P \pred{} X$, ${q \c Q \pred{} Y}$, $p' \c P' \pred{} X'$,
$q' \c Q' \pred{} Y$ of $\Pred\C$, and morphisms $f \colon p \to p'$, $g \colon q
\to q'$.
We prove that~$B(f,g)$ canonically defines an arrow in $\Pred \C$ from
$m_{P',Q}\colon \ol{B}(P',Q)\monoto B(X',Y)$ to $m_{P,Q'}\colon \overline B(P,Q')\monoto B(X,Y')$, by showing that there exists
a morphism ${B(f,g)}\mid_{\overline B(P',Q)} \colon \overline B(P',Q) \to
\overline B(P,Q')$ in $\C$ that makes the following square commutative:
\begin{equation*}
  \begin{tikzcd}[column sep=6em, row sep=5ex]
    \overline B(P',Q)
    \ar[r,dashed,"{B(f,g)}\mid_{\overline B(P',Q)}"]
    \ar[d,"{m_{P',Q}}"',tail]
    &
    \overline B(P,Q')
    \ar[d,"{m_{P,Q'}}",tail]
    \\
    B(X',Y)
    \rar["{B(f,g)}"] &
    B(X,Y')
  \end{tikzcd}
\end{equation*}
Consider the diagram below, where $T_{P',Q}$ is the pullback of $B(p',Y)$ against $B(P',q)$:
\begin{equation*}
  \begin{tikzcd}[column sep=2.5em, row sep=5ex]
    {T_{P',Q}}\ar[dr,"q", dashed]\ar[rr,"s_{P',Q}"]\ar[dd,"r_{P',Q}"'] &
    & {B(P',Q)}\dar["{B(f\mid_P,g\mid_Q)}"] \ar[dddr,bend left=40,"{B(\id,q)}"]
    \\[-2ex]
    &
    {T_{P,Q'}} \pullbackangle{-45}\dar["r_{P,Q'}"']\rar["s_{P,Q'}"]
    & {B(P,Q')}\dar["{B(\id,q')}"]
    \\
    B(X',Y)\rar["{B(f,g)}"]\ar[drrr,bend right=10, "{B(p',\id)}"] & {B(X,Y')}\rar["{B(p,\id)}"] & B(P,\id)\\[-4ex]
    &&&[2ex] B(P',Y)\ar[lu, "{B(f\mid_P,g)}"']
  \end{tikzcd}
\end{equation*}
The bottom and rightmost diagrams commute by functoriality of $B$, hence there is a unique $q$ that makes the top and leftmost diagrams commutative.
If we factorize $r_{P',Q}$ and $r_{P,Q'}$ then, we get that the outer diagram below commutes. Via diagonal fill-in we obtain the unique arrow in the middle that makes both inner square commutative, in particular the bottom one is what we desired.
\begin{equation*}
  \begin{tikzcd}[column sep=6em, row sep=5ex]
    T_{P',Q}
    \rar["q"]
    \dar["{e_{P',Q}}"', two heads]
    \arrow["r_{P',Q}"',rounded corners,
    to path={[pos=0.75]
			-| ([xshift=-.5cm]\tikztotarget.west)\tikztonodes
			-- (\tikztotarget)
    }]{dd}
    &
    T_{P,Q'}
    \dar["{e_{P,Q'}}", two heads]
    \arrow["r_{P,Q'}",rounded corners,
    to path={[pos=0.75]
      -| ([xshift=.5cm]\tikztotarget.east)\tikztonodes
      -- (\tikztotarget)
    }]{dd}
    \\
    \overline B(P',Q)
    \rar["{B(f,g)}\mid_{\overline B(P',Q)}",dashed]
    \dar["{m_{P',Q}}"', tail]
    &
    \overline B(P,Q')
    \dar["{m_{P,Q'}}", tail]
    \\
    B(X',Y)
    \rar["{B(f,g)}"] &
    B(X,Y')
  \end{tikzcd}
\end{equation*}
Functoriality of $\overline B$ so defined is a straightforward consequence of the uniqueness of the restriction-corestriction of $B(f,g)$ for arbitrary $f$ and $g$.

\section*{Details for \Cref{ex:logical-plotkin}}
  We now summarize a proof of $\DDar_{\tau}t$ for all $t \c \tau$ by
  structural induction on~$\Tr$. The logical predicate $\DDar$ was
  previously defined in \Cref{ex:log1}. Coinduction is not necessary for this
  proof, but extra work needs to be done in order to show that $\DDar$ is
  preserved and reflected by $\beta$-reduction (that is, if
  $\DDar_{\tau}t$ and $t \To t'$ then $\DDar_{\tau}(t')$ and
  vice-versa), as neither is
  immediate. This is in line with standard proofs of strong normalization that
  use this kind of logical predicates~\cite[p.8]{DBLP:journals/corr/abs-1907-11133}. 
  Here, we postulate the above fact and show some select cases.
  \begin{itemize}
  \item Case $I_{\tau} \c \arty{\tau}{\tau}$.
    \begin{enumerate}
    \item $I$ indeed terminates, as $I \xto{s} s$.
    \item It suffices to show $\DDar_{\tau}s \implies
      \DDar_{\tau}s$, which is a tautology.
    \end{enumerate}
  \item Case $t \app{}{}s$ with
    $\DDar_{\arty{\tau_{1}}{\tau_{2}}}t$ and
    $\DDar_{\tau_{1}}s$.
    \begin{enumerate}
    \item By the induction hypothesis, we know that $t$ eventually terminates,
      thus $t \xTo{s} t'$ and $t \app{}{}s \To t'$. Moreover, by the definition of
      $\DDar$, we conclude that $\DDar_{\tau_{2}} t'$. This means
      that $t'$ terminates, and thus so does $t \app{}{}s$.
    \item We just proved that $\DDar_{\tau_{2}} t'$ and $t \app{}{}s \To t'$.
      Logical predicate $\DDar$ is reflected by $\beta$-reductions, hence~$\DDar_{\tau_{2}} t\app{}{}s$.
    \end{enumerate}
  \item Case $S''_{\tau_{1},\tau_{2},\tau_{3}}(t,s) \c \arty{\tau_{1}}{\tau_{3}}$
    with $\DDar_{\arty{\tau_{1}}{\arty{\tau_{2}}{\tau_{3}}}}t$
    and $\DDar_{\arty{\tau_{1}}{\tau_{2}}} s$.
    \begin{enumerate}
    \item $S''(t,s)$ terminates, as $S''(t,s) \xto{r}
      (t \app{}{}r)\app{}{}(s\app{}{}r)$.
    \item Given $\DDar_{\tau_{1}}r$, we have to show
      $\DDar_{\tau_{3}}(t\app{}{}r)\app{}{}(s\app{}{}r)$. For this, all we have to
      do is repeat the steps in the case of application, first for
      for $t \app{}{}r$ and $s \app{}{}
      r$, and finally for $(t\app{}{}r)\app{}{}(s\app{}{}r)$.
    \end{enumerate}
  \item Case $S'_{\tau_{1},\tau_{2},\tau_{3}}(t)$ with
    $\DDar_{\arty{\tau_{1}}{\arty{\tau_{2}}{\tau_{3}}}}t$.
    \begin{enumerate}
    \item $S'(t)$ terminates, as $S'(t) \xto{s} S''(t,s)$.
    \item Given $\DDar_{\arty{\tau_{1}}{\tau_{2}}}s$, one needs to prove
      that $\DDar_{\arty{\tau_{1}}{\tau_{3}}}S''(t,s)$. We in turn repeat
      the steps in the case of $S''$.
    \end{enumerate}
  \end{itemize}

\section*{Details for \Cref{ex:logical-tait}}
\label{app:log}

  In order to show that
  $\mathrm{SN}$ is logical for $\tsup{\gamma}$, it suffices to show that
  \[
    \forall t \c \arty{\tau_{1}}{\tau_{2}}. \,
    \mathrm{SN}_{\arty{\tau_{1}}{\tau_{2}}}(t) \implies
    \forall s \c \tau_{1}.\,
    \mathrm{SN}_{\tau_{1}}(s) \land (t \xTo{s} t' \implies
    \mathrm{SN}_{\tau_{2}}(t')).
  \]
  By definition of $\mathrm{SN}$, we know that $\mathrm{SN}_{\tau_{2}}(t \app{}{} s)$. In addition, as $t$ terminates, it is clear that $t \app{}{}s \To t'$. As
  such, it suffices to show that $\mathrm{SN}$ is closed under finite sequences
  of $\beta$-reductions. By definition of $\widetilde{\gamma}$ and
  ignoring the obvious reflexive case, we need to show that
  \[
    \forall t,p,t'\c \tau.\, \mathrm{SN}_{\tau}(t) \land t \to p \land p \To t'
    \implies \mathrm{SN}_{\tau}(t').
  \]
  Should $\mathrm{SN}_{\tau}(p)$, then induction on $\To$ would complete the
  proof. As such, it suffices to show that~$\mathrm{SN}$ is closed under (small-step)
  $\beta$-reductions, which is done by structural induction on $\Ty$ and it
  amounts to proving that
  \[
    \forall t,t'\c  \tau.\, \mathrm{SN}_{\tau}(t) \land t \to t' \implies {\Downarrow}_\tau\, t'
  \]
  and
  \[
    \forall t,t'\c \arty{\tau_{1}}{\tau_{2}}.\,
    \mathrm{SN}_{\arty{\tau_{1}}{\tau_{2}}}(t)  \land t \to t' \implies ( \forall
    s.\,\mathrm{SN}_{\tau_{1}}(s)
    \implies \mathrm{SN}_{\tau_{2}}(t' \app{}{}s)).
  \]
  The former is true as ${\Downarrow}_\tau\, t$ and the semantics is
  deterministic. For the latter, note that by~$\mathrm{SN}_{\arty{\tau_{1}}{\tau_{2}}}(t)$ we have that
  $\mathrm{SN}_{\tau_{2}}(t \app{}{}s)$ and $t \app{}{}s \to t' \app{}{}s$ by the
  definition of $\beta$-reduction. Type
  $\tau_{2}$ is structurally smaller than $\arty{\tau_{1}}{\tau_{2}}$ and thus
  induction on $\Ty$ gives $\mathrm{SN}_{\tau_{2}}(t' \app{}{}s)$. %

\section*{Proof of \Cref{pro:square}}

\begin{lem}\label{lem:int_cont}
Given $P,Q,S\pred{} X$, $d(P\land S,Q\land S)\leq d(P,Q)$. 
\end{lem}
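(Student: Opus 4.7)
The plan is to apply condition~(3) of the definition of predicate-contractivity to the two families $(P, S)$ and $(Q, S)$ indexed by $\{1,2\}$. Their meets are precisely $P \wedge S$ and $Q \wedge S$, so whenever the condition applies we immediately obtain
\[
d(P\wedge S,\, Q\wedge S) \leq \sup\bigl\{d(P,Q),\, d(S,S)\bigr\} = d(P,Q),
\]
which is the statement of the lemma.

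The main obstacle is the co-well-orderedness hypothesis in condition~(3): for a two-element family it reduces to demanding that the two values be comparable in $\Pred[X]{\C}$, which need not hold for arbitrary $P$ and $S$. I would resolve this either by reading ``co-well-ordered'' as a constraint on the indexing set (in which case every finite indexing qualifies and the argument above goes through verbatim), or, failing that, by detouring through condition~(2). For the latter, let $s\colon S \monoto X$ be the mono representing $S$; since pullback preserves meets and $\iimg{s}{S}$ is the top of $\Pred[S]{\C}$, we have $\iimg{s}{P\wedge S} = \iimg{s}{P}$ and analogously $\iimg{s}{Q\wedge S} = \iimg{s}{Q}$. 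Non-expansiveness of $\iimg{s}{-}$ then yields $d(\iimg{s}{P}, \iimg{s}{Q}) \leq d(P,Q)$ in $\Pred[S]{\C}$, and identifying the metric on $\{T\leq S\}\subseteq \Pred[X]{\C}$ with that on $\Pred[S]{\C}$---automatic in the running example $\Set^\Ty$, where the ultrametric depends only on the types on which predicates differ, and expected in general from the fibrational structure of $\under{-}\colon\Pred{\C}\to\C$ alluded to in~\Cref{rem:fibrations}---carries this back to the desired $d(P\wedge S, Q\wedge S) \leq d(P,Q)$ in $\Pred[X]{\C}$.
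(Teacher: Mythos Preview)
Your first approach via condition~(3) does not go through: the paper explicitly defines a family $(P^i)_{i\in I}$ to be co-well-ordered when every nonempty subfamily has a greatest element \emph{in $\Pred[X]{\C}$}. For a two-element family $(P,S)$ this forces $P$ and $S$ to be comparable, which is not assumed. Your suggested re-reading of ``co-well-ordered'' as a constraint on the indexing set contradicts the paper's own definition, so that escape route is closed.

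Your second approach via condition~(2) is exactly what the paper does. The paper's proof is a single line: observing $P\wedge S = \iimg{s}{P}$ and $Q\wedge S = \iimg{s}{Q}$ for $s\colon S\monoto X$, non-expansiveness of $\iimg{s}{-}$ finishes. You are in fact more careful than the paper here: you correctly note that $\iimg{s}{P}$ literally lives in $\Pred[S]{\C}$ while $P\wedge S$ lives in $\Pred[X]{\C}$, and that an identification of the two metrics (via the bijection $\fimg{s}{-}$ between $\Pred[S]{\C}$ and $\{T\leq S\}$) is being used implicitly. The paper glosses over this point, writing the equality $P\wedge S = \iimg{s}{P}$ without comment. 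So your second route matches the paper's, and your remark about the tacit identification is an accurate observation about both arguments.
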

\begin{proof}
Indeed, since $P\land S = \iimg{s}{P}$ and $Q\land S = \iimg{s}{Q}$ where $s\c S\pred{} X$,
we are done by~\Cref{ass:contr}~(2).
\end{proof}

By assumption, there is $u\in [0,1)$, such that for every $p\c P\pred{} X$ 
\begin{align*}
d(\ol{B}(S,P),\ol{B}(S',P))\leq u\cdot d(S,S').
\end{align*}
We will show that for every ordinal $\alpha$, 
\begin{align}\label{eq:ord_cont}
  d(\invp^\alpha(S, P),\invp^\alpha(S', P))\leq u\cdot d(S,S'),
\end{align} 
where $\invp^{\alpha+1}(S, P) = P \land \iimg{c}{\overline{B}(S,\invp^\alpha(S, P))}$,
and $\invp^\alpha(S, P) = \bigand_{\beta<\alpha} \invp^\beta(S, P)$ for every limit ordinal~$\alpha$,
in particular, ${\invp^0(S, P) = \top}$. Using the fact that $\invp(S, P) = \bigand_{\alpha}\invp^\alpha(S, P)$
where the meet is taken over all ordinals $\alpha$, we will be able to conclude 
that $\invp$ is contractive in the first argument as follows:
\begin{flalign*}
&& d(\invp(S, P),\invp(S', P)) &\;= d\Bigl(\bigand_{\alpha}\invp^\alpha(S, P),\bigand_{\alpha}\invp^\alpha(S', P)\Bigr)&\\
&&  &\;\leq \sup_{\alpha} d(\invp^\alpha(S, P),\invp^\alpha(S', P))&\by{\Cref{ass:contr}~(3)}\\
&&  &\;\leq u\cdot d(S,S').&\by{\eqref{eq:ord_cont}}
\end{flalign*}
We proceed to show~\eqref{eq:ord_cont} by transfinite induction over $\alpha$. 
\begin{itemize}
  \item \emph{successor ordinals:} Note that
  \begin{flalign*}
    &&d(\invp^{\alpha+1}(S, P),&\invp^{\alpha+1}(S', P)) \\*
    &&=&\; d(P \land \iimg{c}{\overline{B}(S,\invp^\alpha(S, P))}, P \land \iimg{c}{\overline{B}(S',\invp^\alpha(S', P))})\\
    &&\leq&\; d(\iimg{c}{\overline{B}(S,\invp^\alpha(S, P))},\iimg{c}{\overline{B}(S',\invp^\alpha(S', P))})&\by{\Cref{lem:int_cont}}\\
    &&\leq&\; d({\overline{B}(S,\invp^\alpha(S, P))},{\overline{B}(S',\invp^\alpha(S', P))})&\by{\Cref{ass:contr}~(2)}\\
    &&\leq&\; \max\{d({\overline{B}(S,\invp^\alpha(S, P))},{\overline{B}(S',\invp^\alpha(S, P))}),&\\
    &&&\hspace{2.6em}
                  d({\overline{B}(S',\invp^\alpha(S, P))},{\overline{B}(S',\invp^\alpha(S', P))})\},&\by{ultrametric inequality}
  \end{flalign*}
and we are left to show that 
\begin{align*}
d({\overline{B}(S,\invp^\alpha(S, P))},{\overline{B}(S',\invp^\alpha(S, P))})\leq&\; u\cdot d(S,S'),\\
d({\overline{B}(S',\invp^\alpha(S, P))},{\overline{B}(S',\invp^\alpha(S', P))})\leq&\; u\cdot d(S,S').
\end{align*}
The first inequality is due to the uniform contractivity assumption for $\ol{B}.$
The second inequality is by non-expansiveness of $\overline{B}(S',\argument)$, combined 
with the induction hypothesis.
  \item \emph{limit ordinals:}    
  \begin{flalign*}
    &&  d(\invp^{\alpha}(S, P),\invp^{\alpha}(S', P)) 
    &\;= d\Bigl(\bigand_{\beta<\alpha} \invp^\beta(S, P),\bigand_{\beta<\alpha} \invp^\beta(S', P)\Bigr)\\
    &&  &\;\leq \sup_{\beta<\alpha} d(\invp^\beta(S, P),\invp^\beta(S', P))&\by{\Cref{ass:contr}~(3)}\\
    &&  &\;\leq u\cdot d(S,S')&\by{induction hypothesis}
  \end{flalign*}
\end{itemize}
It is shown analogously that $\invp^{\ol{B},c}$ is non-expansive in the second 
argument.

\section*{Proof of \Cref{th:main2}}

Again, we omit the superscripts $\gamma,\ol{B}$ to ease the notation.
Assume that $\fimg{\iota}{\ol{\Sigma} \logp P} \leq P $. 
Since~$\logp P$ is a greatest invariant relative to itself that implies $P$, 
to show $\logp P =\top$, it suffices to show that
$\fimg{\iota}{\ol{\Sigma} \logp P} \leq \square P$, for then $\square P =\top$.
To show $\fimg{\iota}{\ol{\Sigma} \logp P} \leq \square P$, we establish
\[
  \fimg{(\iota \comp \inj_j)}{\ol{\Sigma}_{j}\logp P } \leq \logp P.
\]
for all $j \in J$, by well-founded induction. Specifically, we proceed to prove
that $\logp P \lor \fimg{(\iota \comp \inj_j)}{\ol{\Sigma}_{j} \logp P} \pred{} \mS$ is a
$\logp P$-relative invariant, which is sufficient. By preservation of unions under direct images,
it suffices to show
\begin{align*}
  \fimg{\gamma}{\logp P} \leq \overline{B}(\logp P,\logp P \lor
  \fimg{(\iota \comp \inj_j)}{\ol{\Sigma}_{j} \logp P}), \text{and}\\*
  \fimg{(\gamma \comp \iota \comp \inj_j)}{\ol{\Sigma}_{j} \logp P}
  \leq \overline{B}(\logp P,\logp P \lor
  \fimg{(\iota \comp \inj_j)}{\ol{\Sigma}_{j} \logp P}).
\end{align*}
The first inequality follows from the definition of $\logp P$. The second one
states that $\gamma \comp \iota \comp \inj_j$ extends to a morphism of predicates,
i.e.\ that the diagram
\[
  \begin{tikzcd}
    \ol{\Sigma}_{j} \logp P
    \ar[dashed]{rr}{}
    \ar[tail]{d}{}
    &[3em]
    & \overline{B}(\logp P,\logp P \lor
    \fimg{(\iota \comp \inj_j)}{\ol{\Sigma}_{j} \logp P})
    \dar[tail]{}
    \\
    \Sigma_{j}(\mS)
    \ar{r}{\iota \comp \inj_j}
    & \mS
    \rar{\gamma}
    & B(\mS,\mS)
  \end{tikzcd}
\]
commutes for a suitably chosen top horizontal morphism.
The requisite morphism is entailed by commutativity of the following diagram:
\[
  \adjustbox{scale=0.75,center}{
    \begin{tikzcd}[sep=scriptsize]
      {\ol\Sigma_{j}\logp P} &[-2em]& {\Sigma_{j} \mS} &[2em] \Sigma\mS &[-3em] \mS \\
      \\
      {\ol\Sigma_{j}(\logp P \times\ol{B}(\logp P,\logp P))} && {\Sigma_{j}(\mS \times B(\mS,\mS))} & {\Sigma(\mS \times B(\mS,\mS))} \\
      \\
      {\ol{B}(\logp P, \ol\Sigma^{\star}_{\prec j}(\logp P + \logp P) + \ol\Sigma_{j}\ol\Sigma^{\star}_{\prec j}(\logp P + \logp P))} && {B(\mS,\Sigma^{\star}_{\prec j}(\mS + \mS) + \Sigma_{j}\Sigma^{\star}_{\prec j}(\mS + \mS))} & {B(\mS,\Sigma^{\star}(\mS + \mS))} \\
      \\
      {\ol{B}(\logp P, \ol\Sigma^{\star}_{\prec j}\logp P + \ol\Sigma_{j}\ol\Sigma^{\star}_{\prec j}\logp P))} && {B(\mS,\Sigma^{\star}_{\prec j}\mS + \Sigma_{j}\Sigma^{\star}_{\prec j}\mS)} & {B(\mS,\Sigma^{\star}\mS)} & {} \\
      \\
      {\ol{B}(\logp P, \logp P + \ol\Sigma_{j} \logp P)} && {B(\mS,\mS + \Sigma_{j}\mS)}\\
      \\
      {\overline{B}(\logp P,\logp P \lor
        \fimg{(\iota \comp \inj_j)}{\ol{\Sigma}_{j} \logp P})} &&&& {B(\mS,\mS)}
      \arrow[tail, from=1-1, to=1-3]
      \arrow[dashed, from=1-1, to=3-1]
      \arrow[dashed, from=3-1, to=5-1]
      \arrow[dashed, from=5-1, to=7-1]
      \arrow[dashed, from=9-1, to=11-1]
      \arrow["{\Sigma_{j}\langle\mathrm{id},\gamma\rangle}"', from=1-3, to=3-3]
      \arrow["{\rho^{j}_{\mS,\mS}}"', from=3-3, to=5-3]
      \arrow["{B(\mS,\Sigma^{\star}_{\prec{j}}\nabla + \Sigma_{j}\Sigma^{\star}_{\prec{j}}\nabla)}"', from=5-3, to=7-3]
      \arrow[tail, from=7-1, to=7-3]
      \arrow[tail, from=3-1, to=3-3]
      \arrow[tail, from=5-1, to=5-3]
      \arrow["{(1)}"{description}, draw=none, from=1-1, to=3-3]
      \arrow["{(2)}"{description}, draw=none, from=3-1, to=5-3]
      \arrow["{(3)}"{description}, draw=none, from=5-1, to=7-3]
      \arrow["{\inj_j}", from=1-3, to=1-4]
      \arrow["\iota", from=1-4, to=1-5]
      \arrow["{\Sigma\langle\id,\gamma\rangle}", from=1-4, to=3-4]
      \arrow["{\inj_{j}}", from=3-3, to=3-4]
      \arrow["{(5)}"{description}, draw=none, from=1-3, to=3-4]
      \arrow[dashed, from=7-1, to=9-1]
      \arrow[tail, from=9-1, to=9-3]
      \arrow["\gamma", from=1-5, to=11-5]
      \arrow["{\qquad (9)}"{description}, draw=none, from=1-4, to=11-5, pos=.3]
      \arrow["{B(\mS, \hat \iota)}", from=7-4, to=11-5]
      \arrow["{\rho_{\mS,\mS}}"', from=3-4, to=5-4]
      \arrow["{B(\mS,\Sigma^{\star}\nabla)}", from=5-4, to=7-4]
      \arrow["{(6)}"{description}, draw=none, from=3-3, to=5-4]
      \arrow["{B(\mS,[\id,\;\iota\comp \inj_j])}", from=9-3, to=11-5]
      \arrow["{B(\mS,e_{j})}", from=5-3, to=5-4]
      \arrow["{B(\mS, e_{j})}", from=7-3, to=7-4]
      \arrow["{(7)}"{description}, draw=none, from=5-3, to=7-4]
      \arrow["{(8)\hspace{3em}}"{pos=0.4}, shift left=1, draw=none, from=7-3, to=11-5]
      \arrow["{(4)\hspace{2em}}"{description}, draw=none, from=7-1, to=9-3]
      \arrow["{B(\mS,\;\hat\iota\comp\inj^{\klstar}_{\prec j}+\Sigma_{j}(\hat\iota\comp\inj^{\klstar}_{\prec j}))}"', from=7-3, to=9-3]
      \arrow[from=11-1,to=11-5]
      \arrow["{(10)\hspace{5em}}"{description}, draw=none, from=9-1, to=11-5]
    \end{tikzcd}}
\]
Here, (1) commutes by definition of $\square P$ and
functoriality of $\ol\Sigma_j$, (2) commutes by the assumption that $\rho^j$ lifts
to $\Pred[X]{\C}$, (3) by functoriality of the involved functors and the simple 
fact that~$\nabla$ lifts, and (4) commutes by functoriality and by noting that the 
induction hypothesis entails $\fimg{(\hat\iota \comp \inj_{\prec j})}{\ol{\Sigma}_{\prec j}^\klstar\logp P } \leq \logp P$.
Furthermore, (5) commutes by naturality, (6) by definition of $\rho$, (7) by naturality of $e_{i}$ and functoriality of $\ol B$, (8) by the definition of $e_{i}$, and (9) by the definition of $\gamma$.
Finally, (10) commutes, because coproducts lift to $\Pred[X]{\C}$.

\section*{Proof of \Cref{thm:strong-normalization}.}

\begin{rem}
We shall need some general properties of predicates in (countably) extensive categories that are collected in \Cref{sec:predicates}.
\end{rem}

\begin{rem}\label{rem:c-sigma-props}
Under the conditions of \Cref{thm:strong-normalization} the canonical predicate lifting $\ol{B}$ is given by
\[ \ol{B}(P,Q) = Q+\ol{D}(P,Q) \]
for predicates $p\c P\monoto X$ and $q\c Q\monoto Y$, where $\ol{D}$ is the canonical lifting of $D$. Indeed, since $D$ preserves monomorphisms in the second component, the object $\ol{D}(P,Q)$ is simply the pullback of $D(p,\id)$ and $D(\id,q)$. Letting $r_{P,Q}$ and $s_{P,Q}$ denote the projections, and using that pullbacks commute with coproducts in extensive categories (\Cref{rem:extensive}\ref{rem:extensive:pullbacks-commute-with-coproducts}), the following is a pullback:
\[
\begin{tikzcd}
Q+\ol{D}(P,Q) \pullbackangle{-45} \ar{r}{\id+s_{P,Q}} \ar[tail]{d}[swap]{q+r_{P,Q}} & Q+D(P,Q) \ar[tail]{d}{q+D(\id,q)} \\
Y+D(X,Y) \ar{r}{\id+D(p,\id)} & Y+D(P,Y) 
\end{tikzcd}
\]
Note that the functor $B$ preserves monomorphisms in the second component because $D$ does and monomorphisms are stable under coproducts by extensivity of $\C$ (\Cref{rem:extensive}\ref{rem:extensive:pullbacks-commute-with-coproducts}).
 Hence the above diagram is precisely the pullback defining $\ol{B}(P,Q)$.
\end{rem}

We make use of the following lemma:

\begin{lem}\label{lem:gamma-n-comp}
  The following diagram commutes for every $m,n\in \Nat$:
  \begin{equation}\label{eq:gamma-m-n}
    \begin{tikzcd}
      \mS \ar{r}{\gamma^{(m+n)}} \ar{d}[swap]{\gamma^{(m)}} & \mS+D(\mS,\mS) \\
      \mS+D(\mS,\mS) \ar{r}{\gamma^{(n)}} & \mS+D(\mS,\mS)+D(\mS,\mS) \ar{u}[swap]{\id+\nabla}
    \end{tikzcd}
  \end{equation}
\end{lem}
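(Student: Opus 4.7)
The plan is to exploit the observation that the $n$-step extensions $\gamma^{(n)}\colon \mS\to \mS + D(\mS,\mS)$ are the $n$-fold Kleisli iterates of $\gamma$ with respect to the exception monad $T = (-)+D(\mS,\mS)$ on $\C$, whose unit and multiplication are $\eta = \inl$ and $\mu = \id+\nabla$, respectively. Writing $f^\sharp = \mu\circ Tf = (\id+\nabla)\circ (f+\id)\colon T\mS\to T\mS$ for the Kleisli extension of a map $f\colon \mS\to T\mS$, the recursive definition of $\gamma^{(n+1)}$ from the excerpt becomes precisely $\gamma^{(n+1)} = (\gamma^{(n)})^\sharp\circ \gamma$, while the commutativity of~\eqref{eq:gamma-m-n} translates into the single equation $\gamma^{(m+n)} = (\gamma^{(n)})^\sharp\circ \gamma^{(m)}$.

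I would prove this equation by induction on $m$. The base case $m=0$ is the monad unit law $f^\sharp\circ \eta = f$ applied to $f = \gamma^{(n)}$, since $\gamma^{(0)} = \inl = \eta$. For the inductive step, starting from the identity $\gamma^{(m+1+n)} = \gamma^{((m+n)+1)} = (\gamma^{(m+n)})^\sharp \circ \gamma$, I would apply the induction hypothesis to rewrite $\gamma^{(m+n)} = (\gamma^{(n)})^\sharp \circ \gamma^{(m)}$, and then invoke the standard monad identity $(g^\sharp\circ f)^\sharp = g^\sharp \circ f^\sharp$ to obtain $(\gamma^{(m+n)})^\sharp = (\gamma^{(n)})^\sharp \circ (\gamma^{(m)})^\sharp$. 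Composing with $\gamma$ then yields $\gamma^{(m+1+n)} = (\gamma^{(n)})^\sharp \circ (\gamma^{(m)})^\sharp \circ \gamma = (\gamma^{(n)})^\sharp \circ \gamma^{(m+1)}$, as required.

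There is no serious obstacle: the argument reduces entirely to the associativity and unit laws of the exception monad, which are themselves elementary consequences of the naturality of $\nabla$ and the associativity of coproducts, and hold in any category with finite coproducts (\emph{a fortiori} in our extensive base $\C$). An equivalent but more bookkeeping-heavy proof could be given by direct diagram chasing without introducing the monad abstraction, but the Kleisli viewpoint makes the iterative structure transparent and obviates the need to manipulate the coproduct reassociator explicitly.
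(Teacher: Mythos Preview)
Your proof is correct and follows the same inductive skeleton as the paper: induction on $m$, with the base case $m=0$ given by $\gamma^{(0)}=\inl$ and the inductive step unwinding $\gamma^{(m+1+n)}=\gamma^{((m+n)+1)}$ via the recursive definition. The paper carries this out by an explicit diagram chase---writing down a large commutative rectangle and verifying the lower portion by precomposing with the three coproduct injections of $\mS+D(\mS,\mS)+D(\mS,\mS)$---whereas you recognise the iterates $\gamma^{(n)}$ as Kleisli powers for the exception monad $(-)+D(\mS,\mS)$ and reduce everything to the unit and associativity laws. This is exactly the ``bookkeeping-heavy'' alternative you anticipate in your last paragraph; your packaging is cleaner, but the two arguments are the same proof at different levels of abstraction.
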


\begin{proof}
By induction on $m$. The case $m=0$ is clear because $\gamma^{(0)}=\inl$. The induction step $m\to m+1$ follows from the commutative diagram below. Its upper part commutes by induction. To see that the lower part commutes, it suffices to consider its precomposition with the three coproduct injections of $\mS+D(\mS,\mS)+D(\mS,\mS)$. The composite with the first injection yields $\gamma^{(n)}\c \mS\to\mS+D(\mS,\mS)$, and the composite with both the second and third injection yields $\inr\colon D(\mS,\mS)\to \mS+D(\mS,\mS)$.
\[
\begin{tikzcd}[scale cd =.6, column sep=2em]
\mS \ar{r}{\gamma} \ar{d}[swap]{\gamma} & \mS+D(\mS,\mS) \ar{r}{\gamma^{(m+n)}+\id} & \mS+D(\mS,\mS)+D(\mS,\mS) \ar{r}{\id+\nabla} & \mS+D(\mS,\mS) \\
\mS+D(\mS,\mS) \ar[equals]{ur} \ar{d}[swap]{\gamma^{(m)}+\id} & & \mS+D(\mS,\mS)+D(\mS,\mS)+D(\mS,\mS) \ar{u}[swap]{\id+\nabla+\id} & \\
\mS+D(\mS,\mS)+D(\mS,\mS) \ar{d}[swap]{\id+\nabla} \ar{urr}{\gamma^{(n)}+\id+\id} & & & \\
\mS+D(\mS,\mS) \ar{rrr}{\gamma^{(n)}+\id} & & & \mS+D(\mS,\mS)+D(\mS,\mS) \ar{uuu}[swap]{\id+\nabla}
\end{tikzcd}
\]
\end{proof}

\begin{notation}
\begin{enumerate}
\item We denote the pullbacks defining the predicates $\Downarrow_n$ and $\sqdown\wedge{\Downarrow_n}$ as follows:
\[
\begin{tikzcd}
\Downarrow_n \pullbackangle{-45} \ar{r}{\ol{\gamma}^{(n)}} \ar[tail]{d}[swap]{m_n} & D(\mS,\mS) \ar[tail]{d}{\inr} \\
\mS \ar{r}{\gamma^{(n)}} & \mS+D(\mS,\mS)
\end{tikzcd}
\qquad
\begin{tikzcd}
\sqdown\,\wedge\Downarrow_n \pullbackangle{-45} \ar[tail]{r}{\ol{m}_n} \ar[tail]{d}[swap]{\wt{m}_n} & \sqdown \ar[tail]{d}{m} \\
\Downarrow_n \ar[tail]{r}{m_n} & \mS
\end{tikzcd}
\qquad
\]
\item Let $\tilde{\gamma}\colon \sqdown\to \sqdown + \ol{D}(\sqdown,\sqdown)$ witness that $\sqdown$ is a logical predicate. For $n\in\Nat$ we define
\[ \tilde{\gamma}^{(n)}\colon \sqdown\to \sqdown+\ol{D}(\sqdown,\sqdown)  \]
recursively by
\begin{align*}
\tilde{\gamma}^{(0)} &= (\,\sqdown \xto{\inl} \sqdown+\ol{D}(\sqdown,\sqdown\,),\\
 \tilde{\gamma}^{(n+1)}& =(\, \sqdown\xto{\tilde{\gamma}}\sqdown+\ol{D}(\sqdown,\sqdown) \xto{\tilde{\gamma}^{(n)}+\id} \sqdown+\ol{D}(\sqdown,\sqdown)+\ol{D}(\sqdown,\sqdown) \xto{\id+\nabla} \sqdown+\ol{D}(\sqdown,\sqdown)  \,).
\end{align*} 
One easily verifies by induction on $n$ that $\wt{\gamma}^{(n)}$ satisfies the following commutative diagram:
\begin{equation}\label{eq:gamma-n-diag}
\begin{tikzcd}
\sqdown \ar{r}{\wt{\gamma}^{(n)}} \ar[tail]{d}[swap]{m} & \sqdown+\ol{D}(\sqdown,\sqdown) \ar[tail]{d}{m+r_{\sqdown,\sqdown}} \\
\mS \ar{r}{\gamma^{(n)}} & \mS+D(\mS,\mS)
\end{tikzcd}
\end{equation}
\item Finally, by extensivity of $\C$ the lower rectangle in the diagram on the left below is a pullback, and commutativity of the outside follows from the diagram on the right. Hence we obtain a unique $\dot{\gamma}^{(n)}$ making the diagram on the left commute. 
\[
\begin{tikzcd}[column sep=2em, scale cd=.9]
\sqdown\wedge{\Downarrow_n} \ar[dashed]{dr}{\dot{\gamma}^{(n)}} \ar{dd}[swap]{\wt{m}_n} \ar{rr}{\ol{m}_n} & & \sqdown \ar{d}{\wt{\gamma}^{(n)}} \\
& \ol{D}(\sqdown,\sqdown) \pullbackangle{-45} \ar{r}{\inr} \ar{d}[swap]{r_{\sqdown,\sqdown}} & \sqdown+\ol{D}(\sqdown,\sqdown) \ar{d}{m+r_{\sqdown,\sqdown}} \\
\Downarrow_n \ar{r}{\ol{\gamma}^{(n)}} & D(\mS,\mS) \ar{r}{\inr} & \mS+D(\mS,\mS) 
\end{tikzcd}
\quad
\begin{tikzcd}[column sep=2em, scale cd=.9]
\sqdown\wedge{\Downarrow_n} \ar{dd}[swap]{\wt{m}_n} \ar{rr}{\ol{m}_n} & & \sqdown \ar{d}{\wt{\gamma}^{(n)}} \ar{dl}[swap]{m} \\
&  \mS \ar{dr}[swap]{\gamma^{(n)}} \ar[phantom]{r}[description]{\text{\eqref{eq:gamma-n-diag}}} & \sqdown+\ol{D}(\sqdown,\sqdown) \ar{d}{m+r_{\sqdown,\sqdown}} \\
\Downarrow_n \ar{ur}{m_n} \ar{r}{\ol{\gamma}^{(n)}} & D(\mS,\mS) \ar{r}{\inr} & \mS+D(\mS,\mS) 
\end{tikzcd}
\]
\end{enumerate}
\end{notation}

\begin{proof}[Proof of \Cref{thm:strong-normalization}]
By \Cref{th:main2}, it suffices to prove ${\ol{\Sigma}(\sqdown)}\leq \, \iimg{\ini}{\Downarrow}$. The left-hand side of this inequality can be expressed as
\[ {\ol{\Sigma}(\sqdown)} = {\ol{\Sigma}(\sqdown\wedge {\Downarrow})}
= {\ol{\Sigma}(\sqdown\wedge (\vee_n \Downarrow_n))}
= {\ol{\Sigma}(\vee_n(\sqdown\wedge{\Downarrow_n}))} 
= {\vee_n \ol{\Sigma}(\sqdown\wedge{\Downarrow_n})}.
\]
We use \Cref{lem:preim-pres-countable-joins} in the third step, and \Cref{rem:smooth} and that $\ol{\Sigma}$ preserves directed unions in the fourth step. Consequently, we are left to prove
\begin{equation}\label{eq:proof-goal}{\ol{\Sigma}(\sqdown\wedge{\Downarrow_n})}\leq \, \iimg{\ini}{\Downarrow} \qquad \text{for every $n\in \Nat$}.
\end{equation} This requires two steps.

\begin{enumerate}
\item Let $g_n\c \Sigma(\sqdown\wedge{\Downarrow_n}) \to \mS+D(\mS,\mS)$ denote the following composite:
\[  
\begin{tikzcd}[scale cd=.75, row sep=20]
\Sigma(\sqdown\wedge{\Downarrow_n}) \ar{r}{\Sigma\wt{m}_n} & \Sigma(\Downarrow_n) \ar{r}{\Sigma{m_n}} & \Sigma(\mS) \ar{r}{\langle \id,\, \gamma^{(n)}\rangle} & \Sigma(\mS\times D(\mS,\mS)) \ar{dlll}[swap]{\rho_{\mS,\mS}} \\
\Sigmas(\mS+\mS) + D(\mS,\Sigmas(\mS+\mS)) \ar{rr}{\Sigmas\nabla+D(\id,\Sigmas\nabla)} && \Sigmas(\mS) + D(\mS,\Sigmas(\mS)) \ar{r}{\hatini+D(\id,\hatini)} & \mS+D(\mS,\mS).
\end{tikzcd}
\]
We claim that
\begin{equation}\label{eq:gn-fact}
\text{$g_n$ factorizes through $\Downarrow+\,D(\mS,\mS)\monoto\mS+D(\mS,\mS)$}.
\end{equation} 
To see this, consider the diagram in \Cref{fig:normalization}.
\begin{figure*}
\begin{equation}\label{diag:normalization}
\rotatebox{-90}{
\begin{tikzcd}[column sep=3em, ampersand
      replacement=\&, scale cd=.6, row sep=4em]
\Sigma(\sqdown \wedge{\Downarrow_n}) \ar{r}{\Sigma\wt{m}_n} \ar{dr}{\Sigma\langle \ol{m}_n,\,\wt{m}_n\rangle} \ar{ddd}[swap]{\Sigma\langle \ol{m}_n,\, \dot{\gamma}^{(n)} \rangle }  \& \Sigma(\Downarrow_n) \ar{r}{\Sigma m_n} \& \Sigma(\mS) \ar{rr}{\Sigma \langle \id,\,\gamma^{(n)}\rangle}  \& \& \Sigma(\mS\times (\mS+D(\mS,\mS))) \ar{d}{\rho_{\mS,\mS}}  \\
 \& \Sigma(\sqdown\times \Downarrow_n) \ar{u}[swap]{\Sigma\outr} \ar{r}{\Sigma(\id\times m_n)} \ar{d}{\Sigma(\id\times \ol{\gamma}^{(n)})} \& \Sigma(\sqdown\times \mS) \ar{u}[swap]{\Sigma\outr} \ar{d}{\Sigma(\id\times \gamma^{(n)})} \ar[phantom]{urr}[description]{(\ast)} \& \& \Sigmas(\mS+\mS)+D(\mS,\Sigmas(\mS+\mS)) \ar{d}{\Sigmas\nabla+D(\id,\Sigmas\nabla)} \ar{dddl}[swap]{\id+D(m,\id)} \\
\& \Sigma(\sqdown\times D(\mS,\mS))  \ar{r}{\Sigma(\id\times \inr)} \ar{d}{\Sigma(\id\times D(m,\id))} \& \Sigma(\sqdown\times(\mS+D(\mS,\mS))) \ar{uurr}{\Sigma(m\times\id)} \ar{d}{\Sigma(\id\times(\id+D(m,\id)))} \& \& \Sigmas(\mS)+D(\mS,\Sigmas(\mS)) \ar{d}{\hatini+D(\id,\hatini)} \\
\Sigma(\sqdown\times \ol{D}(\sqdown,\sqdown))  \ar{ur}{\Sigma(\id\times r_{\sqdown,\sqdown})} \ar{d}[swap]{\Sigma(\id\times s_{\sqdown,\sqdown})} \& \Sigma(\sqdown\times D(\sqdown,\mS))  \ar{r}{\Sigma(\id\times \inr)} \& \Sigma(\sqdown\times(\mS+D(\sqdown,\mS))) \ar{d}{\rho_{\sqdown,\mS}} \& \& \mS+D(\mS,\mS) \ar{ddd}{\id+D(m,\id)} \\
\Sigma(\sqdown\times D(\sqdown,\sqdown)) \ar{r}{\Sigma(\id\times \inr)} \ar[swap]{dd}{\rho^0_{\sqdown,\sqdown}} \ar{ur}{\Sigma(\id\times D(\id,m))} \& \Sigma(\sqdown\times (\sqdown+D(\sqdown,\sqdown)))  \ar{d}{\rho_{\sqdown,\sqdown}} \ar{ur}{\Sigma(\id\times(m+D(\id,m)))} \& \Sigmas(\sqdown+\mS) + D(\sqdown,\Sigmas(\sqdown+\mS))  \ar{r}[yshift=1em]{\Sigmas(m+\id)+D(\id,\Sigmas(m+\id))} \& \Sigmas(\mS+\mS)+D(\sqdown,\Sigmas(\mS+\mS)) \ar{d}{\Sigmas\nabla+D(\id,\Sigmas\nabla)} \& \\
\& \Sigmas(\sqdown+\sqdown)+D(\sqdown,\Sigmas(\sqdown+\sqdown)) \ar{r}{\Sigmas\nabla+D(\id,\Sigmas\nabla)} \ar{ur}[description]{\Sigmas(\id+m)+D(\id,\Sigmas(\id+m))} \& \Sigmas(\sqdown)+D(\sqdown,\Sigmas(\sqdown)) \ar{r}{\Sigmas m + D(\id,\Sigmas m)} \& \Sigmas(\mS)+D(\sqdown,\Sigmas(\mS)) \ar{dr}{\hatini+D(\id,\hatini)}  \& \\
\sqdown+\sqdown+D(\sqdown,\Sigmas(\sqdown+\sqdown)) \ar{ur}{\eta_{\sqdown+\sqdown}+\id} \ar{rr}{\nabla+D(\id,\Sigmas\nabla)} \& \& \sqdown+D(\sqdown,\Sigmas(\sqdown)) \ar{u}{\eta_{\sqdown}+\id} \ar{r}{m+D(\id,\Sigmas m)} \& \mS+D(\sqdown,\Sigmas(\mS)) \ar{u}{\eta_{\mS}+\id} \ar{r}{\id+D(m,\hatini)} \& \mS+D(\sqdown,\mS)  
\end{tikzcd}
}
\end{equation}
\caption{Diagram for proof of \Cref{thm:strong-normalization}}\label{fig:normalization}
\end{figure*}
All inner cells except ($\ast$) commute by (di)naturality or by definition, and ($\ast$) commutes when precomposed with $\Sigma(\id\times m_n)\comp \Sigma\langle \ol{m}_n, \wt{m}_n\rangle$. It follows that the outside of the diagram commutes, which shows that the composite 
\[\Sigma(\sqdown\wedge{\Downarrow_n}) \xto{g_n} \mS+D(\mS,\mS)\xto{\id+D(m,\id)} \mS+D(\sqdown,\mS)\] factorizes through 
\[\sqdown+D(\sqdown,\Sigmas(\sqdown)) \xto{m+D(\id,\ini\comp \Sigmas m)} \mS+D(\sqdown,\mS).\]
Let $(P,p_1,p_2)$ be the pullback of $D(m,\id)$ and $D(\id,\ini\comp \Sigmas m)$. Then, since pullbacks and coproducts in $\C$ commute (\Cref{rem:extensive}\ref{rem:extensive:pullbacks-commute-with-coproducts}), the following is a pullback:
\[
\begin{tikzcd}[column sep=50]
\sqdown+P \pullbackangle{-45} \ar{r}{m+p_2} \ar[tail]{d}[swap]{\id+p_1} & \mS+D(\mS,\mS) \ar[tail]{d}{\id+D(m,\id)} \\
\sqdown+D(\sqdown,\Sigmas(\sqdown)) \ar{r}{m+D(\id,\ini\comp \Sigmas m)} & \mS+D(\sqdown,\mS) 
\end{tikzcd}
\]
It follows that $g_n$ factorizes through $m+p_2\colon \sqdown+P\to \mS+D(\mS,\mS)$. In particular, since $\sqdown\leq\, \Downarrow$ in $\Pred[\mS]{\C}$, we can conclude that \eqref{eq:gn-fact} holds.
\item The graph of $g_n$, regarded as a relation between $\Sigma(\mS)$ and $\mS+D(\mS,\mS)$, satisfies
\[
\gra(g_n) \leq \Sigma(\mS)\times (\Downarrow+\,D(\mS,\mS)) \qqand
\gra(g_n) \leq \vee_k\, \gra(\gamma^{(k)}\comp \ini).
\]
Indeed, the first inequality follows from \eqref{eq:gn-fact}, and the second one from our assumption that $\rho$ respects weak transitions. Consequently,
\begin{equation}\label{eq:gra-gn-ineq}
\gra(g_n) \leq \Sigma(\mS)\times (\Downarrow+\,D(\mS,\mS)) \,\wedge \, \vee_k\, \gra(\gamma^{(k)}\comp \ini).
\end{equation}
The relation $\gra(g_n)$ has domain $\Sigma(\sqdown\wedge{\Downarrow_n})$. Therefore, to establish \eqref{eq:proof-goal} it suffices to show that the domain of the relation on the right-hand side of \eqref{eq:gra-gn-ineq} is contained in $\iimg{\ini}{\Downarrow}$. We first note that
\begin{align*}
& \Sigma(\mS)\times (\Downarrow+\, D(\mS,\mS)) \,\wedge \, \vee_k\, \gra(\gamma^{(k)}\comp \ini) \\
=\; & \Sigma(\mS)\times (\vee_l\Downarrow_l+\, D(\mS,\mS)) \,\wedge \, \vee_k\, \gra(\gamma^{(k)}\comp \ini) \\
=\; & \vee_l (\Sigma(\mS)\times (\Downarrow_l+\, D(\mS,\mS)))  \,\wedge \, \vee_k\, \gra(\gamma^{(k)}\comp \ini) \\
=\; & \vee_{k,l} \big(\Sigma(\mS)\times (\Downarrow_l+\, D(\mS,\mS)) \,\wedge \, \gra(\gamma^{(k)}\comp \ini)\big).
\end{align*}
We use \Cref{lem:join-vs-prod-coprod} in the third step and \Cref{lem:preim-pres-countable-joins} in the fourth step. Hence, it suffices to prove
\begin{equation}\label{eq:proof-goal-2}
{\dom(\Sigma(\mS)\times (\Downarrow_l+\, D(\mS,\mS))  \,\wedge \, \gra(\gamma^{(k)}\comp \ini))} \leq\,\iimg{\ini}{\Downarrow_{k+l}} \qquad\text{for every $k,l\in \Nat$}.
\end{equation}
To this end, consider first the following diagram:
\[
\begin{tikzcd}[scale cd=.7]
\Sigma(\mS)\times (\Downarrow_l+\, D(\mS,\mS))  \,\wedge \, \gra(\gamma^{(k)}\comp \ini)  \ar[tail]{r} \ar[tail]{d} \ar{dr}{\outr} & \gra(\gamma^{(k)}\comp \ini) \ar{rr}{\outl} \ar{d}{\outr} \ar[phantom]{drr}[description]{\text{\eqref{eq:graph-outl-outr}}} & & \Sigma(\mS) \ar{d}{\ini} \\
\Sigma(\mS)\times (\Downarrow_l+\,D(\mS,\mS)) \ar{dd}[swap]{\outr} \ar{r}{\outr} & \mS+D(\mS,\mS)  \ar{dr}{\gamma^{(l)}+\id} \ar[bend right=2em]{ddrr}{[\gamma^{(l)},\inr]}  & & \mS \ar[phantom]{dl}[description, yshift=-1em]{\text{\eqref{eq:gamma-m-n}}} \ar{ll}[swap]{\gamma^{(k)}} \ar{dd}{\gamma^{(k+l)}} \\
& & \mS+D(\mS,\mS)+D(\mS,\mS) \ar{dr}{\id+\nabla}  & \\
\Downarrow_l + D(\mS,\mS) \ar{uur}{m_l+\id} \ar{r}{[\ol{\gamma}^{(l)},\id]} & D(\mS,\mS) \ar{rr}{\inr} & & \mS+D(\mS,\mS)
\end{tikzcd}
\]
Except for the cell using \eqref{eq:graph-outl-outr} and \eqref{eq:gamma-m-n}, all cells commute either trivially or by definition.
 Hence, the outside of the diagram below commutes, and so the universal property of the pullback $\Downarrow_{k+l}$ yields the unique dashed morphism making the inner cells in the diagram below commute. The upper cell witnesses that \eqref{eq:proof-goal-2} holds.
\[
\begin{tikzcd}[scale cd=.7]
\Sigma(\mS)\times (\Downarrow_l+D(\mS,\mS))  \,\wedge \, \gra(\gamma^{(k)}\comp \ini) \ar[tail]{r} \ar[tail]{d} \ar[dashed]{dr} & \gra(\gamma^{(k)}\comp \ini) \ar{rr}{\outl} & & \Sigma(\mS) \ar{d}{\ini} \\
\Sigma(\mS)\times (\Downarrow_l+D(\mS,\mS)) \ar{d}[swap]{\outr} & \Downarrow_{k+l} \ar[tail]{rr}{m_{k+l}} \ar{d}[swap]{\ol{\gamma}^{(k+l)}} \pullbackangle{-45} & & \ar{d}{\gamma^{(k+l)}} \mS \\
\Downarrow_l + D(\mS,\mS) \ar{r}{[\ol{\gamma}^{(l)},\id]} & D(\mS,\mS) \ar{rr}{\inr} & & \mS+D(\mS,\mS)
\end{tikzcd}
\]
\end{enumerate}
\end{proof}

\section*{Proof of \Cref{thm:ind-up-to-blackquare}}
We prove this theorem in the more general context of $\lambda$-laws, see \Cref{app:lambdashenanigans}.

\section{Properties of Predicates}\label{sec:predicates}
We collect some useful facts about predicates over (countably) extensive categories that we use in various proofs. In this section let $\C$ be complete, cocomplete, and well-powered.

\begin{rem}\label{rem:extensive}
We shall use the following facts about extensive categories:
\begin{enumerate}
\item\label{rem:extensive:pullbacks-commute-with-coproducts} Pullbacks commute with coproducts: given pullback squares as shown on the left below, where $i=1,2$, the square on the right is a pullback~\cite[Prop.~2.6]{amv10}.
\[
\begin{tikzcd}
P_i \pullbackangle{-45} \ar{r}{f_i} \ar{d}[swap]{g_i}  & A_i \ar{d}{h_i} \\
B_i \ar{r}{k_i} & C_i 
\end{tikzcd}
\qquad
\begin{tikzcd}
P_1+P_2 \pullbackangle{-45} \ar{r}{f_1+f_2} \ar{d}[swap]{g_1+g_2}  & A_1+A_2 \ar{d}{h_1+
h_2} \\
B_1+B_2 \ar{r}{k_1+k_2} & C_1+C_2 
\end{tikzcd}
\]
In particular, coproducts of monomorphisms are monomorphisms.
\item\label{rem:extensive:pullbacks-stable-under-copairing} Pullbacks are stable under copairing: given pullback squares as shown on the left below, where $i=1,2$, the square on the right is a pullback~\cite[Lem.~2.7]{amv10}.
\[
\begin{tikzcd}
P_i \pullbackangle{-45} \ar{r}{f_i} \ar{d}[swap]{g_i}  & A_i \ar{d}{h_i} \\
B \ar{r}{k} & C 
\end{tikzcd}
\qquad
\begin{tikzcd}
P_1+P_2 \pullbackangle{-45} \ar{r}{f_1+f_2} \ar{d}[swap]{[g_1,g_2]}  & A_1+A_2 \ar{d}{[h_1,
h_2]} \\
B \ar{r}{k} & C 
\end{tikzcd}
\]
\item\label{rem:extensive:distributive} Every extensive category with finite products is distributive, that is, the functor $X\times (-)\colon \C\to \C$ preserves finite coproducts~\cite[Prop.~4.5]{cbl93}.
\item\label{rem:extensive:injections-monic} In every distributive category, in particular in every extensive category with finite products, coproduct injections are monic~\cite[Prop.~3.3]{cbl93}
\end{enumerate}
Analogous statements hold for countably extensive categories and countable coproducts.
\end{rem}

\begin{rem}\label{rem:smooth}
We say that monomorphisms in $\C$ are \emph{smooth} if for every directed diagram whose connecting morphisms are monic, the colimit injections are monic, and moreover for every cocone over that diagram the induced mediating morphism is monic. This property holds, e.g.,\ in locally finitely presentable categories, such as sets, posets, or presheaves~\cite[Prop. 1.62]{adamek_rosicky_1994}. Smoothness of monomorphisms implies that a directed join of predicates in $\Pred[X]{\C}$, $X\in \C$, is formed as the directed colimit of the underlying diagram in $\C$.
\end{rem}

\begin{lem}\label{lem:preim-pres-countable-joins}
Suppose that $\C$ is countably extensive and strong epimorphisms are pullback-stable.
\begin{enumerate}
\item The map $f^\star\colon \Pred[Y]{\C}\to \Pred[X]{\C}$ preserves countable joins for every $f\colon X\to Y$.
\item Finite meets distributive over countable joins of predicates:
\[ P\wedge (\vee_n P_n) = \vee_n (P\wedge P_n) \qquad \text{for all predicates $P\monoto Y$ and $P_n\monoto Y$ ($n\in \Nat$)}.
\]
\end{enumerate}
\end{lem}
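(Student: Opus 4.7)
My plan is to establish~(1) directly and derive~(2) from it by a standard adjoint manipulation.

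For~(1), I will represent the countable join $\bigvee_n P_n$ in $\Pred[Y]{\C}$ as the image (i.e., the mono part of a strong epi-mono factorization) of the copairing $[p_n]_n \colon \coprod_n P_n \to Y$ of the subobjects $p_n \colon P_n \monoto Y$; the countable coproduct exists because $\C$ is countably extensive, and the factorization exists because $\C$ is complete and well-powered. Writing this factorization as $[p_n]_n = m \comp e$ with $m = \bigl(\bigvee_n P_n \monoto Y\bigr)$ and $e$ strongly epic, I then pull the whole diagram back along $f$. By the standing pullback-stability of strong epis (\Cref{assumptions}), the pullback of $e$ remains strongly epic; by definition, the pullback of $m$ is $\iimg{f}{\bigvee_n P_n}$. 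Meanwhile, the countable analogue of \Cref{rem:extensive}(1), which holds under countable extensivity, ensures that the pullback of $\coprod_n P_n \to Y$ along $f$ is $\coprod_n \iimg{f}{P_n} \to X$ and that the pullback of $[p_n]_n$ is precisely the copairing $[\iimg{f}{p_n}]_n$. Hence $\iimg{f}{\bigvee_n P_n}$ arises as the mono part of a strong epi-mono factorization of $[\iimg{f}{p_n}]_n$; by uniqueness of such factorizations it coincides with the image of that copairing, which is by definition $\bigvee_n \iimg{f}{P_n}$.

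For~(2), I will use that for the subobject inclusion $q \colon P \monoto Y$, the intersection with $P$ decomposes as $P \wedge R = \fimg{q}{\iimg{q}{R}}$ for every predicate $R \monoto Y$: the pullback $\iimg{q}{R}$ captures the intersection as a subobject of $P$, and postcomposition with the monomorphism $q$ re-expresses it as a subobject of $Y$. Since $q$ is monic, direct image along $q$ is just postcomposition and is a left adjoint to $\iimg{q}{-}$, hence preserves all joins. Combining this with~(1) yields
\[ P \wedge \bigvee_n P_n \;=\; \fimg{q}{\iimg{q}{\bigvee\nolimits_n P_n}} \;=\; \fimg{q}{\bigvee\nolimits_n \iimg{q}{P_n}} \;=\; \bigvee_n \fimg{q}{\iimg{q}{P_n}} \;=\; \bigvee_n (P \wedge P_n). \]

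The main subtle point is the extension of ``pullbacks commute with coproducts'' from the finite case of \Cref{rem:extensive}(1) to the countable case, which is necessary to identify the pullback of the copairing $[p_n]_n$ with the copairing of the pulled-back monos. Granted this, together with pullback-stability of strong epis, part~(1) is essentially the observation that strong epi-mono factorizations are preserved by pullbacks, and part~(2) reduces to general adjoint formalities.
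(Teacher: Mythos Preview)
Your argument is correct. The one small slip is that the property of extensivity you need in part~(1) is item~(2) of \Cref{rem:extensive} (pullbacks are stable under copairing, i.e.\ the base $X\to Y$ stays fixed while the $P_n\to Y$ vary), not item~(1); the paper itself invokes item~(2) at this point.

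For part~(1) your route is genuinely different from, and cleaner than, the paper's. The paper first forms the two factorizations $[p_n]_n = m\comp e$ and $[\ol p_n]_n = \ol m\comp \ol e$ independently, then produces a comparison map $\ol f\colon \bigvee_n \iimg f{P_n}\to \bigvee_n P_n$ by diagonal fill-in, and finally verifies \emph{by hand} that the resulting square over $f$ is a pullback: it takes a test cone $(g,h)$, pulls $h$ back along the strong epi $e$, uses the pullback from extensivity to get a map into $\coprod_n \iimg f{P_n}$, and applies diagonal fill-in again. You instead observe once and for all that a (strong epi, mono)-factorization is pullback-stable whenever strong epis are, and then identify the pullback of $[p_n]_n$ along $f$ with $[\ol p_n]_n$ via extensivity; this bypasses the explicit pullback verification entirely. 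The paper's argument is more self-contained but longer; yours relies on a standard general fact but is shorter and arguably more transparent.

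For part~(2) the two proofs are essentially the same: both observe that $P\wedge(-)$ is computed by $\iimg q{-}$ for the inclusion $q\colon P\monoto Y$ (the paper says this in one line, you spell out the extra step $\fimg q{-}$ of re-embedding into $Y$, which is harmless since $q$ is monic).
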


\begin{proof}
\begin{enumerate}
\item Let $f\colon X\to Y$ be a morphism and let $p_n\colon P_n\monoto Y$ ($n\in \Nat$) be predicates. Form the preimage of $p_n$ under $f$ as shown in the diagram on the left below. Then, since $\C$ is countably extensive, the diagram on the right is a pullback (\Cref{rem:extensive}\ref{rem:extensive:pullbacks-stable-under-copairing}).
\begin{equation}\label{eq:pullbacks}
\begin{tikzcd}
f^\star P_n \ar{r}{\ol{f}_n} \ar[tail]{d}[swap]{\ol{p}_n} \pullbackangle{-45} & P_n \ar[tail]{d}{p_n} \\
X \ar{r}{f} & Y
\end{tikzcd}
\qquad
\begin{tikzcd}
\coprod_{n} f^\star P_n \ar{r}{\coprod_n \ol{f}_n} \ar[tail]{d}[swap]{[\ol{p}_n]_n} \pullbackangle{-45} & \coprod_n P_n \ar[tail]{d}{[p_n]_n} \\
X \ar{r}{f} & Y
\end{tikzcd}
\end{equation}
The predicates $\vee_n P_n$ and $\vee_n f^\star P_n$ are obtained via the following (strong epi, mono)-factorizations:
\[
\begin{tikzcd} \coprod_n P_n \ar[shiftarr={yshift=1.5em}]{rr}{[p_n]_n} \ar[two heads]{r}{e} & \vee_n P_n \ar[tail]{r}{m} & Y \end{tikzcd}
\qquad
\begin{tikzcd} \coprod_n f^\star P_n \ar[shiftarr={yshift=1.5em}]{rr}{[\ol{p}_n]_n} \ar[two heads]{r}{\ol{e}} & \vee_n f^\star P_n \ar[tail]{r}{\ol{m}} & X 
\end{tikzcd}
\]
Diagonal fill-in yields a unique $\ol{f}$ making the two rectangular cells in the diagram below commute.
\[
\begin{tikzcd}[column sep=5em]
Z \ar[bend right=2em]{ddddr}[swap]{g} \ar[bend right=2em]{dddr}{l} \ar[equals]{rr} & & Z \ar[shiftarr = {xshift=3em}]{ddd}{h} \\
& P \ar[two heads]{ul}[swap]{e'} \ar{dr}{h'} \ar{d}[swap]{k} & \\
& \coprod_n f^\star P_n \ar{r}{\coprod_n \ol{f}_n} \ar[two heads]{d}[swap]{\ol{e}} & \coprod_n P_n \ar[two heads]{d}{e} \\
& \vee_n f^\star P_n \ar[tail]{d}[swap]{\ol{m}}  \ar{r}{\ol{f}} & \vee_n P_n \ar[tail]{d}{m} \\
& X \ar{r}{f} & Y 
\end{tikzcd}
\]
To prove $f^{\star}[\vee_n P_n] = \vee_n f^\star P_n$, we show that the lower rectangle is a pullback. Thus suppose that $g$ and $h$ are morphisms such that $f\comp g = m\comp h$; our task is to construct a morphism $l$ such that 
\begin{equation}\label{eq:l-props}
g=\ol{m}\comp l \qand h=\ol{f}\comp l.
\end{equation}
(Note that $l$ is unique because $\ol{m}$ is monic.) To this end, form the pullback $(P,e',h')$ of $e$ and $h$. The morphism $e'$ is strongly epic because $e$ is strongly epic and strong epimorphisms are pullback-stable. The universal property of the pullback on the right in \eqref{eq:pullbacks} yields a unique $k$ such that 
\[ g\comp e' = \ol{m}\comp \ol{e}\comp k \qand h'=(\coprod_n \ol{f}_n)\comp k. \]
Diagonal fill-in yields a unique $l$ such that
\[ g=\ol{m}\comp l \qand \ol{e}\comp k = l\comp e'. \]
In particular, $l$ satisfies the first equality of \eqref{eq:l-props}. The second equality holds because, by the above commutative diagram, it holds when precomposed with the epimorphism $e$.
\item That finite meets distributive over countable joins means precisely that for every predicate $p\colon P\monoto Y$ the map $p^\star$ preserves countable joins, which follows from part (1). \qedhere
\end{enumerate}
\end{proof}

\begin{rem}\label{rem:strong-epis-stable-under-products}
If strong epimorphisms in $\C$ are pullback-stable then they also stable under products, that is, for any two strong epimorphisms $e,e'$ their product $e\times e'$ is a strong epimorphism. Since $e\times e' = (e\times \id)\comp (\id\times e')$, we only need to show that $e\times \id$ and $\id\times e'$ are strong epimorphisms, and by symmetry it suffices to consider the former. To this end, just note that the following is a pullback:
\[
\begin{tikzcd}
A\times C \pullbackangle{-45} \ar{r}{e\times \id} \ar{d}[swap]{\outl} & B\times C \ar{d}{\outl} \\
A \ar{r}{e} & B
\end{tikzcd}
\]
\end{rem}

\begin{lem}\label{lem:join-vs-prod-coprod}
Suppose that $\C$ is countably extensive and that strong epimorphisms are pullback-stable. Then for every $X,Y\in \C$ and predicates $P_n\monoto Y$ ($n\in\Nat$),
\begin{enumerate}
\item\label{lem:join-vs-prod-coprod:prod} $X\times (\vee_n P_n) = \vee_n (X\times P_n)$ in $\Pred[X\times Y]{\C}$;
\item\label{lem:join-vs-prod-coprod:coprod} $X+(\vee_n P_n) = \vee_n (X+P_n)_n$ in $\Pred[X+Y]{\C}$.  
\end{enumerate}
\end{lem}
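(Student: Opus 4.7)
Plan: The overall strategy mirrors the construction of joins used in the proof of \Cref{lem:preim-pres-countable-joins}: the join $\vee_n R_n$ of predicates $r_n\colon R_n\monoto Z$ is the mono part of any (strong epi, mono)-factorization of the copairing $[r_n]_n\colon \coprod_n R_n\to Z$. Applying this to $R_n = X\times P_n$ (resp.\ $R_n = X+P_n$), one obtains $\vee_n(X\times P_n)$ (resp.\ $\vee_n(X+P_n)$) as the image of $[\id_X\times p_n]_n$ (resp.\ $[\id_X+p_n]_n$). The plan is then, in each case, to exhibit an explicit (strong epi, mono)-factorization whose mono part is precisely the canonical candidate $\id_X\times m$ (resp.\ $\id_X+m$), where $[p_n]_n = m\comp e$ is the factorization defining $\vee_n P_n$.

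For \ref{lem:join-vs-prod-coprod:prod}, countable distributivity, which holds in a countably extensive category with finite products by the analogous statement at the end of \Cref{rem:extensive}, gives a natural isomorphism $\coprod_n(X\times P_n)\cong X\times \coprod_n P_n$; under it the copairing $[\id_X\times p_n]_n$ becomes $\id_X\times[p_n]_n = (\id_X\times m)\comp(\id_X\times e)$. The factor $\id_X\times m$ is monic by a direct computation with the product projections, using that $m$ is mono. The factor $\id_X\times e$ is strong epic because it is a pullback of $e$ along $\outr\colon X\times\vee_n P_n\to \vee_n P_n$, and strong epis are assumed pullback-stable. This yields the desired factorization and hence the claimed equality.

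For \ref{lem:join-vs-prod-coprod:coprod}, commutativity of coproducts yields $\coprod_n(X+P_n)\cong(\coprod_n X)+(\coprod_n P_n)$, and a short calculation using the identity $[\inl\comp f,\, \inr\comp g] = f+g$ rewrites the copairing as $\nabla+[p_n]_n = (\id_X+m)\comp(\nabla+e)$, where $\nabla = [\id_X]_n\colon \coprod_n X\to X$ is the codiagonal. The factor $\id_X+m$ is monic because coproducts of monos are monic in an extensive category (\Cref{rem:extensive}\ref{rem:extensive:pullbacks-commute-with-coproducts}). For the other factor, $\nabla$ is split epic via the injection $\inj_0\colon X\to \coprod_n X$, and $e$ is strong epic by construction, so it remains to verify that the coproduct of two strong epis is strong epic. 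I would establish this directly by an orthogonality argument: any commutative square whose right-hand leg is a mono decomposes, via the coproduct structure of the source, into two independent subsquares, each admitting a unique diagonal by the strong-epi property of $\nabla$ and $e$ respectively; the copairing of these diagonals is the required unique diagonal for the original square.

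The main technical point is the closing step of \ref{lem:join-vs-prod-coprod:coprod}, namely verifying that $\nabla+e$ is strong epic, since the paper does not explicitly record closure of strong epimorphisms under coproducts in the present setting. Part \ref{lem:join-vs-prod-coprod:prod}, by contrast, reduces immediately to the assumed pullback-stability of strong epis once distributivity is invoked.
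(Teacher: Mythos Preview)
Your proposal is correct and follows essentially the same approach as the paper: both parts proceed by exhibiting an explicit (strong epi, mono)-factorization of the relevant copairing, using countable distributivity for~\ref{lem:join-vs-prod-coprod:prod} and commutativity of coproducts for~\ref{lem:join-vs-prod-coprod:coprod}. Your treatment of~\ref{lem:join-vs-prod-coprod:coprod} is slightly more explicit than the paper's, which simply asserts that $[\id]_n+e$ and $\id+m$ form the factorization without spelling out why the first factor is strong epic; your orthogonality argument for closure of strong epis under coproducts cleanly fills this in.
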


\begin{proof} Let $p_n\colon P_n\monoto Y$ ($n\in\Nat$) be predicates.
\begin{enumerate}
\item Consider the commutative diagram below, where $e$ and $m$ are the (strong epi, mono)-factori\-za\-tion of $[p_n]_n$, and $e'$ and $m'$ are the (strong epi-mono)-factorization of  $[\id\times p_n]_n$, and the isomorphism witnesses that finite products distribute over countable coproducts (\Cref{rem:extensive}\ref{rem:extensive:distributive}),
\[
\begin{tikzcd}
\coprod_n (X\times P_n) \ar{rr}{\cong} \ar[two heads]{d}[swap]{e'} & & X\times (\coprod_n P_n) \ar[two heads]{d}{\id\times e} \\
\vee_n (X\times P_n) \ar[tail]{dr}[swap]{m'}  & & X\times(\vee_n P_n) \ar[tail]{dl}{\id\times m} \\
& X\times Y & 
\end{tikzcd}
\]
Note that $\id\times e$ is a strong epimorphism (\Cref{rem:strong-epis-stable-under-products}). Hence the uniqueness of (strong epi, mono)-factorizations shows that $X\times (\vee_n P_n) = \vee_n (X\times P_n)$ as subobjects of $X\times Y$.
\item Consider the commutative diagram below, where $e$ and $m$ are (strong epi, mono)-factorization of $[p_n]_n$, and the isomorphism witnesses commutativity of coproducts.
\[
\begin{tikzcd}
\coprod_n (X+P_n) \ar{r}{\cong} \ar{d}[swap]{[\id+p_n]_n} & \coprod_n X + \coprod_n P_n \ar[two heads]{d}{[\id]_n+e} \\
X+Y & X+(\vee_n P_n) \ar[tail]{l}[swap]{\id+m} 
\end{tikzcd}
\]
\end{enumerate}
Since $\id+m$ is monic (\Cref{rem:extensive}\ref{rem:extensive:pullbacks-commute-with-coproducts}), we see that $[\id]_n+e$ and $\id+m$ form the (strong epi, mono)-factorization of $[\id+p_n]_n$, whence $X+(\vee_n P_n) = \vee_n (X+P_n)_n$ as subobjects of $X+Y$.\qedhere
\end{proof}

\begin{defn}
A \emph{relation} between objects $A$ and $B$ of $\C$ is a predicate over $A\times B$, that is, a subobject $r\colon R\monoto A\times B$. We write $r=\langle \outl_R,\outr_R\rangle$ and usually omit the subscript $R$. The \emph{domain} of $R$ is the image of the morphism $\outl$:
\[ 
\begin{tikzcd}
	R \ar[shiftarr = {yshift=1.5em}]{rr}{\outl_R} \ar[two heads]{r}{e_{\dom(R)}} & \dom(R) \ar[tail]{r}{m_{\dom(R)}} & A
\end{tikzcd}
\]
The \emph{graph} of a morphism $f\colon A\to B$ is the relation between $A$ and $B$ given by the image of $\langle \id,f\rangle$:
\[
\begin{tikzcd}
A \ar[shiftarr = {yshift=1.5em}]{rr}{\langle \id,f\rangle} \ar[two heads]{r}{e_{\gra(f)}} & \gra(f) \ar[tail]{r}{m_{\gra(f)}} & A\times B
\end{tikzcd}
\]
\end{defn}

\begin{rem}\label{rem:graph-of-morphism}
Note that $\gra(\f)$ has domain $A$, and that we have the commutative triangle
\begin{equation}\label{eq:graph-outl-outr}
\begin{tikzcd}
& \gra(f) \ar[swap]{dl}{\outl} \ar{dr}{\outr} & \\
A \ar{rr}{f} & & B
\end{tikzcd}
\end{equation}
\end{rem}

\takeout{

\section{More on Predicate Lifting}

\begin{lem}
  \label{lem:can-comp}
Given $B\c {\C^\opp\times \C \to \C}$ and $H\c\C\to\C$, if $H$ preserves weak pullbacks and strong 
epimorphisms then the composition of canonical liftings $\ol H\,\ol B$ is the 
canonical lifting $\ol{H B}$.
\end{lem}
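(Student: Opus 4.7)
My plan is to verify the equality $\ol{H}\,\ol{B}(P,Q) = \ol{HB}(P,Q)$ fiberwise, for each pair of predicates $p \c P \monoto X$ and $q \c Q \monoto Y$; the compatibility on morphisms then follows from the uniqueness of restrictions along monics (as in the proof of Proposition \ref{prop:liftingb}). By Proposition \ref{prop:liftingb}, $\ol{B}(P,Q) \monoto B(X,Y)$ is the image $m_{P,Q}$ of the pullback projection $r^{B}_{P,Q} \c T_{P,Q} \to B(X,Y)$ arising from the pullback of $B(p,\id)$ along $B(\id,q)$, so $\ol{H}\,\ol{B}(P,Q)$ is by definition the image of $Hm_{P,Q}$. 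On the other hand, $\ol{HB}(P,Q)$ is the image of the pullback projection $r^{HB}_{P,Q} \c T^{HB}_{P,Q} \to HB(X,Y)$ for the cospan $HB(p,\id), HB(\id,q)$.

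The first step will be to show that the image of $Hr^{B}_{P,Q}$ already computes $\ol{H}\,\ol{B}(P,Q)$. Writing $r^{B}_{P,Q} = m_{P,Q}\comp e_{P,Q}$ with $e_{P,Q}$ strong epi, I apply $H$ to obtain $Hr^{B}_{P,Q} = Hm_{P,Q} \comp He_{P,Q}$; since $H$ preserves strong epimorphisms, $He_{P,Q}$ is strong epi. Composing with a strong epi does not change the image of a morphism, so $Hr^{B}_{P,Q}$ and $Hm_{P,Q}$ have the same image, namely $\ol{H}\,\ol{B}(P,Q)$.

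The heart of the argument is then to compare $Hr^{B}_{P,Q}$ with $r^{HB}_{P,Q}$ using that $H$ preserves weak pullbacks. This preservation makes $HT_{P,Q}$ together with $Hr^{B}_{P,Q}$ and $Hs^{B}_{P,Q}$ a weak pullback of $HB(p,\id)$ and $HB(\id,q)$. The universal property of the genuine pullback $T^{HB}_{P,Q}$ yields a unique morphism $u \c HT_{P,Q} \to T^{HB}_{P,Q}$ compatible with both projections, while the weak universal property of $HT_{P,Q}$ provides a $v \c T^{HB}_{P,Q} \to HT_{P,Q}$ compatible with the projections. Since $u \comp v$ is a self-mediator of the actual pullback cone, $u \comp v = \id$, hence $u$ is a split, in particular strong, epimorphism. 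From $Hr^{B}_{P,Q} = r^{HB}_{P,Q} \comp u$ I conclude that $Hr^{B}_{P,Q}$ and $r^{HB}_{P,Q}$ share the same image, which yields $\ol{H}\,\ol{B}(P,Q) = \ol{HB}(P,Q)$ as subobjects of $HB(X,Y)$.

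I expect the main subtlety to be precisely the strong-epi property of the comparison morphism $u$, which requires using both directions of the weak-pullback situation (the genuine pullback supplying uniqueness, the weak pullback supplying a section) and the standard fact that split epis are strong epis. The remainder—invariance of images under composition with strong epis, and the functoriality of $\ol{H}\,\ol{B}$ versus $\ol{HB}$ on morphisms—is routine bookkeeping with (strong epi, mono)-factorizations.
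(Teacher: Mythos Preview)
Your proposal is correct and follows essentially the same approach as the paper: both arguments work fiberwise, use preservation of strong epis by $H$ to show that the image of $Hr^{B}_{P,Q}$ computes $\ol{H}\,\ol{B}(P,Q)$, use weak-pullback preservation to exhibit the comparison morphism into the genuine pullback $T^{HB}_{P,Q}$ as a split (hence strong) epi, and then conclude by uniqueness of (strong epi, mono)-factorizations. The only cosmetic difference is that the paper phrases the final step as a diagonal fill-in between two factorizations of the same morphism, whereas you phrase it as invariance of images under precomposition with strong epis; these are equivalent.
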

\begin{proof}
  Consider two objects $p \colon P \pred{} X$ and $q \colon Q \pred{} Y$ of $\Pred\C$. %
  First we compute $\overline B$ as below on the left, and then we factorise  $H(\overline B(p,q))$ to obtain $\overline H( \overline B(p,q))$ on the right: 
  \[
  \begin{tikzcd}[column sep=tiny, row sep=.2ex]
  	&[1em]& {T_{P,Q}}
  	\pullbackangle{-45}
  	&&&& {B(P,Q)} \\
  	\\
  	{\overline{B}(P,Q)} \\
  	\\
  	&& {B(X,Y)} &&&& {B(P,Y)}
  	\arrow["e"', two heads, from=1-3, to=3-1]
  	\arrow["{\overline B(p,q)}"', tail, pos=.8, from=3-1, to=5-3]
  	\arrow["a", from=1-3, to=1-7]
  	\arrow[from=1-3, to=5-3]
  	\arrow["{B(P,q)}", from=1-7, to=5-7]
  	\arrow["{B(p,Y)}", from=5-3, to=5-7]
  \end{tikzcd}
  \qquad
\begin{tikzcd}[column sep=1em, row sep=4.5ex]
	H(\overline B(P,Q)) \ar[rr,"{H(\overline B(p,q))}"] \ar[dr,two heads] & &  HB(X,Y) \\
	&  \overline H (\overline B(P,Q))  \ar[ur, tail, "{\overline H(\overline B(p,q))}"']
\end{tikzcd}
  \]
  The lifting of the composite $HB$ is instead computed in the following diagram:
  \[
   \begin{tikzcd}[column sep=tiny, row sep=.3ex]
  	&[1em]& {T_{P,Q}'}
  	\pullbackangle{-45}
  	&&&& {HB(P,Q)} \\
  	\\
  	{\overline{HB}(P,Q)} \\
  	\\
  	&& {HB(X,Y)} &&&& {HB(P,Y)}
  	\arrow[two heads, from=1-3, to=3-1]
  	\arrow["{\overline{HB}(p,q)}"', tail, pos=.8, from=3-1, to=5-3]
  	\arrow[from=1-3, to=1-7]
  	\arrow[from=1-3, to=5-3]
  	\arrow["{HB(P,q)}", from=1-7, to=5-7]
  	\arrow["{HB(p,Y)}", from=5-3, to=5-7]
  \end{tikzcd}
  \]
  The following diagram illustrates the whole situation.
  \[
  \begin{tikzcd}[row sep=0em]
  	H(T_{P,Q}) \ar[rrrd, bend left=10, "H(a)"] \ar[ddd,two heads,"H(e)"'] \ar[rrd,dashed,"k"] \\
  	& & T_{P,Q}' \ar[r] \ar[dl,two heads] \ar[dd] \pullbackangle{-45} & HB(P,Q) \ar[dd,"{HB(P,q)}"] \\
  	& \overline{HB}(P,Q) \ar[dr,tail,"{\overline{HB}(p,q)}"] \\[1em]
  	H(\overline B(P,Q)) \ar[rr,"{H(\overline B(p,q))}"]  \ar[dr,two heads]& & HB(X,Y) \ar[r,"{HB(p,Y)}"] & HB(P,Y) \\[.5em]
  	& \overline H( \overline B( P,Q)) \ar[ur, tail, "{\overline H (\overline B( p,q))}"']
  \end{tikzcd}
  \]
  First, $H(e)$ is a strong epimorphism because $e$ is and $H$ preserves them. Second, $k$ is the universal arrow provided by the pullback $T_{P,Q}'$. Because $H$ preserves weak pullbacks, the outer diagram (which is the image along $H$ of a pullback) is a weak pullback, which means that $k$ is a split epi, in particular it is a strong epi. Therefore we have that the outer diagram below commutes:
  \[
  \begin{tikzcd}
  	H(T_{P,Q}) 
  	  \ar[r,two heads] 
  	  \ar[d,two heads,"k"'] 
  	  & 
  	\overline H(\overline B(P,Q)) 
  	  \ar[d,dashed,"c"] 
  	  \ar[r,tail,"{\overline H(\overline B(p,q))}"] 
  	  &[2em] 
  	HB(X,Y) 
  	  \ar[d,"\id{}"] 
  	\\
  	T_{P,Q}' 
  	  \ar[r, two heads] 
  	  & 
  	\overline{HB}(P,Q) 
  	  \ar[r,tail,"{\overline {HB}(p,q)}"] 
  	  & 
  	HB(X,Y)
  \end{tikzcd}
  \]
  thus we have two strong epi-mono factorizations of the same morphism. Hence there is a (unique) fill-in isomorphism $c$ which identifies $\overline H(\overline B(P,Q))$ and $\overline{HB}(P,Q)$ as subobjects of $HB(X,Y)$.
  
  The fact that $\overline{HB}$ and $\overline H \, \overline B$ coincide on morphisms follows immediately from the definition of lifting both in the covariant and in the mixed-variance case.
\end{proof}

\begin{expl}
  The canonical lifting for $\Pow B$ %
  can be obtained by \Cref{lem:can-comp}, because $\Pow$ preserves weak pullbacks and, assuming the axiom of choice, it preserves all epimorphisms (because they all split). %
  Moreover, $\Pow$ preserves 
  monomorphisms, again because they all split, hence $\ol{\Pow}(p\c P \pred{} Y) = {\Pow p\c \Pow P \pred{} \Pow Y}$.
  Therefore,
  \[
    \ol{\Pow B}_\tau(P,Q) = \ol{\Pow}_\tau(\ol{B}_\tau(P,Q)) = \{Z \subseteq
    B_\tau(X,Y) \mid \forall z \in Z.\, z\in\ol{B}_\tau(P,Q)\}.
  \]
\end{expl}

\begin{prop}[{\cite[Lemma 6.2.8 (ii)]{DBLP:books/cu/J2016}}]
  \label{prop:subalg}
  Let $F \c \C \to \C$ be a functor with canonical predicate lifting $\ol{F} \c
  \Pred{\C} \to \Pred{\C}$.  A predicate $p \c P \pred{} X$ is an
  $\ol{F}$-invariant for $a \c FX \to X$ if and only if $P$ extends to
  an $F$-algebra, such that $p$ is an $F$-algebra morphism:
  \[
    \begin{tikzcd}
      F P \rar[dashed]{}
      \ar[d,"Fp"']
      & P
      \ar[tail]{d}{p}
      \\
      F X
      \ar{r}{a}
      & X
    \end{tikzcd}
  \]
\end{prop}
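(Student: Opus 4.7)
The plan is to unpack the definition of $\ol{F}P$ as the (strong epi, mono)-factorization of $Fp \c FP \to FX$, obtaining $e \c FP \twoheadrightarrow \ol{F}P$ strongly epic and $\ol{F}p \c \ol{F}P \monoto FX$ monic with $\ol{F}p \comp e = Fp$. Being an $\ol{F}$-invariant for the algebra $a$ amounts, dually to the coalgebraic case of \Cref{def:coalginv}, to $\fimg{a}{\ol{F}P} \leq P$, equivalently to the existence of a (necessarily unique, since $p$ is monic) morphism $h \c \ol{F}P \to P$ such that $p \comp h = a \comp \ol{F}p$. The whole proof then hinges on the diagonal fill-in property for the (strong epi, mono)-factorization system in $\C$, which is available by \Cref{assumptions}.

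For the forward direction, given such an $h$, I would set $b := h \comp e \c FP \to P$ and check that
\[
p \comp b \;=\; p \comp h \comp e \;=\; a \comp \ol{F}p \comp e \;=\; a \comp Fp,
\]
so that $(P,b)$ is an $F$-algebra and $p$ becomes an $F$-algebra morphism. For the backward direction, given an algebra $b \c FP \to P$ with $p \comp b = a \comp Fp$, the outer square
\[
\begin{tikzcd}
FP \ar[twoheadrightarrow]{r}{e} \ar{d}[swap]{b} & \ol{F}P \ar{d}{a \comp \ol{F}p} \\
P \ar[tail]{r}{p} & X
\end{tikzcd}
\]
commutes (because $a \comp \ol{F}p \comp e = a \comp Fp = p \comp b$), and diagonal fill-in produces the unique $h \c \ol{F}P \to P$ with $h \comp e = b$ and, crucially, $p \comp h = a \comp \ol{F}p$, which exhibits $p$ as an $\ol{F}$-invariant.

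I expect no real obstacle: both directions are essentially a one-step application of the universal property of the strong epi-mono factorization, so the proof is short. The only point worth spelling out carefully is the precise (dualized) meaning of ``$\ol{F}$-invariant for an algebra $a$'', since in the main text the notion was only introduced on the coalgebraic side; once that is fixed as $\fimg{a}{\ol{F}P} \leq P$, the rest is routine diagram chasing.
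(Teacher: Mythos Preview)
Your argument is correct and is exactly the standard proof of this fact via the (strong epi, mono)-factorization of $Fp$ and diagonal fill-in. The paper does not give its own proof of this proposition; it simply cites \cite[Lemma 6.2.8 (ii)]{DBLP:books/cu/J2016} (and in fact the proposition sits inside a \texttt{\textbackslash takeout} block, so it does not even appear in the compiled paper), so there is nothing further to compare against.
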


The proposition below
similarly generalizes the correspondence between coalgebraic invariants for
canonical liftings and \emph{subcoalgebras}~\cite[Th. 6.2.8]{DBLP:books/cu/J2016}.

\begin{prop}
  Let $B\c \C^\opp\times \C\to \C$ be a bifunctor with canonical lifting
  $\ol{B}$ and let $c \c Y \to B(X,Y)$ be a $B$-coalgebra. Given a predicate $s\c S\pred{} X$,
  and assuming that $B$ preserves monomorphisms in the covariant part, then a predicate $q \c Q
  \pred{} Y$ is an $S$-relative invariant for $c$ if and only $q$ extends to a $B(S,\argument)$-subcoalgebra
  to $B(s,Y)\comp c\c Y\to B(S,Y)$, i.e.:
  \[
    \begin{tikzcd}
      Q
      \ar[dashed]{rr}
      \ar[tail,d,"q"']
      & & B(S,Q)
      \ar[tail]{d}{B(S,q)}
      \\
      Y \rar{c}
      & B(X,Y)
      \rar{B(s,Y)}
      & B(S,Y)
    \end{tikzcd}
  \]
\end{prop}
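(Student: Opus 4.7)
The plan is to exploit the assumption that $B$ preserves monomorphisms in its covariant argument, which by \Cref{prop:liftingb} allows us to identify $\ol{B}(S,Q)\monoto B(X,Y)$ directly with the pullback square
\[
  \begin{tikzcd}
    \ol{B}(S,Q) \pullbackangle{-45} \ar{r}{s_{S,Q}} \ar[tail]{d}[swap]{r_{S,Q}} & B(S,Q) \ar[tail]{d}{B(\id,q)} \\
    B(X,Y) \ar{r}{B(s,\id)} & B(S,Y)
  \end{tikzcd}
\]
(so in particular $B(\id,q)$ is monic, which makes the image factorization in \eqref{eq:liftingpb} redundant). Once this is noted, the stated equivalence is essentially the universal property of this pullback, applied to the cocone obtained from $c\comp q$.

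For the forward direction, I would assume $Q$ is an $S$-relative invariant, i.e.\ $q \leq \iimg{c}{\ol{B}(S,Q)}$ in $\Pred[Y]{\C}$, which yields a (necessarily unique, since $r_{S,Q}$ is monic) morphism $u\c Q\to \ol{B}(S,Q)$ with $r_{S,Q}\comp u = c\comp q$. The desired subcoalgebra morphism is then $s_{S,Q}\comp u\c Q\to B(S,Q)$; commutativity of the required square follows by composing with the two legs of the pullback: the bottom-left path equals $B(s,\id)\comp c\comp q = B(s,\id)\comp r_{S,Q}\comp u$, which by the pullback square equals $B(\id,q)\comp s_{S,Q}\comp u$.

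For the backward direction, suppose $d\c Q\to B(S,Q)$ satisfies $B(\id,q)\comp d = B(s,\id)\comp c\comp q$. Then the pair $(c\comp q,\, d)$ forms a cone over the cospan $B(s,\id), B(\id,q)$, so by the universal property of the pullback there is a unique $u\c Q\to \ol{B}(S,Q)$ with $r_{S,Q}\comp u = c\comp q$ and $s_{S,Q}\comp u = d$. The first equation exhibits a factorization of $c\comp q$ through $r_{S,Q}$, witnessing $q\leq \iimg{c}{\ol{B}(S,Q)}$, i.e.\ that $Q$ is an $S$-relative invariant.

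I do not anticipate any real obstacle: the result reduces to the universal property of the defining pullback of $\ol{B}$, with the covariant mono-preservation hypothesis used exactly to ensure that $\ol{B}(S,Q)$ is literally this pullback rather than the image of $r_{S,Q}$. The only minor care needed is to check that the uniqueness of the factorizations makes the two constructions inverse to each other, which is immediate from the monicity of $r_{S,Q}$ and the universal property.
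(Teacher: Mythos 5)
Your proposal is correct and follows essentially the same route as the paper: both reduce the statement to the universal property of the pullback defining $\ol{B}(S,Q)$, using the covariant mono-preservation hypothesis to identify $\ol{B}(S,Q)$ with that pullback. The only cosmetic difference is that the paper explicitly pastes in a second pullback $R=\iimg{c}{\ol{B}(S,Q)}$ over $Y$, whereas you pass directly between $Q\leq\iimg{c}{\ol{B}(S,Q)}$ and a factorization of $c\comp q$ through $\ol{B}(S,Q)\monoto B(X,Y)$ via the adjunction $\fimg{c}{\argument}\dashv\iimg{c}{\argument}$ already recorded in \Cref{def:coalginv}.
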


\begin{proof}
Recall that the mono-preservation condition entails that $\ol{B}(S,Q)$
is a pullback of $B(S,q)$ along $B(s,Y)$. Denote by $R=\iimg{c}{\ol{B}(S,Q)}$ be the 
pullback of $\ol B(s,q)$ along $c$. The following diagram will illustrate the argument:
  \[
    \begin{tikzcd}[column sep=5ex, row sep=4ex]
    Q
    \ar[tail, bend right, swap]{ddr}{q}
    \ar[dashed, bend left=13,drr,"v", pos=.7]
    \ar[dashed, bend left=15, drrr, "w", pos=.8]
    \ar[dashed,tail]{dr}{u}
    \\[-2ex]
    &[1em] R
    \rar[dashed]{j}
    \ar[tail, swap]{d}{r}
    \pullbackangle{-45}
    &[1em] \ol{B}(S,Q)
    \rar[dashed]{k}
    \ar[tail, swap]{d}{\ol{B}(s,q)}
    \pullbackangle{-45}
    & B(S,Q)
    \ar[tail]{d}{B(S,q)}
    \\
    & Y
    \ar{r}{c}
    & B(X,Y)
    \ar{r}{B(s,Y)}
    & B(S,Y)
  \end{tikzcd}
\]
First, suppose that $Q \leq \iimg{c}{\ol{B}(S,Q)}$, in other words
assume the existence of a necessarily unique $u \c Q \pred{} R$ such that $r
\comp u = q$. The requisite subcoalgebra structure is given by $k \comp j \comp
u$. For the opposite direction, suppose that there exists necessarily unique 
$w\c Q \to B(S,Q)$, such that $B(S,q) \comp w = B(s,Y) \comp c \comp q$. The
universal property of $\ol{B}(S,Q)$ induces unique $v$ and, subsequently, the
universal property of $R$ induces unique $u \c Q \pred{} R$ such that $n
\comp r = q$, completing the case.
\end{proof}
}

\section{Predicate Liftings of Free Monads}\label{sec:free-monad-lift}
Let $\Ar{\C}$ be the arrow category of $\C$. Its objects are triples $(X,P,m_P)$ consisting of two objects $X,P\in \C$ and a morphism $m_P\colon P\to X$; we usually write $(X,P)$ for $(X,P,m_P)$. A morphism from $(X,P)$ to $(Y,Q)$ is a pair of morphisms $h=(h_0\colon X\to Y, h_1\colon P\to Q)$ such that $h_0\comp m_P=m_Q\comp h_1$. Predicates form a full reflective subcategory $\Pred{\C}\monoto \Ar{\C}$; the reflector $(-)^\dag$ sends $(X,P)\in \Ar{\C}$ to the predicate $(X,P^\dag)\in \Pred{\C}$ obtained via the (strong epi, mono)-factorization of $m_P$:
\[
\begin{tikzcd}
P \ar[shiftarr={yshift=1.5em}]{rr}{m_P} \ar[two heads]{r}{e_{P^\dag}} & P^\dag \ar[tail]{r}{m_{P^\dag}} & X
\end{tikzcd}
\] 
\begin{rem}\label{rem:prod-coprod-vs-dag}
If strong epimorphisms in $\C$ are product-stable, then
\[ ((X,P)\times (Y,Q))^\dag = ((X,P)\times (Y,Q))^\dag \qquad\text{for all $(X,P),(Y,Q)\in \Ar{\C}$}. \]
Dually, if monomorphisms in $\C$ are coproduct-stable (which holds if $\C$ is extensive),
\[ ((X,P)+(Y,Q))^\dag = ((X,P)+(Y,Q))^\dag \qquad\text{for all $(X,P),(Y,Q)\in \Ar{\C}$}. \]
\end{rem}

Every endofunctor $\Sigma\colon \C\to \C$ admits a canonical lifting 
\[\ol{\Sigma}_{\mathbf{Ar}}\colon \Ar{\C}\to\Ar{\C},\qquad (X,P) \mapsto (\Sigma X, \Sigma P, \Sigma m_P).
\]
Note that the canonical predicate lifting $\ol{\Sigma}=\ol{\Sigma}_{\mathbf{Pred}}\colon \Pred{\C}\to \Pred{\C}$ is obtained by taking the canonical lifting to the arrow category and applying the reflector:
\[ \ol{\Sigma}_{\mathbf{Pred}}(X,P) = (\ol{\Sigma}_{\mathbf{Ar}}(X,P))^\dag. \]

\begin{lem}\label{lem:ol-sigma-vs-dag}
If $\Sigma$ preserves strong epimorphisms, then the following diagram commutes:
\[
\begin{tikzcd}
\Ar{\C} \ar{r}{\ol{\Sigma}_{\mathbf{Ar}}} \ar{d}[swap]{(-)^\dag} & \Ar{\C} \ar{d}{(-)^\dag} \\
\Pred{\C} \ar{r}{\ol{\Sigma}_{\mathbf{Pred}}} & \Pred{\C} 
\end{tikzcd}
\]
\end{lem}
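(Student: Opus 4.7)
The plan is to unfold the two composites and reduce the claim to a statement about (strong epi, mono)-factorizations in $\C$. Concretely, for an object $(X,P,m_P)\in\Ar{\C}$, the upper-right composite $(\ol{\Sigma}_{\mathbf{Ar}}(X,P))^\dag$ is by definition the image of $\Sigma m_P\c \Sigma P\to \Sigma X$, while the lower-left composite $\ol{\Sigma}_{\mathbf{Pred}}((X,P)^\dag)$ is the image of $\Sigma m_{P^\dag}\c \Sigma P^\dag\to \Sigma X$. So the theorem reduces to showing that these two morphisms into $\Sigma X$ have the same image as subobjects, and that this identification is natural in $(X,P)$.

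I would start from the factorization $m_P = m_{P^\dag}\comp e_{P^\dag}$ that defines $(-)^\dag$. Applying $\Sigma$ yields $\Sigma m_P = \Sigma m_{P^\dag}\comp \Sigma e_{P^\dag}$, and by the preservation hypothesis the morphism $\Sigma e_{P^\dag}$ is a strong epimorphism. Next, factor $\Sigma m_{P^\dag}$ as a strong epi followed by a mono, say $\Sigma m_{P^\dag} = m\comp e$; by definition $m$ is the image of $\Sigma m_{P^\dag}$. Combining gives
\[
    \Sigma m_P \;=\; m\comp (e\comp \Sigma e_{P^\dag}).
\]
Since strong epimorphisms are closed under composition, $e\comp \Sigma e_{P^\dag}$ is a strong epi, so the above is a (strong epi, mono)-factorization of $\Sigma m_P$. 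Uniqueness of such factorizations up to isomorphism then gives an iso between the image of $\Sigma m_P$ and $m$, equal as subobjects of $\Sigma X$.

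For the naturality in $(X,P)$, I would use the standard diagonal fill-in property: given a morphism $h\c(X,P)\to(Y,Q)$ in $\Ar{\C}$, both $(-)^\dag$ and the canonical lifting act by diagonal fill-in, so the induced comparison between $(\ol{\Sigma}_{\mathbf{Ar}}h)^\dag$ and $\ol{\Sigma}_{\mathbf{Pred}}(h^\dag)$ is uniquely determined by the mono parts, and the isomorphism of images constructed above is natural.

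The main obstacle is essentially nil: the argument relies only on standard properties of (strong epi, mono)-factorizations available under \Cref{assumptions} (existence, uniqueness up to isomorphism, and closure of strong epis under composition), plus the hypothesis that $\Sigma$ preserves strong epimorphisms. The only point worth being careful about is keeping the isomorphism between the two images natural in $(X,P)$, so the statement holds as an equality of functors $\Ar{\C}\to \Pred{\C}$ and not merely pointwise.
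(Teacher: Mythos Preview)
Your proposal is correct and follows essentially the same argument as the paper: both reduce the claim to showing that $\Sigma m_P$ and $\Sigma m_{P^\dag}$ have the same image in $\Sigma X$, by writing $\Sigma m_P = \Sigma m_{P^\dag}\comp \Sigma e_{P^\dag}$, using that $\Sigma e_{P^\dag}$ is a strong epi, and invoking uniqueness of (strong epi, mono)-factorizations. You are in fact slightly more thorough than the paper in explicitly addressing naturality in $(X,P)$; the paper's proof only treats the objectwise isomorphism.
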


\begin{proof}
For each $(X,P)\in \Ar{\C}$ we have
\[
(\ol{\Sigma}_{\mathbf{Ar}}(X,P))^\dag = (\Sigma X,\Sigma P)^\dag \cong (\Sigma X, \Sigma P^\dag)^\dag  = \ol{\Sigma}_{\mathbf{Pred}}((X,P)^\dag) 
\]
The isomorphism in the second step follows from the commutative diagram below and the uniqueness of (strong epi, mono)-factorizations. Note that $\Sigma e_P$ is a strong epimorphism by our assumption that $\Sigma$ preserves strong epimorphisms.
\[
\begin{tikzcd}[column sep=12em, row sep=5em]
\Sigma P \ar[two heads]{d}[swap]{e_{\Sigma P}} \ar{drr}[swap]{\Sigma m_P} \ar[two heads]{r}{\Sigma e_{P}} & \Sigma P^\dag \ar[two heads]{r}{e_{(\Sigma P^\dag)^\dag}} \ar{dr}{\Sigma m_P} & (\Sigma P^\dag)^\dag \ar[tail]{d}{m_{(\Sigma P^\dag)^\dag}} \\
(\Sigma P)^\dag \ar[tail]{rr}{m_{(\Sigma P)^\dag}} & & \Sigma X
\end{tikzcd}
\]
\end{proof}
Since $(-)^\dag$ is a left adjoint, the above lemma and \cite[Thm.~2.14]{hj98} yield

\begin{corollary}\label{cor:free-algebra-adjunction}
If $\Sigma$ preserves strong epimorphisms, there is an adjunction
\[
\begin{tikzcd}
\Alg(\ol{\Sigma}_{\mathbf{Ar}}) \ar[phantom]{r}{\bot} \ar[bend left=1em]{r}{F} & \Alg(\ol{\Sigma}_{\mathbf{Pred}}) \ar[bend left=1em]{l}{U}
\end{tikzcd}
\]
where the right adjoint $U$ is given by
\[  
(\ol{\Sigma}_{\mathbf{Pred}}(A,P)\xto{a} (A,P)) \;\;\mapsto\;\; (\ol{\Sigma}_{\mathbf{Ar}}(A,P) \xto{(\id,e_{\Sigma P})} \ol{\Sigma}_{\mathbf{Pred}}(A,P)\xto{a} (A,P))\qqand h\mapsto h,   
\]
the left adjoint $F$ by
\[ (\ol{\Sigma}_{\mathbf{Ar}}(A,P)\xto{a} (A,P)) \;\;\mapsto\;\;  (\ol{\Sigma}_{\mathbf{Pred}}(A,P^\dag) \xto{a^\dag} (A,P^\dag)) \qqand h\mapsto h^\dag, \]
and the unit at $((A,P),a)\in \Alg(\ol{\Sigma}_{\mathbf{Ar}})$ by
\[ (\id,e_{P})\c ((A,P),a)\to UF((A,P),a). \]
\end{corollary}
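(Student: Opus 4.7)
The plan is to invoke the general lifting principle of \cite[Thm.~2.14]{hj98}: given an adjunction $L \dashv R$ and endofunctors $T,T'$ on the respective domains such that $LT = T'L$, the adjunction lifts to a corresponding adjunction between $\Alg(T)$ and $\Alg(T')$. In our setting, the left adjoint $L = (-)^\dag \c \Ar{\C} \to \Pred{\C}$ is the reflector and the right adjoint $R = \incl \c \Pred{\C} \hookrightarrow \Ar{\C}$ is the full inclusion, while $T = \ol{\Sigma}_{\mathbf{Ar}}$ and $T' = \ol{\Sigma}_{\mathbf{Pred}}$. The required coherence $(-)^\dag \comp \ol{\Sigma}_{\mathbf{Ar}} = \ol{\Sigma}_{\mathbf{Pred}} \comp (-)^\dag$ is precisely Lemma~\ref{lem:ol-sigma-vs-dag}, whose proof uses the standing hypothesis that $\Sigma$ preserves strong epimorphisms.

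With the theorem invoked, I would unfold the general construction and match it against the explicit descriptions of $F$, $U$, and the unit given in the statement. For $F$, applying $(-)^\dag$ to an algebra $a \c \ol{\Sigma}_{\mathbf{Ar}}(A,P)\to (A,P)$ yields $a^\dag \c (\ol{\Sigma}_{\mathbf{Ar}}(A,P))^\dag \to (A,P)^\dag$; by Lemma~\ref{lem:ol-sigma-vs-dag} the source rewrites as $\ol{\Sigma}_{\mathbf{Pred}}(A,P^\dag)$, giving the stated algebra $((A,P^\dag), a^\dag)$. For $U$, the standard mate of the coherence square is a natural transformation $\alpha \c \ol{\Sigma}_{\mathbf{Ar}}\,\incl \To \incl\,\ol{\Sigma}_{\mathbf{Pred}}$ built from the unit of $(-)^\dag \dashv \incl$; since $\incl$ is fully faithful, its counit is an identity and $\alpha$ reduces at $(A,P) \in \Pred{\C}$ to the $\Ar{\C}$-morphism $(\id, e_{\Sigma P}) \c \ol{\Sigma}_{\mathbf{Ar}}(A,P) \to \ol{\Sigma}_{\mathbf{Pred}}(A,P)$, as follows from the (strong epi, mono)-factorization of $\Sigma m_P$. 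The right adjoint then precomposes a $\ol{\Sigma}_{\mathbf{Pred}}$-algebra with $\alpha$, matching the formula in the statement.

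Finally, the unit of the lifted adjunction is inherited verbatim from the original unit $(\id, e_P) \c (A,P) \to (A, P^\dag)$ of $(-)^\dag \dashv \incl$; one verifies that it is in fact an $\ol{\Sigma}_{\mathbf{Ar}}$-algebra morphism $((A,P),a) \to UF((A,P),a)$ by a routine diagram chase exploiting that $a^\dag$ is the unique diagonal fill-in induced by $\Sigma m_P = m_{(\Sigma P)^\dag} \comp e_{\Sigma P}$. The only genuine obstacle in putting this down on paper is bookkeeping: keeping track of the direction of the coherence equation and computing its mate correctly so that the explicit formulae for $U$ and the unit given in the statement emerge in the expected form. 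Once this is arranged, both the adjunction and the triangle identities follow mechanically from the cited theorem; no further hypothesis on $\C$ or $\Sigma$ beyond preservation of strong epimorphisms is needed.
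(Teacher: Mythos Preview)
Your proposal is correct and follows essentially the same approach as the paper: the paper's proof is a one-line invocation of Lemma~\ref{lem:ol-sigma-vs-dag} together with \cite[Thm.~2.14]{hj98}, using that $(-)^\dag$ is a left adjoint. Your additional unfolding of the explicit formulae for $F$, $U$, and the unit is more detailed than what the paper provides, but the underlying argument is identical.
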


We can now state our main result about predicate liftings of free monads, which in particular proves~\Cref{prop:free-monad-lift}.
\begin{prop}\label{prop:free-monad-lift-pred}
If $\Sigma$ preserves strong epimorphisms, then the canonical liftings $\ol{\Sigma}_{\mathbf{Ar}}$ and $\ol{\Sigma}_{\mathbf{Pred}}$ generate free monads, and they are given by the canonical liftings of the free monad generated by $\Sigma$:
\[ (\ol{\Sigma}_{\mathbf{Ar}})^\star = \ol{\Sigmas}_{\mathbf{Ar}} \qqand (\ol{\Sigma}_{\mathbf{Pred}})^\star = \ol{\Sigmas}_{\mathbf{Pred}} \]
\end{prop}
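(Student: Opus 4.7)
The plan is to handle the two equalities in turn, starting with the (simpler) arrow-category version and then deducing the predicate case via the adjunction $F\dashv U$ of \Cref{cor:free-algebra-adjunction}.

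For $(\ol{\Sigma}_{\mathbf{Ar}})^\star = \ol{\Sigmas}_{\mathbf{Ar}}$, the key observation is that $\Ar{\C}$ is equivalent to the functor category $\C^{\to}$ (with $\to$ the walking arrow), and $\ol{\Sigma}_{\mathbf{Ar}}$ is simply $\Sigma$ applied componentwise to both endpoints and to the connecting morphism. Free algebras and the induced free monad in such a functor category are therefore computed componentwise: given $(X,P,m_P)\in \Ar{\C}$, the free $\ol{\Sigma}_{\mathbf{Ar}}$-algebra on it is $(\Sigmas X,\Sigmas P,\Sigmas m_P)$ with structure $(\iota_X,\iota_P)$ and unit $(\eta_X,\eta_P)$, which is precisely $\ol{\Sigmas}_{\mathbf{Ar}}(X,P)$. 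Uniqueness of the extending algebra morphism follows pair-wise from the universal property of $\Sigmas$ on each component.

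For $(\ol{\Sigma}_{\mathbf{Pred}})^\star = \ol{\Sigmas}_{\mathbf{Pred}}$, I would fix $(X,P)\in \Pred{\C}$ and show directly that $(\ol{\Sigmas}_{\mathbf{Pred}}(X,P),\iota^\dag)$ together with the unit $(\eta_X,\,e_{(\Sigmas P)^\dag}\comp\eta_P)$ satisfies the universal property of a free $\ol{\Sigma}_{\mathbf{Pred}}$-algebra. Let $((A,Q),a)$ be any $\ol{\Sigma}_{\mathbf{Pred}}$-algebra and $(f,g)\c (X,P)\to (A,Q)$ a morphism in $\Pred{\C}$. Chaining the relevant adjunctions gives
\begin{align*}
\Hom_{\Alg(\ol{\Sigma}_{\mathbf{Pred}})}\bigl(F(\ol{\Sigmas}_{\mathbf{Ar}}(X,P),\iota),\ ((A,Q),a)\bigr)
&\cong \Hom_{\Alg(\ol{\Sigma}_{\mathbf{Ar}})}\bigl((\ol{\Sigmas}_{\mathbf{Ar}}(X,P),\iota),\ U((A,Q),a)\bigr)\\
&\cong \Hom_{\Ar{\C}}\bigl((X,P),\ U((A,Q),a)\bigr)\\
&\cong \Hom_{\Pred{\C}}\bigl((X,P),\ (A,Q)\bigr),
\end{align*}
where the first iso is $F\dashv U$ of \Cref{cor:free-algebra-adjunction}, the second uses step~1 (freeness in $\Ar{\C}$), and the third uses that $\Pred{\C}$ is a full subcategory of $\Ar{\C}$ and the underlying object $(A,Q)$ of $U((A,Q),a)$ lies in $\Pred{\C}$. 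Moreover, $F(\ol{\Sigmas}_{\mathbf{Ar}}(X,P),\iota) = (\ol{\Sigmas}_{\mathbf{Pred}}(X,P),\iota^\dag)$ by \Cref{lem:ol-sigma-vs-dag} and the description of $F$ in \Cref{cor:free-algebra-adjunction}.

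The main (mild) obstacle will be tracking what this composite bijection does on units, so that one can read off the promised universal property with unit $(\eta_X,\,e_{(\Sigmas P)^\dag}\comp\eta_P)$, and verifying that the resulting assignment $(X,P)\mapsto \ol{\Sigmas}_{\mathbf{Pred}}(X,P)$ is functorial and agrees with the free-monad structure generated by $\ol{\Sigma}_{\mathbf{Pred}}$ on both object part and on its unit/multiplication. Both boil down to uniqueness in the universal property just established and the naturality of $e_{(-)^\dag}$; no further hypotheses beyond preservation of strong epimorphisms by $\Sigma$ are needed, since this already ensures that $\Sigmas$ preserves strong epimorphisms and hence that \Cref{lem:ol-sigma-vs-dag} applies to $\Sigmas$ as well.
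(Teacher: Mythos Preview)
Your proposal is correct and follows essentially the same approach as the paper: componentwise verification of the universal property in $\Ar{\C}$, then transporting freeness to $\Pred{\C}$ via the adjunction $F\dashv U$ of \Cref{cor:free-algebra-adjunction}, which is exactly what the paper means by ``immediate from part~(1) and \Cref{cor:free-algebra-adjunction}''. Your hom-set chain spells out this transport explicitly; the only superfluous remark is the appeal to \Cref{lem:ol-sigma-vs-dag} for $\Sigmas$, since the identification $F(\ol{\Sigmas}_{\mathbf{Ar}}(X,P),\iota)=(\ol{\Sigmas}_{\mathbf{Pred}}(X,P),\iota^\dag)$ is just the definition of $\ol{(-)}_{\mathbf{Pred}}$ together with the description of $F$.
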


\begin{proof}
Let $(\Sigmas,\eta,\mu)$ be the free monad generated by $\Sigma$, with 
associated natural transformation \[\iota\c \Sigma\Sigmas\to
  \Sigmas;\] that is, $\iota_X\c \Sigma\Sigmas X\to \Sigmas X$ is the
free $\Sigma$-algebra on $X$ with the universal morphism $\eta_X\c X
\to \Sigmas X$.
\begin{enumerate} 
\item\label{lem:free-monad-lift-1} We first consider the canonical lifting $\ol{\Sigma}=\ol{\Sigma}_{\mathbf{Ar}}$ to the arrow category. It suffices to show that a free $\ol{\Sigma}$-algebra on $(X,P)\in \Ar{\C}$ is given by 
\[ 
  \ol{\Sigma}\,\barSigmas(X,P)=(\Sigma\Sigmas X,\Sigma\Sigmas P)
  \xra{(\iota_X,\iota_P)} 
(\Sigmas X, \Sigmas P) = \barSigmas(X,P) 
\] 
with the universal morphism
\[ 
(X,P)
\xra{(\eta_X,\eta_P)}
 (\Sigmas X,\Sigmas P)=\barSigmas(X,P). 
\]
To prove this, we verify the required universal property. Thus suppose that we are given a $\ol{\Sigma}$-algebra $a\c \ol{\Sigma}(A,Q)\to (A,Q)$ and an $\Ar{\C}$-morphism $h\c (X,P)\to (A,Q)$. The universal property of the free $\Sigma$-algebra $(\Sigmas X,\iota_X)$ yields a unique $\Sigma$-algebra morphism
\[ \ol{h}_0\c (\Sigmas X,\iota_X)\to (A,a_0)\qquad\text{such that}\qquad \ol{h}_0\comp \eta_X=h_0, \]
and similarly the universal property of $(\Sigmas P,\iota_P)$ yields a unique $\Sigma$-algebra morphism
\[ \ol{h}_1\c (\Sigmas P,\iota_P) \to (Q,a_1)\qquad\text{such that}\qquad \ol{h}_1\comp \eta_P=h_1.\]
This implies that the $\ol{\Sigma}$-algebra morphism
\[\ol{h}=(\ol{h}_0,\ol{h}_1)\c (\Sigmas(X,P),(i_X,i_P))\to ((A,Q),a)\]
satisfies $\ol{h}\comp (\eta_X,\eta_P) = h$, and it is clearly unique with that property. 
\item Now consider the canonical predicate lifting ${\widetilde{\Sigma}}=\ol{\Sigma}_{\mathbf{Pred}}$. It suffices to show that a free $\widetilde{\Sigma}$-algebra on $(X,P)\in \Pred{\C}$ is given by
\[  
\widetilde{\Sigma}\widetilde{\Sigmas}(X,P) \xto{(\iota_X,\iota_P)^\dag} \widetilde{\Sigmas}(X,P)
\]
with the universal morphism 
\[ (X,P) \xto{(\eta_X,\eta_P)^\dag} \widetilde{\Sigmas}(X,P). \]
But this is immediate from part \ref{lem:free-monad-lift-1} above and \Cref{cor:free-algebra-adjunction}.\qedhere
\end{enumerate}
\end{proof}

\section{Predicate Liftings of Higher-Order GSOS Laws}\label{appsec:liftings}

\begin{prop}
If $\Sigma$ preserves strong epimorphisms, and strong epimorphisms are pullback-stable, then $\rho$ lifts to higher-order GSOS law of $\ol{\Sigma}$ over $\ol{B}$. 
\end{prop}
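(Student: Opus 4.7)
The plan is to construct, for each pair of predicates $p\colon P\monoto X$ and $q\colon Q\monoto Y$ (with any $V$-pointing compatible in the obvious way), a single $\Pred{\C}$-morphism
\[ \ol{\rho}_{P,Q}\colon \ol{\Sigma}(P\times \ol{B}(P,Q)) \to \ol{B}(P,\ol{\Sigmas}(P+Q)) \]
whose underlying $\C$-morphism is $\rho_{X,Y}$, and then inherit (di)naturality from that of $\rho$. Note that $\ol{\Sigmas}$ does equal $\ol{\Sigma}^\star$ by \Cref{prop:free-monad-lift} under the hypothesis that $\Sigma$ preserves strong epimorphisms, so the codomain above is correct.

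The heart of the argument is a diagram chase using the pullback $T_{P,Q}$ of \Cref{prop:liftingb}, with projections $r_{P,Q}\colon T_{P,Q}\to B(X,Y)$ and $s_{P,Q}\colon T_{P,Q}\to B(P,Q)$ satisfying $B(p,\id)\comp r_{P,Q} = B(\id,q)\comp s_{P,Q}$, and strong-epi/mono factorization $T_{P,Q}\xepito{e_{P,Q}}\ol{B}(P,Q)\xmonoto{m_{P,Q}} B(X,Y)$. Applying dinaturality of $\rho$ in $X$ along $p$ followed by naturality in $Y$ along $q$ (plus the pullback equation and functoriality of $B$) establishes, as morphisms $\Sigma(P\times T_{P,Q})\to B(P,\Sigmas(X+Y))$, the identity
\[ B(p,\id)\comp \rho_{X,Y}\comp \Sigma(p\times r_{P,Q}) \;=\; B(\id,\Sigmas(p+q))\comp \rho_{P,Q}\comp \Sigma(\id_P\times s_{P,Q}). \]
Factoring $\Sigmas(p+q)$ as $\Sigmas(P+Q)\epito \ol{\Sigmas}(P+Q)\xmonoto{\widetilde{p+q}} \Sigmas(X+Y)$ (again by \Cref{prop:free-monad-lift}) presents the right-hand side as $B(\id,\widetilde{p+q})$ applied to a morphism into $B(P,\ol{\Sigmas}(P+Q))$, so the universal property of the pullback defining $\ol{B}(P,\ol{\Sigmas}(P+Q))$ induces a morphism $\Sigma(P\times T_{P,Q})\to T_{P,\ol{\Sigmas}(P+Q)}$, and composing with the defining epi gives $f\colon \Sigma(P\times T_{P,Q})\to \ol{B}(P,\ol{\Sigmas}(P+Q))$.

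Next I would descend $f$ twice by diagonal fill-in. First, $\Sigma(\id_P\times e_{P,Q})\colon \Sigma(P\times T_{P,Q})\epito \Sigma(P\times \ol{B}(P,Q))$ is a strong epimorphism, because strong epis are product-stable by \Cref{rem:strong-epis-stable-under-products} and $\Sigma$ preserves them by hypothesis; paired against the mono $\ol{B}(P,\ol{\Sigmas}(P+Q))\monoto B(X,\Sigmas(X+Y))$, fill-in produces a morphism $\Sigma(P\times\ol{B}(P,Q))\to \ol{B}(P,\ol{\Sigmas}(P+Q))$. Second, using the image factorization $\Sigma(P\times\ol{B}(P,Q))\epito \ol{\Sigma}(P\times\ol{B}(P,Q))$ (strong epi because $\Sigma$ preserves strong epis and $p\times m_{P,Q}$ is already monic in the extensive setting), fill-in against the same mono gives the underlying map of the desired $\ol{\rho}_{P,Q}$; that it is a $\Pred{\C}$-morphism carried by $\rho_{X,Y}$ is immediate from the fill-in square.

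Dinaturality of $\ol{\rho}$ in $P$ and naturality in $Q$ then follow without further calculation: a $\Pred{\C}$-morphism into a predicate is uniquely determined by its underlying $\C$-morphism, so the corresponding squares in $\Pred{\C}$ commute as soon as their underlying squares in $\C$ do, and the latter are exactly (di)naturality of $\rho$. The main obstacle is the first-step chase establishing the matching identity: weaving together dinaturality along $p$, naturality along $q$, and the pullback equation for $T_{P,Q}$ is straightforward but notationally dense, and one must also keep track of the $V$-pointing condition (that $p$ lives in $V/\C$) in the $V$-pointed case; for $V=0$ this is vacuous, and the general case proceeds on the same template by restricting to $V$-compatible predicates.
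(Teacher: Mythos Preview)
Your proposal is correct and follows essentially the same approach as the paper: build a morphism out of $\Sigma(P\times T_{P,Q})$ using the (di)naturality of $\rho$ together with the pullback equation for $T_{P,Q}$, land in the pullback defining $\ol{B}$ at the codomain, and then descend along strong epimorphisms to obtain the $\Pred{\C}$-morphism carried by $\rho_{X,Y}$; (di)naturality is inherited by faithfulness of $\under{-}$.

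The only noteworthy difference is one of packaging. The paper first produces an $\Ar{\C}$-morphism $(\rho_{X,Y},\wt{\rho}_{P,Q})$ into $(B(X,\Sigmas(X+Y)),T_{P,\Sigmas(P+Q)})$ and then applies the reflector $(-)^\dag$ in one step, using \Cref{lem:ol-sigma-vs-dag} and \Cref{rem:prod-coprod-vs-dag} to identify the reflected domain with $\ol{\Sigma}((X,P)\times\ol{B}((X,P),(Y,Q)))$, and a separate pullback comparison to bound the reflected codomain by $\ol{B}((X,P),\ol{\Sigmas}((X,P)+(Y,Q)))$. You instead factor $\Sigmas(p+q)$ up front to target $T_{P,\ol{\Sigmas}(P+Q)}$ directly, and then perform two explicit diagonal fill-ins. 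Both arguments unwind to the same computation; your version is a bit more hands-on, the paper's is more systematic in that it leverages the arrow-category machinery set up for \Cref{prop:free-monad-lift}.
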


\begin{proof}
It suffices to show that for each $(X,P),(Y,Q)\in \Pred{\C}$ there exists a (necessarily unique) morphism 
\begin{equation}\label{eq:rho-lift}\ol{\rho}_{(X,P), (Y,Q)}\colon \ol{\Sigma}((X,P)\times \ol{B}((X,P),(Y,Q)) \to \ol{B}((X,P), \ol{\Sigma}^\star((X,P)+(Y,Q)))
\end{equation}
in $\Pred{\C}$ such that \[(\ol{\rho}_{(X,P),(Y,Q)})_0=\rho_{X,Y}.\] (Di)naturality of $\ol{\rho}$ then follows from that $\rho$, using that the forgetful functor from $\Pred{\C}$ to $\C$ is faithful. For this purpose, consider first arbitrary arrows $(X,P), (Y,Q)\in \Ar{\C}$ and form the following pullback:
\begin{equation}\label{eq:t-p-q}
\begin{tikzcd}[column sep=3em]
T_{P,Q} \pullbackangle{-45} \ar{r}{s_{P,Q}} \ar{d}[swap]{r_{P,Q}} & B(P,Q) \ar{d}{B(\id,\out_Q)} \\
B(X,Y) \ar{r}{B(\out_P,\id)} & B(P,Y) 
\end{tikzcd}
\end{equation}
In particular, the universal property of the pullback $T_{P,\Sigmas(P+Q)}$ yields the unique dashed arrow $\wt{\rho}_{P,Q}$ making the upper and the left-hand cell of the diagram below commute; note that the outside and all other cells commute either by definition or by (di)naturality of $\rho$.  
\[
\begin{tikzcd}[scale cd=.7, column sep=2em]
T_{P,\Sigmas(P+Q)} \ar[bend left=1em]{drrrr}{s_{P,\Sigmas(P+Q)}} \ar[shiftarr={xshift=-4em}]{ddd}[swap]{r_{P,\Sigmas(P+Q)}} & & & & \\
\Sigma(P\times T_{P,Q}) \ar[dashed]{u}{\wt{\rho}_{P,Q}} \ar{d}[swap]{\Sigma(m_P\times r_{P,Q})} \ar{rr}{\Sigma(\id\times s_{P,Q})} \ar{dr}{\Sigma(\id\times r_{P,Q})} & & \Sigma(P\times B(P,Q)) \ar{rr}{\rho_{P,Q}} \ar{d}{\Sigma(\id\times B(\id,m_Q))} & & B(P,\Sigmas(P+Q)) \ar{dd}{B(\id,\Sigmas(m_P+m_Q))} \ar{dl}[swap]{B(\id,\Sigmas(\id+m_Q))} \\
\Sigma(X\times B(X,Y) \ar{d}[swap]{\rho_{X,Y}} & \Sigma(P\times B(X,Y))  \ar{l}[swap]{\Sigma(m_P\times \id)} \ar{r}[yshift=.5em]{\Sigma(\id\times B(m_P,\id))} & \Sigma(P\times B(P,Y)) \ar{r}{\rho_{P,Y}} & B(P,\Sigmas(P+Y)) \ar{dr}[description]{B(\id,\Sigmas(m_P+\id))} &  \\
B(X,\Sigmas(X+Y)) \ar{rrrr}{B(m_P,\id)} & & & & B(P,\Sigmas(X+Y)) 
\end{tikzcd}
\]
The left-hand cell states that
\[(\rho_{X,Y},\wt{\rho}_{P,Q})\colon (\Sigma(X\times B(X,Y)), \Sigma(P\times T_{P,Q}), \Sigma(m_P\times r_{P,Q})) \to (B(X,\Sigmas(X+Y)), T_{P,\Sigmas(P+Q)}, r_{P,\Sigmas(P+Q)})\]
is a morphism of $\Ar{\C}$. Note that the domain is equal to $\ol{\Sigma}_{\mathbf{Ar}}((X,P)\times (B(X,Y), T_{P,Q})$. Assuming $(X,P),(Y,Q)\in \Pred{\C}$ and applying $(-)^\dag$ yields
\begin{align*} 
\ol{\Sigma}((X,P)\times \ol{B}((X,P),(Y,Q)) &= (\ol{\Sigma}_{\mathbf{Ar}}((X,P)\times (B(X,Y), T_{P,Q})^\dag))^\dag \\
 &= (\ol{\Sigma}_{\mathbf{Ar}}((X,P)^\dag\times (B(X,Y), T_{P,Q})^\dag))^\dag \\
&= (\ol{\Sigma}_{\mathbf{Ar}}(((X,P)\times (B(X,Y), T_{P,Q}))^\dag)^\dag \\
&= (\ol{\Sigma}_{\mathbf{Ar}}((X,P)\times (B(X,Y), T_{P,Q}))^\dag.
\end{align*}
Here, we use \autoref{rem:prod-coprod-vs-dag} and the third step and \autoref{lem:ol-sigma-vs-dag} in the fourth step. 

Next, the universal property of the pullback $T_{P,(\Sigmas(P+Q))^\dag}$ yields the unique dashed morphism making the diagram below commute:
\[
\begin{tikzcd}
T_{P,\Sigmas(P+Q)} \ar{r}{s_{P,\Sigmas(P+Q)}} \ar[dashed]{d} \ar[shiftarr={xshift=-7em}]{dd}[swap]{r_{P,\Sigmas(P+Q)}} & B(P,\Sigmas(P+Q)) \ar{d}{B(\id,e_{(\Sigmas(P+Q))^\dag})} \ar[shiftarr={xshift=7em}]{dd}{B(\id,\Sigmas(m_P+m_Q))} \\
T_{P,(\Sigmas(P+Q))^\dag} \ar{r}{s_{P,(\Sigmas(P+Q))^\dag}} \ar{d}[swap]{r_{P,(\Sigmas(P+Q))^\dag}} & B(P,(\Sigmas(P+Q))^\dag \ar{d}{B(\id,m_{(\Sigmas(P+Q))^\dag})} \\
B(X,\Sigmas(X+Y)) \ar{r}{B(m_P,\id)} & B(P,\Sigmas(X+Y))
\end{tikzcd}
\]
In particular, this proves that the image of $r_{P,\Sigmas(P+Q)}$ is contained in that of $r_{P,(\Sigmas(P+Q))^\dag}$; in other words, $(B(X,\Sigmas(X+Y)), T_{P,\Sigmas(P+Q)}, r_{P,\Sigmas(P+Q)})^\dag$ is contained in $\ol{B}((X,P),\ol{\Sigma}^\star((X,P)+(Y,Q))$. 

We conclude that $(\rho_{X,Y},\wt{\rho}_{P,Q})^\dag$ is a $\Pred{\C}$-morphism with domain $\ol{\Sigma}((X,P)\times \ol{B}((X,P),(Y,Q))$ and codomain contained in $\ol{B}((X,P),\ol{\Sigma}^\star((X,P)+(Y,Q))$, which proves the proposition. 
\end{proof}

\section{\texorpdfstring{On $\lambda$-laws and the Simply Typed
    $\lambda$-Calculus}
  {On lambda-laws and the Simply Typed Lambda Calculus}}
\label{app:lambdashenanigans}

The goal of this section is to expand on \Cref{sec:lambda-laws}; in particular, we will introduce $\lambda$-laws and their properties. We
begin by equipping our ambient category $\C$, subject to \Cref{assumptions},
with a closed monoidal structure $(\C, \Pt, \mon,\monto)$, meant to abstract
from the internal mechanisms of variable
management~\cite{DBLP:conf/lics/Fiore08}.
We introduce the following notation for transposes:
\begin{equation}\label{eq:natBijAdj}
  \begin{tikzcd}
    \C(X \mon Y, Z) \ar[bend left=1em]{r}{(\argument)^{\flat}}
    & \C(X, Y\monto Z) \ar[bend left=1em]{l}{(\argument)^{\sharp}}
  \end{tikzcd}
\end{equation}
Let us denote by
$\mathrm{ev}_{X,Y}$ the induced evaluation transformation $\id_{X\monto
  Y}^\sharp\c (X\monto Y) \mon X \to Y$, 
$\mathrm{comp}_{X,Y}$ the corresponding internal composition 
$(\mathrm{ev}_{Y,Z}\comp ((Y\monto
Z)\mon\mathrm{ev}_{X,Y})\comp\mathrm{assoc}_{Y\monto Z,X\monto Y,X})^\flat \c
(Y\monto Z) \mon (X\monto Y) \to (X\monto Z)$ and $\mathrm{assoc}$ the obvious
associativity transformation.

\subsection{Predicate liftings of closed monoidal structures}

The following properties related to predicate
liftings of the closed monoidal structures will be of use later on.

\begin{lem}\label{lemma:monPreservesStrongEpis}
	Let $(\C, \Pt, \mon,\monto)$ be a monoidal closed category. Then $\mon$ preserves strong epis in the first variable.
\end{lem}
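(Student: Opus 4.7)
The plan is to exploit the adjunction $(\argument)\mon Z \dashv Z\monto(\argument)$ provided by the monoidal closed structure. This reduces the claim to a standard fact: in any adjunction $F\dashv G$, the left adjoint $F$ preserves strong epimorphisms, essentially because its right adjoint automatically preserves monomorphisms.

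Concretely, fix an object $Z\in\C$ and a strong epimorphism $e\c X\to Y$. I must verify two things for $e\mon Z\c X\mon Z\to Y\mon Z$: that it is an epimorphism, and that it has the diagonal filler property against all monomorphisms. For the first, since $(\argument)\mon Z$ is a left adjoint it preserves all colimits, hence in particular preserves epimorphisms, so $e\mon Z$ is epic. (Alternatively, one checks this directly from the transpose bijection~\eqref{eq:natBijAdj}.)

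For the second, consider any commutative square
\[
\begin{tikzcd}
X \mon Z \ar[r, "f"] \ar[d, "e \mon Z"'] & A \ar[d, tail, "m"] \\
Y \mon Z \ar[r, "g"'] & B
\end{tikzcd}
\]
with $m$ monic. Transposing $f$ and $g$ along the adjunction gives a commutative square
\[
\begin{tikzcd}
X \ar[r, "f^\flat"] \ar[d, "e"'] & Z \monto A \ar[d, tail, "Z \monto m"] \\
Y \ar[r, "g^\flat"'] & Z \monto B
\end{tikzcd}
\]
whose commutativity follows by naturality of $(\argument)^\flat$. Crucially, $Z\monto m$ is monic because $Z\monto(\argument)$ is a right adjoint and therefore preserves monomorphisms (indeed, all limits). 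Now strong epiness of $e$ yields a diagonal filler $h\c Y\to Z\monto A$ with $h\comp e = f^\flat$ and $(Z\monto m)\comp h = g^\flat$. Transposing back produces $h^\sharp\c Y\mon Z\to A$, and naturality of $(\argument)^\sharp$ forces $h^\sharp\comp(e\mon Z)=f$ and $m\comp h^\sharp=g$, delivering the desired filler.

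No genuine obstacle arises: the argument is purely formal and uses nothing beyond the adjunction and the fact that right adjoints preserve monos. The only minor bookkeeping is confirming that transposition commutes with pre- and post-composition along the vertical morphisms $e\mon Z$ and $m$, which is immediate from naturality of the adjunction bijection.
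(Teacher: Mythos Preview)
Your proof is correct and follows essentially the same route as the paper: transpose the lifting square along the adjunction $(\argument)\mon Z \dashv Z\monto(\argument)$, use that $Z\monto(\argument)$ preserves monomorphisms, apply the diagonal fill-in for the original strong epi $e$, then transpose back. You are in fact slightly more thorough than the paper, since you explicitly note that $e\mon Z$ is epic (via colimit preservation by the left adjoint), whereas the paper's proof only treats the diagonal fill-in.
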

\begin{proof}
	Let $e \colon A \sto B$ be a strong epi and $R$ be an object of $\C$. We need to prove that for a commutative square as below, with $m$ mono,  there exists a diagonal fill-in arrow from $B \mon R$ to $C$ making both triangles commute:
	\[
	\begin{tikzcd}
		A \mon R \ar[r,"e \mon R"] \ar[d,"u"'] & B \mon R \ar[d,"v"] \ar[dl,dashed] \\
		C \ar[r,>->,"m"] & D
	\end{tikzcd}
	\]
	Because $Y \monto \argument$ is a right adjoint for every $Y$, $\monto \colon \C^\opp \times \C \to \C$ preserves limits in the covariant argument, in particular pullbacks. Since any morphism $f \colon A \to B$ is mono if and only if
	\[
	\begin{tikzcd}
		A \ar[r,"\id_A"] \ar[d,"\id_A"'] & A \ar[d,"f"] \\
		A \ar[r,"f"'] & B
	\end{tikzcd}
	\]
	is a pullback, we have that $\monto$ preserves monos in the covariant argument, hence $R \monto m$ is mono. Therefore, given that $e$ is a strong epi, there exists an arrow $d \colon B \to (R \monto C)$ making everything commute in the diagram below:
	\[
	\begin{tikzcd}[column sep=3.5em]
	A \ar[r,"e",two heads] \ar[d,"u^\flat"'] & B \ar[d,"v^\flat"] \ar[dl,dashed,"d"'] \\
	R \monto C \ar[r,>->,"R \monto m"] & R \monto D
	\end{tikzcd}
	\]
	Our desired map is $d^\sharp \colon B \mon R \to C$. Indeed, by naturality of $(\argument)^\sharp \colon \C(B,R \monto C) \to \C(B \mon R, C)$ in~$B$ and $R$, we obtain
	\[
	d^\sharp \comp (e \mon R) = (d \comp e)^\sharp = (u^\flat)^\sharp = u,
	\]
	and by naturality of $(\argument)^\sharp$ in the third variable $C$:
	\[
	m \comp d^\sharp = ((R \monto m) \comp d)^\sharp = (v^\flat)^\sharp = v. \qedhere
	\] 
\end{proof}

\begin{prop}
	Let $(\C, \Pt, \mon,\monto)$ be a monoidal closed category with a strong-epi mono factorization system and with pullbacks, such that $\mon$ preserves strong epis in the second variable. Then $(\Pred{\C}, \Pt,  \tensorlift, \monto*)$ (where $\tensorlift$ and $\monto*$ are the canonical predicate liftings) is also monoidal closed.
\end{prop}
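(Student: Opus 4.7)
The plan is to show that the canonical liftings of the monoidal closed structure on $\C$ inherit all structural morphisms and coherence conditions componentwise, exploiting the fact that the forgetful functor $\under{-}\c \Pred{\C}\to \C$ is faithful and that $\mon$ preserves strong epimorphisms in both variables (\Cref{lemma:monPreservesStrongEpis} together with the hypothesis).

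First, I would verify that $\tensorlift$ and $\monto*$ are well-defined bifunctors on $\Pred{\C}$. For $\tensorlift$, bifunctoriality follows from the general theory of canonical liftings of covariant bifunctors once $\mon$ preserves strong epimorphisms, which holds here in both arguments. For $\monto*$, observe that $Y \monto (-)\c \C\to \C$ is a right adjoint, hence preserves monomorphisms; in particular $P \monto q\c P \monto Q \to P \monto Y$ is monic whenever $q$ is. By \Cref{prop:liftingb} together with the remark following it, $P \monto* Q$ then coincides with the pullback $T_{P,Q}$ of $P \monto q$ along $p \monto Y$, which by pullback-stability of monos is a subobject of $X \monto Y$.

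Second, I would lift the structural isomorphisms. For the associator $\alpha\c (X \mon Y) \mon Z \to X \mon (Y \mon Z)$ at predicates $p\c P \monoto X$, $q\c Q \monoto Y$, $r\c R \monoto Z$, I would first show that $(P \tensorlift Q) \tensorlift R$ agrees with the image of $(p \mon q) \mon r\c (P \mon Q) \mon R \to (X \mon Y) \mon Z$. Indeed, the canonical strong epi $P\mon Q \twoheadrightarrow P\tensorlift Q$ yields, under $\mon R$, a strong epi $(P\mon Q)\mon R \twoheadrightarrow (P\tensorlift Q)\mon R$, which composed with the strong epi into $(P\tensorlift Q)\tensorlift R$ gives a strong epi whose image in $(X\mon Y)\mon Z$ is exactly $(p\mon q)\mon r$; uniqueness of the (strong epi, mono) factorization then identifies the image. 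An analogous computation gives $P \tensorlift (Q \tensorlift R)$ as the image of $p \mon (q \mon r)$. Naturality of $\alpha$ then transports one image onto the other, producing the lifted associator. The unitors are handled in the same way, taking $V$ (with $\id_V$) as the monoidal unit in $\Pred{\C}$. The pentagon and triangle equations transfer from $\C$ via faithfulness of $\under{-}$.

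Third, I would establish the closed adjunction $\Pred{\C}(P \tensorlift Q, R) \iso \Pred{\C}(P, Q \monto* R)$, natural in $(P,Q,R)$. A morphism $f\c X \mon Y \to Z$ in $\C$ descends to a morphism $P \tensorlift Q \to R$ iff $f \comp (p \mon q)\c P \mon Q \to Z$ factors through $r$; this follows from orthogonality of the strong epi $P \mon Q \twoheadrightarrow P \tensorlift Q$ to the mono $r$ via diagonal fill-in. Dually, its transpose $f^{\flat}\c X \to Y \monto Z$ descends to $P \to Q \monto* R$ iff $f^\flat \comp p$ factors through the pullback $Q \monto* R \monoto Y \monto Z$, which by the universal property of the pullback is equivalent to $(q \monto Z) \comp f^\flat \comp p\c P \to Q \monto Z$ factoring through $Q \monto r$. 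The two conditions coincide because $(q \monto Z) \comp f^\flat \comp p = (f \comp (p \mon q))^{\flat}$ by naturality of the adjunction in $X$ and $Y$, and naturality in $Z$ converts a factorization of $f\comp (p\mon q)$ through $r$ into one of its transpose through $Q \monto r$, and vice versa. Naturality of the lifted bijection in $(P,Q,R)$ is then inherited from naturality of \eqref{eq:natBijAdj} in $\C$.

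The main obstacle will be the adjunction step: one has to juggle three layers at once — the image factorization defining $\tensorlift$, the pullback defining $\monto*$, and the transpose bijection \eqref{eq:natBijAdj} — and align them by invoking diagonal fill-in on the tensor side against the universal property of the pullback on the hom side. Once this correspondence is in place, the remaining monoidal data and coherence lift essentially for free from $\C$ through the faithful functor $\under{-}$.
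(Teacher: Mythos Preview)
Your proposal is correct and follows essentially the same approach as the paper: both lift the monoidal data by identifying iterated lifted tensors with images of iterated tensors of the underlying monos (using that $\mon$ preserves strong epis in each variable), then transport the associator and unitors via naturality or diagonal fill-in, and both establish the adjunction by matching the fill-in condition on the $\tensorlift$ side against the pullback condition on the $\monto*$ side through naturality of the transpose \eqref{eq:natBijAdj} in all three variables. The paper spells out the associator via an explicit fill-in diagram whereas you phrase it as transporting images along the isomorphism $\alpha$, but these are the same argument.
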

\begin{proof}
	Recall that the predicate lifting of $\mon$ is given by image factorization as follows: for $p \colon P \pred{}X$ and $q \colon Q \pred{} Y$,
	\[
	\begin{tikzcd}
	P \mon Q \ar[d,"p \mon q"'] \ar[r,two heads,"e_{p,q}"] & P \mathbin{\ol\mon} Q \ar[dl,>->,"p \mathbin{\ol\mon} q"] \\
	X \mon Y
	\end{tikzcd}	
	\]
	To ease the notation, in this proof we write $p \mathbin{\ol\mon} q$ for the image along $\mathbin{\ol\mon}$ of objects $p$ and $q$ of $\Pred\C$, instead of $m_{P,Q}$ as in~\eqref{eq:liftingpb}. 
	Because $\monto$ preserves monos in the covariant argument (see proof of Lemma~\ref{lemma:monPreservesStrongEpis}), its predicate lifting on subobjects $q \colon Q \pred{} Y$ and $r \colon R \pred{} Z$ is given, by~\eqref{eq:liftingpb}, as the following pullback:
		\[
		\begin{tikzcd}[column sep=3.5em]
				Q \monto* R \ar[r] \ar[d,>->,"q \monto* r"']    \pullbackangle{-45} & Q \monto R \ar[d,>->,"Q \monto r"] \\
				Y \monto Z \ar[r,"q \monto Z"] & Q \monto Z
			\end{tikzcd}
		\]
	First we prove that the monoidal structure of $\C$ lifts to $\Pred\C$. The monoidal unit of $\Pred\C$ is the total predicate $\id_\Pt \colon \Pt \to \Pt$. Given the associator $\alpha_{X,Y,Z} \colon (X \mon Y) \mon Z \to X \mon (Y \mon Z)$ and the unitors $\lambda_X \colon \Pt \mon X \to X$ and $\rho_X \colon X \mon \Pt \to X$ in $\C$, we prove that they define morphisms of $\Pred \C$ using the assumptions that $\C$ is monoidal \emph{closed} and that $\mon$ preserves strong epis in the second variable. 
	
	To that end, consider the following diagram:
	\[
	\begin{tikzcd}
		(P \mon Q) \mon R \ar[r,"\alpha_{P,Q,R}"] \ar[d,two heads, "e_{p,q} \mon R"']
		\arrow["(p \mon q) \mon r"',rounded corners,
		to path={[pos=0.75]
		-| ([yshift=-.5cm,xshift=-1.75cm]\tikztotarget.west)\tikztonodes 
		-| (\tikztotarget)}]{ddd}
		 & P \mon (Q \mon R) \ar[d,two heads,"P \mon e_{q,r}"] 
		 \arrow["p \mon{ (q \mon r)}",rounded corners,
		 to path={[pos=0.75]
		 	-| ([yshift=-.5cm,xshift=1.75cm]\tikztotarget.east)\tikztonodes 
		 	-| (\tikztotarget)}]{ddd}
		 \\
		(P \tensorlift Q) \mon R  \ar[d,two heads,"e_{p \tensorlift q,r}"'] 
		\arrow["(p \tensorlift q) \mon r"',rounded corners,
		to path={[pos=0.75]
			-| ([xshift=-.25cm]\tikztotarget.west)\tikztonodes
			-- (\tikztotarget)} ]{dd}
		& P \mon (Q \tensorlift R) \ar[d,two heads,"e_{p,q \tensorlift r}"] 
		\arrow["p \tensorlift (q \mon r)",rounded corners,
		to path={[pos=0.75]
			-| ([xshift=.25cm]\tikztotarget.east)\tikztonodes
			-- (\tikztotarget)} ]{dd}
		\\
		(P \tensorlift Q) \tensorlift R \ar[d,>->,"(p \tensorlift q) \tensorlift r"'] & P \tensorlift ( Q \tensorlift R) \ar[d,>->,"p \tensorlift (q \tensorlift r)"] \\
		(X \mon Y) \mon Z \ar[r,"\alpha_{X,Y,Z}"] & X \mon (Y \mon Z)
	\end{tikzcd}
	\]
	It commutes by functoriality of $\mon$ and by naturality of $\alpha$; moreover, $e_{p,q} \mon R$ is a strong epi by Lemma~\ref{lemma:monPreservesStrongEpis}, while $P \mon e_{q,r}$ is a strong epi by assumption. Hence there is a fill-in arrow from $(P \tensorlift Q) \tensorlift R$ to $P \tensorlift (Q \tensorlift R)$, exhibiting $\alpha_{X,Y,Z}$ as an arrow in $\Pred \C$ from $(p \mon q) \mon r$ to $p \mon (q \mon r)$. Analogously, one proves that its inverse is an arrow in $\Pred \C$ as well.
	
	Given $p \colon P \to X$, we have that $\lambda_X \colon \Pt \mon X \to X$ restricts at the level of predicates because of the diagonal fill-in property of $e_{I,p}$ and naturality of $\lambda$:
	\[
	\begin{tikzcd}
	\Pt \mon P \ar[r,"\lambda_P"] \ar[d,two heads,"e_{\Pt,p}"'] 
	\arrow["\Pt \mon p"',rounded corners,
	to path={[pos=.75]
	-| ([xshift=-.5cm]\tikztotarget.west)\tikztonodes
	-- (\tikztotarget)}]{dd}
	& P \ar[d,equal] \\
	\Pt \tensorlift P \ar[d,>->,"\Pt \tensorlift p"'] \ar[r,dashed] & P \ar[d,>->,"p"] \\
	\Pt \mon X \ar[r,"\lambda_X"] & X
	\end{tikzcd}
	\]
	Similarly for $\rho_X$. The fact that their inverses restrict as well is shown similarly. Naturality of $\alpha$, $\lambda$ and $\rho$ as well as the commutativity of the coherence diagrams required by the definition of monoidal category are all immediate from their corresponding properties at the level of $\C$. Hence $\Pred\C$ is monoidal, with the structure given by the predicate lifting of that of $\C$.
	
	Next, we aim to prove that there is a bijection of sets
	\[
	\Pred\C\Bigl(
	\begin{tikzcd}[cramped,sep=small]
		P \mathbin{\ol\mon} Q \ar[d,>->,"{p \mathbin{\ol\mon} q}"] \\ X \mon Y
	\end{tikzcd},
	\begin{tikzcd}[cramped,sep=small]
		R \ar[d,>->,"r"] \\ Z
	\end{tikzcd}
	\Bigr)
	\cong
	\Pred\C \Bigl(
	\begin{tikzcd}[cramped,sep=small]
		P \ar[d,>->,"p"] \\ X
	\end{tikzcd},
	\begin{tikzcd}[cramped,sep=small]
		Q \monto* R \ar[d,>->,"q \monto* r"] \\ Y \monto Z
	\end{tikzcd}
	\Bigr)
	\]
	natural in $p \colon P \pred{}X$, $q \colon Q \pred{} Y$, and $r \colon R \pred{} Z$, knowing that there is a natural bijection~\eqref{eq:natBijAdj}.

	First, let $f \colon X \mon Y \to Z$ be a morphism in $\C$ such that there is a necessarily unique $f|_{P \tensorlift Q}$ making the diagram below on the left commute:
	\begin{equation}\label{eq:ale1}
	\begin{tikzcd}
		P \mon Q \ar[r,two heads,"e"] \ar[dr,"p \mon q"'] & P \tensorlift Q \ar[r,"f\mid_{P \tensorlift Q}"] 
		\ar[d,>->,"p \tensorlift q"] &[1em] R \ar[d,>->,"r"] 
		\\
		& X \mon Y \ar[r,"f"'] & Z
	\end{tikzcd}
	\qquad \qquad
	\begin{tikzcd}
		P \ar[d,>->,"p"'] \ar[r,dashed,"?"] & Q \monto* R \ar[d,>->,"q \monto* r"] \\
		X \ar[r,"f^\flat"'] & Y \monto Z
	\end{tikzcd}
	\end{equation}
	We will show that $f^\flat$ restricts to $P$ as in the diagram above on the right, thus exhibiting it as a morphism in $\Pred\C$. The argument is summarised in the following diagram:
	\[
	\begin{tikzcd}
		P \ar[dd,>->,"p"'] %
		\arrow["(f\mid_{P \tensorlift Q} \comp e)^\flat",rounded corners,
		to path={[pos=0.25]
		-| (\tikztotarget)\tikztonodes
		}]{rrd}
		\ar[dr,dashed,"{f^\flat}\mid_P"]\\
		& Q \monto* R \ar[r] \ar[d,>->,"q \monto* r"']    \pullbackangle{-45} & Q \monto R \ar[d,>->,"Q \monto r"] \\
		X \ar[r,"f^\flat"'] & Y \monto Z \ar[r,"q \monto Z"'] & Q \monto Z
	\end{tikzcd}
	\]
	The existence of the dashed arrow follows from the universal property of the pullback, which can be invoked because the outer diagram commutes, as we show now. Using the naturality of~\eqref{eq:natBijAdj} in $X$ and $Y$ %
	instantiated at the $\C$-morphisms $p \colon P \to X$ and $q \colon Y$, we obtain
	$(q \monto Z) \comp f^\flat \comp p = (f \comp (p \mon q))^\flat$, the latter
  being the same as $(r \comp {f|_{P \tensorlift Q}} \comp e)^\flat$ by the left diagram of~\eqref{eq:ale1}. On the other hand, naturality of~\eqref{eq:natBijAdj} in $Z$ instantiated at $r \colon R \to Z$ tells us that the following square commutes:
	\[
	\begin{tikzcd}
	\C(P \mon Q, R) \ar[r,"(\argument)^\flat"] \ar[d,"{\C(P \mon Q, r)}"'] & \C(P,Q\monto R) \ar[d,"{\C(P,Q \monto r)}"] \\
	\C(P \mon Q, Z) \ar[r,"(\argument)^\flat"] & \C(P,Q \monto Z)
	\end{tikzcd}
	\]
	so, for $f|_{P \tensorlift Q} \comp e \in \C(P \mon Q, R)$, we have $(Q \monto r) \comp (f|_{P \tensorlift Q} \comp e)^\flat = (r \comp {f|_{P \tensorlift Q}} \comp e)^\flat$. Hence,  $(Q \monto r) \comp (f|_{P \tensorlift Q} \comp e)^\flat =(q \monto Z) \comp f^\flat \comp p$, as desired.
	
	Next, let $g \colon X \to Y \monto Z$ be a morphism in $\C$ that restricts to $g|_P \colon P \to Q \monto* R$, making therefore the following diagram commutative:
	\[
	\begin{tikzcd}
		P \ar[r,"g\mid_P"] \ar[d,>->,"p"'] & Q \monto* R \ar[r,"u"] \ar[d,>->,"q \monto* r"']    \pullbackangle{-45} & Q \monto R \ar[d,>->,"Q \monto r"] \\
		X \ar[r,"g"'] & Y \monto Z \ar[r,"q \monto Z"'] & Q \monto Z
	\end{tikzcd}
	\]
	We will prove that $g^\sharp \colon X \mon Y \to Z$ restricts to ${g^\sharp}|_{P \tensorlift Q} \colon P \tensorlift Q \to R$, as in:
	\[
	\begin{tikzcd}
		P \mon Q \ar[r,two heads,"e"] \ar[dr,"p \mon q"'] & P \tensorlift Q \ar[r,dashed,"{g^\sharp}\mid_{P \tensorlift Q}"] \ar[d,"p \tensorlift q"] & R \ar[d,"r"]\\
		& X \mon Y \ar[r,"g^\sharp"] & Z
	\end{tikzcd}
	\]
	The desired dashed arrow follows from the diagonal fill-in property of $e$ as a strong epi, due to the commutativity (to be proved in a moment) of the following diagram:
	\[
	\begin{tikzcd}
	P \mon Q \ar[r,two heads,"e"] \ar[dd,"(u \comp g\mid_P)^\sharp"'] & P \tensorlift Q \ar[d,"p \tensorlift q"] \ar[ddl,dashed] \\
	& X \mon Y \ar[d,"g^\sharp"] \\
	R \ar[r,>->,"r"] & Z
	\end{tikzcd}
	\]
	Again using the naturality of~\eqref{eq:natBijAdj} in $X$ and $Y$, this time with $(\argument)^\sharp$ instead of $(\argument)^\flat$, applied to $g \in \C(X,Y \monto Z)$, we obtain: 
	\[
	g^\sharp \comp (p \mon q) = ((q \monto Z) \comp g \comp p)^\sharp = ((Q \monto r) \comp u \comp g|_P)^\sharp,
	\]
	while naturality in $Z$ applied to $u \comp g|_P \in \C(P,Q \monto R)$ yields:
	\[
	r \comp (u \comp g|_P)^\sharp =  ((Q \monto r) \comp u \comp g|_P)^\sharp.
	\]
	Recalling that $p \mon q = (p \tensorlift q) \comp e$, we can conclude.
	
	The fact that the $\sharp$ and $\flat$ constructions yield a natural bijection at the level of $\Pred \C$ is immediate from the definition of predicate liftings of $\mon$ and $\monto$ on morphisms of $\Pred \C$.
\end{proof}

\subsection{\texorpdfstring{$\lambda$-laws}
  {Lambda-laws}}

We are now ready to present the theory of $\lambda$-laws. First, we recall the
following technical notion of pointed strength used in their definition.

\begin{notation}
We let $j\colon V/X\to X$ denote the projection functor $(X,p_X)\mapsto X$. We often leave points implicit and write $Y$ for $jY$.
\end{notation}

\begin{defn}[Pointed
  Strength~\cite{DBLP:conf/lics/Fiore08,10.1007/978-3-030-99253-8_20}]
  \label{def:pointed-str}
A \emph{$\Pt$-pointed strength} on an endofunctor $F\c\C\to\C$ is a family of morphisms
$\mathrm{st}_{X,Y}\c FX\mon jY\to F(X\mon jY)$, natural in $X\in \C$ and
$Y\in \Pt /\C$, such that the following diagrams commute:
\begin{equation*}
\begin{tikzcd}[column sep=0em, row sep=normal]
 & FX\\
 FX\mon\Pt  & & F(X\mon\Pt) 
 \arrow[iso, rotate=180, from=1-2, to=2-1]
 \arrow["\mathrm{st}_{X,\Pt}", from=2-1, to=2-3]
 \arrow[iso, from=1-2, to=2-3] 
\end{tikzcd}
\quad
\begin{tikzcd}[column sep=3em, row sep=normal]
  (F X\mon jY)\mon jZ & F(X\mon jY)\mon jZ & F((X\mon jY)\mon jZ) \\
  F X\mon (jY\mon jZ) & & F(X\mon (jY\mon jZ))
  \arrow[from=1-1,to=2-1, iso] 
  \arrow[from=1-1,to=1-2, "\mathrm{st}_{X,Y}\mon jZ"]
  \arrow[from=1-2,to=1-3, "\mathrm{st}_{X\mon jY,Z}"] 
  \arrow[from=2-1,to=2-3, "\mathrm{st}_{X,Y\mon Z}"]
  \arrow[from=1-3,to=2-3, iso]
\end{tikzcd}
\end{equation*}
(eliding the names of the canonical isomorphisms).
\end{defn}

Next, we give the definition of a $\lambda$-law.

\begin{defn}[$\lambda$-law]\label{def:lambdalaw}
  Let $B \c \C^{\opp} \times \C \to C$ be a bifunctor in $\C$, $\Theta \c \C \to
  \C$ be a $\Pt$-pointed strong endofunctor and let $\Xi\c \C \to \C$ be an
  endofunctor of the form $\Pt+\Xi'$ where $\Xi'$ is $\Pt$-pointed strong.
    A \emph{$\lambda$-law of\, $(\Xi, \Theta)$ over $B$} is a pair
    $(\xi, \theta)$ consisting of
    \begin{enumerate}
    \item A ($0$-pointed) higher-order GSOS law $\xi_{X,Y} \c \Xi(X
      \times B(X,Y)) \to B(X, \Xi^{\star}(X+Y))$.
    \item A family of morphisms $\theta_{X,Y}  \c \Theta (jX\monto Y) \to jX
      \monto B(jX,Y),$
    dinatural in $X \in \Pt/\C$ and natural in $Y\in \C$ %
    , such that the following pentagon commutes:
    \begin{equation}
      \label{eq:thetacompat}
      \begin{tikzcd}[column sep=1em, row sep=4ex]
        \Theta (X \monto Y) \mon (X \monto X) \mon X
        \ar[rr, "\mathrm{st}^{\Theta}_{X \monto Y,X \monto X} \mon X"]\
        \dar["\Theta (X \monto Y) \mon \mathrm{ev}_{X,X}"'] & &
        \Theta ((X \monto Y) \mon (X \monto X)) \mon X
        \dar["{\Theta \mathrm{comp}_{X,X,Y} \mon X}"]\\
        \Theta (X \monto Y) \mon X
        \ar[dr, "\theta^{\sharp}_{X,Y}"']
        & &\Theta (X \monto Y) \mon X
        \ar[dl, "\theta^{\sharp}_{X,Y}"]\\[-4ex]
        & B(X,Y) & 
      \end{tikzcd}
    \end{equation}
    (eliding the obvious associativity transformation). Note the the use of
    strength here is legit, since $X\monto X$ is canonically pointed.
  \end{enumerate}
\end{defn}

In a $\lambda$-law, functors $\Xi$ and $\Theta$ model the syntax of the
language and $\xi$ and $\theta$ the semantics. The $0$-pointed higher-order GSOS
law $\xi$ follows the classic (higher-order) GSOS paradigm. As for $\theta$, it
abstractly captures the fact that the behaviour of a $\lambda$-abstraction is
determined by the internal substitution structure of its body in a well-behaved
manner. Concretely, it allows one to show that if the body of a
$\lambda$-abstraction satisfies the logical predicate, then so does the
$\lambda$-abstraction itself. The $\lambda$-laws can be understood as a proper
subset of higher-order GSOS laws for languages with variable binding.
In particular, any $\lambda$-law $(\Xi,\Theta, B, \xi,\theta)$ determines a
$\Pt$-pointed higher-order GSOS law 
\begin{equation}
  \label{eq:lamtogsos}
    \rho_{X,Y} \c \Sigma(X \times (X\monto Y) \times B(X,Y)) \to
      (X\monto \Sigma^{\star}(X + Y)) \times B(X, \Sigma^{\star}(X + Y)) 
\end{equation}
of $\Sigma = \Xi + \Theta$ over
$(\argument\monto\argument) \times B$, and therefore the previously developed 
theory of abstract higher order GSOS laws can (and will) be reused. For any 
$X \in \Pt/\C$, and $Y\in\C$ we define the trivial substitution on the substitution structure~$X\monto Y$:
\begin{align*}
  \mathrm{triv}_{X,Y} \c X \monto Y \iso (X\monto Y) \mon \Pt 
  \xto{(\mathrm{var}_{X}\monto Y)^\sharp} Y.
\end{align*}
Then we define the components of~\eqref{eq:lamtogsos} as
copairs $[\brks{\rho_{11}^{\flat}, \rho_{12}}, \brks{\rho_{21}^{\flat}, \rho_{22}}]$
where
\begin{equation*}
  \begin{aligned}
    \rho_{11} \c \Xi(X \times (X\monto Y) \times B(X,Y)) \mon X
    &
      \xrightarrow{\Xi(\fst\comp\snd) \mon X} \Xi(X\monto Y) \mon X\iso X+\Xi'(X\monto Y) \mon X\\
    & \xrightarrow{X+\mathrm{st}^{\Xi'}_{X\monto Y,X}} X+\Xi'((X\monto Y) \mon X)\\
    & \xrightarrow{X+\Xi'\mathrm{ev}_{X,Y}} X+\Xi' Y \xrightarrow{[\eta \comp \inl,\Sigma^{\star}\inr]} \Sigma^{\star}(X + Y),\\[1ex]
    \rho_{12} \c \Xi(X \times (X\monto Y) \times B(X,Y))
    &
      \xrightarrow{\Xi(\id\times\snd)} \Xi(X \times B(X,Y))
      \xrightarrow{\xi_{X,Y}} B(X, \Xi^{\star}(X+Y)) \\
    & \xrightarrow{B(X,\inl^{\klstar})} B(X, \Sigma^{\star}(X+Y)), \\[1ex]
    \rho_{21} \c \Theta(X \times (X\monto Y) \times B(X,Y)) \mon X
    &
      \xrightarrow{\Theta(\fst\comp\snd)\mon X} \Theta (X\monto Y) \mon X
      \xrightarrow{\mathrm{st}^{\Theta}_{X\monto Y,X}} \Theta((X\monto Y) \mon X)\\
    & \xrightarrow{\Theta \mathrm{ev}_{X,Y}} \Theta Y
      \xrightarrow{\inr^{\klstar}} \Sigma^{\star}Y \xrightarrow{\Sigma^{\star}\inr} \Sigma^{\star}(X + Y),\\[1ex]
    \rho_{22} \c \Theta(X \times (X\monto Y) \times B(X,Y))
    & \xrightarrow{\Theta(\fst\comp\snd)} \Theta(X\monto Y) \xrightarrow{\theta_{X,Y}} X\monto B(X,Y)\\
    &  \xrightarrow{\mathrm{triv}_{B(X,Y)}} B(X, Y)
      \xrightarrow{B(X,\eta \comp \inr)} B(X, \Sigma^{\star}(X + Y)).
  \end{aligned}
\end{equation*}
The family $\rho$ is dinatural in $X$ because all of its components are. We consider
the canonical operational model of a $\lambda$-law to be the canonical model of
the derived higher-order GSOS law, which in this case is a pair of morphisms
\begin{align*}
  \brks{\phi,\gamma} \c \mS  \to   (\mS\monto\mS) \times B(\mS, \mS).
\end{align*}

Map $\gamma \c \mS \to B(\mS, \mS)$ models the computational behaviour terms. In
addition, the initial $(\Xi + \Theta)$-algebra $\mS$ is a $\mon$-monoid with
$\phi \c \mS \to (\mS \monto \mS)$ as the multiplication and the inclusion of
variables as unit. In fact, $\mS$ is a $\Xi' + \Theta$-monoid~\cite[\textsection
4]{DBLP:conf/lics/FiorePT99}, a fact we prove in the next lemma.

\begin{notation}
  In the sequel, we use $\iota^{\Xi'} = \iota
  \comp \inr \comp \inl \c \Xi'\mS\to\mS$, $\iota^\Xi =
  \iota\comp\inl\c\Xi\mS\to\mS$, and $\iota^\Theta=
  \iota\comp\inr\c\Theta\mS\to\mS$. The canonical point of $\mathrm{var}_{\mS} \c
  \Pt \to \mS$ is $\mathrm{var}_{\mS} = \iota^{\Pt} = \iota \comp \inl \comp
  \inl$. In addition, as $\mS = (\Pt + \Xi' + \Theta)^{\star}(0) = (\Xi' +
  \Theta)^{\star}(\Pt)$, $\mathrm{var}_{\mS} = \eta_{\Pt}$.
\end{notation}

\begin{lem}\label{lem:phi_prop}
  $\mS$ is a $\Xi' + \Theta$-monoid~\cite[\textsection 4]{DBLP:conf/lics/FiorePT99}. In
  particular, it is a $\mon$-monoid with the canonical point
  $\mathrm{var}_{\mS}\c\Pt\to\mS$ as unit and $\phi^{\sharp}\c\mS\mon\mS\to\mS$
  as multiplication, and the following two diagrams commute:
\begin{equation}
  \label{eq:sigmamonoid}
  \begin{tikzcd}[column sep=normal, row sep=normal]
    {\Xi'}\mS\mon\mS
    \rar["\iota^{{\Xi'}}\mon\mS"]
    \dar["\mathrm{st}^{\Xi'}_{\mS,\mS}"'] & \mS\mon\mS\ar[dd, "\phi^\sharp"]\\
    {\Xi'}(\mS\mon\mS)
    \dar["{\Xi'}\phi^\sharp"'] & \\
    {\Xi'}\mS
    \rar["\iota^{\Xi'}"] & \mS
  \end{tikzcd}
  \qquad
  \begin{tikzcd}[column sep=normal, row sep=normal]
    \Theta\mS\mon\mS
    \rar["\iota^{\Theta}\mon\mS"]
    \dar["\mathrm{st}^\Theta_{\mS,\mS}"'] & \mS\mon\mS\ar[dd, "\phi^\sharp"]\\
    \Theta(\mS\mon\mS)
    \dar["\Theta\phi^\sharp"'] & \\
    \Theta\mS
    \rar["\iota^\Theta"] & \mS
  \end{tikzcd}
\end{equation}
\end{lem}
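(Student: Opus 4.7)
The plan is to derive both statements from the fact that $\langle\phi,\gamma\rangle$ is the operational model of the higher-order GSOS law~\eqref{eq:lamtogsos} induced by the $\lambda$-law $(\xi,\theta)$, i.e.\ it satisfies the primitive recursion diagram. The two triangular diagrams in~\eqref{eq:sigmamonoid} are precisely the instantiation of this recursion at the $\Xi'$ and $\Theta$ summands of $\Sigma$, after transposing across the adjunction $\mon\dashv\monto$. So the approach is: (i)~transpose each triangle via $(-)^\flat$, turning $\phi^\sharp\comp (\iota^{\Xi'}\mon \mS)$ and $\phi^\sharp\comp (\iota^\Theta\mon \mS)$ into statements about $\phi\comp \iota^{\Xi'}$ and $\phi\comp \iota^\Theta$; (ii)~unfold these using the primitive recursion characterisation of $\langle\phi,\gamma\rangle$; (iii)~simplify using the explicit formulas for $\rho_{11}$ and $\rho_{21}$.

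Concretely, for the $\Theta$-diagram I would compute
\[
\phi^\sharp\comp (\iota^\Theta\mon\mS)
\,=\,\bigl((\mS\monto \widehat\iota\comp\Sigmas\nabla)\comp \rho_{21}^\flat\comp \Theta\langle\id,\phi,\gamma\rangle\bigr)^\sharp
\]
and chase the right-hand side through: naturality of the counit $\mathrm{ev}$ sends $(\mS\monto f)^\sharp$ to $f\comp(-)^\sharp$, the $\sharp$ of $\rho_{21}^\flat\comp h$ is $\rho_{21}\comp(h\mon\mS)$, naturality of $\mathrm{st}^\Theta$ in its first argument moves $\Theta\phi\mon\mS$ past the strength, and the triangle identity $\Sigmas\nabla\comp\Sigmas\inr=\id$ collapses the codiagonal. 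The resulting composite is exactly $\iota^\Theta\comp \Theta\phi^\sharp\comp \mathrm{st}^\Theta_{\mS,\mS}$, as required. The $\Xi'$-case is entirely analogous, but one first splits $\Xi\iso V+\Xi'$ along $\rho_{11}$: the $V$-summand contributes via $\eta\comp\inl$ and thus does not affect the $\Xi'$-summand, which then proceeds exactly as in the $\Theta$-case.

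To establish that $(\mS,\phi^\sharp,\mathrm{var}_\mS)$ is a $\mon$-monoid (so that the reference to the $(\Xi'+\Theta)$-monoid structure is justified), I would invoke the Fiore–Plotkin–Turi machinery~\cite{DBLP:conf/lics/FiorePT99}: the two diagrams just established are precisely the compatibility conditions asserting that $\iota^{\Xi'}$ and $\iota^\Theta$ are algebra morphisms into the free monoid, so that left unitality follows from tracking $V$ through $\rho_{11}$ (which yields $\widehat\iota\comp\eta=\id_\mS$, hence $\phi^\sharp\comp(\mathrm{var}_\mS\mon\mS)=\lambda_\mS$), and right unitality together with associativity of $\phi^\sharp$ are obtained by structural induction on $\mS$ using initiality, exploiting the two triangles as the induction step.

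The main obstacle I anticipate is bookkeeping around the various naturalities and the implicit coercions $\Xi' Y\to \Sigmas(X+Y)$ hidden behind the notation $\Sigmas\inr$ in the definition of $\rho_{11}$, as well as verifying that the pointed strength interacts correctly with the distribution $(V+\Xi')(X\monto Y)\mon X\iso X+\Xi'(X\monto Y)\mon X$. The inductive verification of associativity of $\phi^\sharp$ will also require invoking the pentagon~\eqref{eq:thetacompat} of \Cref{def:lambdalaw}, which is the only place where the specific axiom on $\theta$ enters; indeed that axiom is designed exactly so that the $\Theta$-algebra structure on $\mS$ respects composition of substitutions, yielding associativity on the $\Theta$-summand of the inductive step.
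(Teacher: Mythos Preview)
Your core strategy---derive the two compatibility diagrams from the primitive-recursion characterisation of $\langle\phi,\gamma\rangle$, project to the first component, transpose, and simplify using the explicit definitions of $\rho_{11}$ and $\rho_{21}$---is exactly what the paper does. The computation you sketch for the $\Theta$-summand is correct, and the $\Xi'$-case is indeed analogous.

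Where you diverge is in establishing the monoid axioms. The paper does not verify associativity or right unitality by hand: once the two compatibility diagrams and the left-unit diagram are in place, it observes that these together exhibit $\phi^\sharp$ as the unique inductive extension of the isomorphism $V\mon\mS\cong\mS$ along the $(\Xi'+\Theta)$-algebra structure, and then invokes \cite[Th.~4.1]{DBLP:conf/lics/FiorePT99}, which states directly that the initial algebra of a pointed strong endofunctor is a $\Sigma$-monoid. So no further induction is needed.

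More importantly, your claim that the pentagon~\eqref{eq:thetacompat} is required for associativity is a misconception. The map $\phi$ is built from $\rho_{11}$ and $\rho_{21}$, and these use only the pointed strengths $\mathrm{st}^{\Xi'}$ and $\mathrm{st}^\Theta$; the natural transformation $\theta$ does not enter at all. Associativity of $\phi^\sharp$ therefore follows from the coherence axioms of pointed strength (the pentagon in \Cref{def:pointed-str}), and is already packaged inside the cited theorem of Fiore--Plotkin--Turi. The pentagon~\eqref{eq:thetacompat} concerns the interaction of $\theta$ with internal composition and is used later, in \Cref{lem:zetaphi}, to relate $\gamma\comp\phi^\sharp$ to $\theta^\sharp$; it plays no role in the present lemma.
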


\begin{proof}
  First, by the pointed strength $\mathrm{st}^{\Xi'}_{X,Y} \c \Xi' X \mon Y \to \Xi'(X
  \mon Y)$ and $\mathrm{st}^{\Theta}_{X,Y} \c \Theta X \mon Y \to \Theta(X \mon Y)$
  we can canonically equip $\Xi + \Theta$ with a pointed strength $\mathrm{st}_{X,Y}
  \c (\Xi' + \Theta)X \mon Y \to (\Xi' + \Theta)(X \mon Y)$, as $\argument \mon Y$
  preserves colimits.
  Next, recall that the canonical operational model $\brks{\phi,\gamma}\c \mS  \to
  (\mS\monto\mS) \times B(\mS, \mS)$ is obtained as a unique solution of the
  identity represented by the diagram below (with $\Sigma = \Pt + \Xi + \Theta$):
  \begin{equation*}
    \begin{tikzcd}[column sep=normal, row sep=normal]
      \Sigma\mS
      \ar[dd, "\iota"']
      \rar["{\Sigma\brks{\mS,\brks{\phi,\gamma}}}"]
      & \Sigma(\mS \times ((\mS \monto \mS) \times B(\mS,\mS)))
      \dar["{[\brks{\rho_{11}, \rho_{12}^\flat},
        \brks{\rho_{21}, \rho_{22}^\flat}]}"]
      \\
      & (\mS \monto \Sigma^{\star}(\mS + \mS)) \times B(\mS, \Sigma^{\star}(\mS +
      \mS))
      \dar["{(\mS\monto\hat\iota\comp\Sigma^\star\nabla)\times B(\mS,\hat\iota\comp\Sigma^\star\nabla)}"]
      \\
      \mS
      \rar["{\brks{\phi,\gamma}}"]
      & (\mS \monto \mS) \times B(\mS, \mS)
    \end{tikzcd}
  \end{equation*}
  By the definitions of $\rho_{11}$ and $\rho_{21}$, we can project the first
  component and simplify to
  \begin{equation*}
    \begin{tikzcd}[column sep=normal, row sep=normal]
      {(\Pt + \Xi' + \Theta)\mS}
      \ar[dd, "\iota"']
      \ar[rr, "{(\Pt + \Xi' + \Theta)\phi}"]
      & & \Pt + \Xi'(\mS \monto \mS) + \Theta(\mS \monto \mS)
      \dar["{([\cong,(\mathrm{ev} \comp \mathrm{st}^{\Xi'}),(\mathrm{ev} \comp \mathrm{st}^{\Theta} ]\comp \cong)^{\flat}}"]
      \\
      & &
      \mS \monto (\mS + \Xi'\mS + \Theta\mS)
      \dar["{\mS \monto [\mS,\iota^{\Xi'},\iota^{\Theta}]}"]
      \\
      \mS
      \ar[rr, "{\phi}"]
      & & (\mS \monto \mS)
    \end{tikzcd}
  \end{equation*}
  Transposing the diagram and breaking it into the three components for
  $\Pt \mon \mS$, $\Xi'\mS \mon \mS$ and $\Theta \mS \mon \mS$, we obtain the
  diagrams in \eqref{eq:sigmamonoid} as well as unit diagram
  \begin{equation*}
  \begin{tikzcd}
    \Pt \mon \mS
    \ar[dr, "\cong"]
    \ar[d, "\mathrm{var}_{\mS} \mon \mS"']
    \\
    \mS \mon \mS
    \ar[r, "{\phi^{\sharp}}"']
    & \mS
  \end{tikzcd}
\end{equation*}
Putting the three diagrams \eqref{eq:sigmamonoid} and the above back
together, we obtain
\begin{equation*}
  \begin{tikzcd}[column sep=normal, row sep=normal]
    {(\Xi' + \Theta)\mS \mon \mS}
    \ar[dd,"{[\iota^{\Xi'},\iota^{\Theta}] \mon \mS}"']
    \ar[r,"{\mathrm{st}}"]
    & (\Xi' + \Theta)(\mS \mon \mS)
    \ar[r,"(\Xi' + \Theta)\phi^{\sharp}"]
    & (\Xi' + \Theta)\mS
    \ar[dd,"{[\iota^{\Xi'},\iota^{\Theta}]}"]
    \\
    \\
    \mS \mon \mS
    \ar[rr,"{\phi^{\sharp}}"]
    & & \mS
    \\
    \\
    \Pt \mon \mS
    \ar[uu,"\mathrm{var}_{\mS} \mon \mS"]
    \ar[rruu,"\cong"]
  \end{tikzcd}
\end{equation*}
Since $\mS = (\Xi' + \Theta)^{\star}(\Pt)$, $\mathrm{var}_{\mS} = \eta_{\Pt}$,
the above shows $\phi^{\sharp} \c \mS \mon \mS \to \mS$ is indeed the unique
inductive extension of the isomorphism $\Pt \mon \mS \cong \mS$.
By~\cite[Th. 4.1]{DBLP:conf/lics/FiorePT99}, $\mS$ is a $(\Xi' + \Theta)$-monoid.
\end{proof}

\begin{defn}
  A $\lambda$-law $(\xi, \theta)$ is \emph{relatively flat}
  when $\xi$ is so (see \Cref{def:relativelyflat}).
\end{defn}
Let $\rho_{X,Y} \c \Sigma(X \times (X\monto Y) \times B(X,Y)) \to
  (X\monto \Sigma^{\star}(X + Y)) \times B(X, \Sigma^{\star}(X + Y))$
be a GSOS law induced by a lambda law $(\Xi,\Theta,\xi,\theta)$. The fundamental
problem motivating the introduction of $\lambda$-laws is that, in order to use
\Cref{th:main2} on $\rho$, one needs the bifunctor $\argument \monto \argument \times
B(\argument, \argument)$ to be contractive while, in practice, \emph{only} $B$ is
contractive. However, $\lambda$-laws come with their own simplifying up-to
principle, which works similarly to \Cref{th:main2}, only that we are looking to
prove the \emph{open extension} of $\square^{\gamma,\ol{B}}{P}$:
\[\blacksquare^{\gamma,\ol{B}} P =
  \iimg{\phi}{\square^{\gamma,\ol{B}}{P} \mathbin{\ol{\monto}} \square^{\gamma,\ol{B}}{P}}.\]

\begin{theorem}[Induction up to
  $\blacksquare$]\label{thm:ind-up-to-blackquaregen}
  Let $\brks{\phi, \gamma} \c \mS \to \mS \monto \mS \times B(\mS,\mS)$ be the
  operational model of a relatively flat $\lambda$-law $(\Xi, \Theta, \xi,
  \theta)$ that admits a predicate lifting. Then for every predicate $P \pred{} \mS$ and every
  locally maximal logical refinement $\logp^{\gamma,\ol{B}}P$,
  \[ \ol{\Xi+\Theta}(\square^{\gamma,\ol{B}})\leq \iimg{\ini}{P}
    \qquad\text{implies}\qquad  \ol{\Xi+\Theta}(\blacksquare^{\gamma,\ol{B}}P)\leq
    \iimg{\ini}{\blacksquare^{\gamma,\ol{B}}P} \quad\text{(hence
      $\blacksquare^{\gamma,\ol{B}}P=\mS$)}. \]
\end{theorem}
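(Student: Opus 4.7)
The plan is to adapt the induction-up-to-$\logp$ argument of \Cref{th:main2} to the open extension $\blacksquare$, handling the two summands of $\Xi+\Theta$ separately. A preliminary observation is that the canonical point $\mathrm{var}_{\mS}\c V\to \mS$ factors through $\blacksquare^{\gamma,\ol{B}}P$: indeed, $\phi$ sends a variable to a projection out of substitutions, which trivially preserves $\logp^{\gamma,\ol{B}}P$. Combined with structural induction on $\mS = (\Xi'+\Theta)^\star(V)$, this reduces the desired equality $\blacksquare^{\gamma,\ol{B}}P = \mS$ to the inequality $\ol{\Xi+\Theta}(\blacksquare^{\gamma,\ol{B}}P) \le \iimg{\iota}{\blacksquare^{\gamma,\ol{B}}P}$, which I would split along the coproduct $\Xi+\Theta$.

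For the $\Xi$-part, I would replay the proof of \Cref{th:main2} almost verbatim, but with $\blacksquare^{\gamma,\ol{B}}P$ in place of $\logp^{\gamma,\ol{B}}P$ on the syntax side. Relative flatness of $\xi$ gives a well-founded induction over the index set $J$: at stage $j$, one would build a $\logp^{\gamma,\ol{B}}P$-relative $\ol{B}$-invariant containing $\fimg{(\iota\comp\inj_j)}{\ol{\Xi}_j \blacksquare^{\gamma,\ol{B}}P}$ via the same large commutative diagram. This works because the behaviour component $\rho_{12}$ of the derived higher-order GSOS law~\eqref{eq:lamtogsos} coincides with $\xi$ postcomposed with an injection, and is thus independent of the substitution component.

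The $\Theta$-part is where the $\lambda$-law structure is essential. Given $t$ with all subterms in $\blacksquare^{\gamma,\ol{B}}P$, establishing $\iota^\Theta t \in \blacksquare^{\gamma,\ol{B}}P$ amounts to showing $\phi^\sharp(\iota^\Theta t \mon \sigma) \in \logp^{\gamma,\ol{B}}P$ for every $\sigma \in \logp^{\gamma,\ol{B}}P \mathbin{\ol{\monto}} \logp^{\gamma,\ol{B}}P$. By the $\Theta$-monoid law~\eqref{eq:sigmamonoid}, this expression equals $\iota^\Theta(\Theta\phi^\sharp(\mathrm{st}^\Theta_{\mS,\mS}(t \mon \sigma)))$, whose top-level $\Theta$-arguments lie in $\logp^{\gamma,\ol{B}}P$ by the assumption on $t$ and $\sigma$. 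The hypothesis then places this element in $P$, and I would promote $P$ to $\logp^{\gamma,\ol{B}}P$ by exhibiting $\fimg{\iota^\Theta}{\ol{\Theta}\blacksquare^{\gamma,\ol{B}}P} \vee \logp^{\gamma,\ol{B}}P$ as a $\logp^{\gamma,\ol{B}}P$-relative $\ol{B}$-invariant contained in $P$.

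The main obstacle is precisely this promotion from $P$ to $\logp^{\gamma,\ol{B}}P$. Verifying the invariance condition requires the pentagon equation~\eqref{eq:thetacompat}, which asserts that $\theta$ commutes with internal composition in the closed monoidal base. This coherence is the abstract counterpart of the fact that $\beta$-reduction commutes with substitution in the $\lambda$-calculus, and it ensures that the transitions of a substituted $\Theta$-term factor through $\theta$ in a way that inductively yields membership in $\logp^{\gamma,\ol{B}}P$ rather than merely $P$. Once the invariant is verified, local maximality of $\logp^{\gamma,\ol{B}}P$ finishes the argument.
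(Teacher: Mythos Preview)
Your overall shape is right—split along $\Xi$ and $\Theta$, use the $\Sigma$-monoid structure, invoke the pentagon for the binder case—but there is a real gap in how you execute the $\Xi$ step, and a related confusion in the $\Theta$ step.

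For the $\Xi$-part you propose to ``replay \Cref{th:main2} with $\blacksquare P$ in place of $\logp P$ on the syntax side'' and build a $\logp P$-relative invariant containing $\fimg{(\iota\comp\inj_j)}{\ol{\Xi}_j\,\blacksquare P}$. This fails at the very first step of the large diagram in that proof: you would need $\blacksquare P \leq \iimg{\gamma}{\ol{B}(\logp P,\logp P)}$ to pair each argument with its behaviour, but $\blacksquare P$ is defined through $\phi$, not $\gamma$, so nothing tells you that elements of $\blacksquare P$ have $\gamma$-images in $\ol{B}(\logp P,\logp P)$. Even if the invariant could be built, local maximality would only yield containment in $\logp P$, not in $\blacksquare P$, and you never say how to recover the substitution component. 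The paper avoids both problems by transposing the entire goal via the lifted adjunction $(\argument)\mathbin{\ol\mon}\logp P \dashv \logp P\monto*(\argument)$ \emph{before} doing anything else: the target $\blacksquare P$ becomes $\logp P$ after tensoring with $\logp P$ and applying $\phi^\sharp$. Then the $\Xi'$-monoid law~\eqref{eq:sigmamonoid} pushes $\phi^\sharp$ under the syntax, collapsing the left-hand side to $\fimg{\iota^{\Xi'}}{\ol{\Xi}'\logp P}$, and now \Cref{th:main2} applies verbatim—no modification needed.

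For the $\Theta$-part you correctly transpose and reduce to elements of $\fimg{\iota^\Theta}{\ol\Theta\logp P}$, but the invariant you then propose (whether with $\logp P$ or $\blacksquare P$ inside $\ol\Theta$) is the wrong object: computing $\gamma$ on $\iota^\Theta(u)$ goes through $\rho_{22}$ and hence through $\phi$ of the arguments of $u$, which you no longer control once you have forgotten the substitution $\sigma$. The paper does not split $\phi^\sharp$ from $\iota^\Theta$ here. Instead it proves a separate identity (\Cref{lem:zetaphi}) stating $\gamma\comp\phi^\sharp\comp(\iota^\Theta\mon\mS)=\theta^\sharp\comp(\Theta\phi\mon\mS)$; the pentagon~\eqref{eq:thetacompat} is exactly what makes this hold. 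With that identity in hand, the invariance check for the transposed predicate $\fimg{(\phi^\sharp\comp(\iota^\Theta\mon\mS))}{\ol\Theta(\blacksquare P)\mathbin{\ol\mon}\logp P}$ reduces immediately to the lifting of $\theta$, with no circular appeal to $\phi$ on intermediate terms.
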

\begin{rem}
  The property $\blacksquare^{\gamma,\ol{B}}P = \mS$ is precisely what is known
  in the literature as the \emph{fundamental property} of the logical predicate.
\end{rem}
For the proof of \Cref{thm:ind-up-to-blackquaregen}, we first note that if
$(\Xi, \Theta,z \xi, \theta)$ lifts and $\Pt \leq \iimg{\ini}{P}$, $\logp P$ is
automatically \emph{pointed}, in that 
there is a morphism $(\Pt \monoto \Pt) \to (\logp P \monoto \mS)$ in
$\Pred{\C}$. In particular, $(\Pt \lor \logp P) \monoto \mS$ can be shown to be a $\logp
P$-relative invariant through the lifting of $\xi_{X,Y} \c \Xi(X
\times B(X,Y)) \to B(X, \Xi^{\star}(X+Y))$.

Next, we make use of the following technical lemma:
\begin{lem}
  \label{lem:zetaphi}
  Let $(\Xi,\Theta,B,\xi,\theta)$ be a $\lambda$-law. The following diagram commutes:
\begin{equation*}
 \begin{tikzcd}[column sep=2em, row sep=1ex]
    \Theta\mS\mon\mS
    \ar[ddddd, "\Theta\phi \mon \mS"']
    \ar[rrrr, "\iota^\Theta\mon\mS"]
    &[1em] &[-1em] &[.5em] 
    & \mS\mon\mS\ar[dd, "\phi^\sharp"]
    \\[2ex]
    & &  
    \\
    & & & & \mS \ar[ddd, "{\gamma}"]
    \\[1ex]
    & & & 
    \\
    & & & &
    \\
    \Theta(\mS \monto \mS) \mon \mS
    \ar[rrrr,"\theta^{\sharp}_{\mS,\mS}"] & & & & B(\mS, \mS)
  \end{tikzcd}
\end{equation*}
\end{lem}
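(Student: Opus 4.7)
The plan is to combine the monoid identities of Lemma~\ref{lem:phi_prop} with the defining equation of the operational model $\langle\phi,\gamma\rangle$ and the compatibility pentagon~\eqref{eq:thetacompat} of the $\lambda$-law. First, the right-hand diagram in~\eqref{eq:sigmamonoid} rewrites the top path as
\[
  \gamma \comp \phi^\sharp \comp (\iota^\Theta \mon \mS)
    \;=\; \gamma \comp \iota^\Theta \comp \Theta\phi^\sharp \comp \mathrm{st}^\Theta_{\mS,\mS},
\]
so the task reduces to computing $\gamma \comp \iota^\Theta$ explicitly and matching the result with $\theta^\sharp_{\mS,\mS} \comp (\Theta\phi \mon \mS)$.

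For the computation of $\gamma \comp \iota^\Theta$, I would unroll the coalgebra equation for $\langle\phi,\gamma\rangle$ on the $\Theta$-summand of $\Sigma = \Pt + \Xi + \Theta$ and project to the second component, whose contribution is governed by $\rho_{22}$. Using the free-monad identities $\hat\iota \comp \eta = \id$ and $\nabla \comp \inr = \id$, the post-processing map $B(\mS,\hat\iota \comp \Sigma^\star\nabla) \comp B(\mS,\eta \comp \inr)$ collapses to $\id_{B(\mS,\mS)}$, leaving
\[
  \gamma \comp \iota^\Theta \;=\; \mathrm{triv}_{B(\mS,\mS)} \comp \theta_{\mS,\mS} \comp \Theta\phi.
\]
Substituting into the previous identity yields
\[
  \gamma \comp \phi^\sharp \comp (\iota^\Theta \mon \mS)
    \;=\; \mathrm{triv}_{B(\mS,\mS)} \comp \theta_{\mS,\mS} \comp \Theta(\phi \comp \phi^\sharp) \comp \mathrm{st}^\Theta_{\mS,\mS}.
\]

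Next I would invoke associativity of the multiplication $\phi^\sharp$ (again from Lemma~\ref{lem:phi_prop}). Transposing through the closed structure and using both the adjunction identity $\mathrm{ev}_{\mS,\mS} \comp (\phi \mon \mS) = \phi^\sharp$ and the defining transpose of internal composition, this associativity is equivalent to $\phi \comp \phi^\sharp = \mathrm{comp}_{\mS,\mS,\mS} \comp (\phi \mon \phi)$ in $\C(\mS \mon \mS,\, \mS \monto \mS)$. Applying $\Theta$ to this identity and then invoking the pentagon~\eqref{eq:thetacompat} instantiated at $X = Y = \mS$, precomposed with $\Theta\phi \mon \phi \mon \mS$ and combined with naturality of $\mathrm{st}^\Theta$ in its first argument and in its (pointed) second argument, should rearrange the expression into $\theta^\sharp_{\mS,\mS} \comp (\Theta\phi \mon \mS)$, as desired.

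The main obstacle will be the closed-monoidal bookkeeping in the final step: carefully tracking transposes, associators, and unitors to verify that the $\mathrm{triv}_{B(\mS,\mS)}$ factor corresponds to $\theta^\sharp_{\mS,\mS}$ evaluated at $\mathrm{var}_\mS$, and that the pentagon absorbs the $\Theta\mathrm{comp}_{\mS,\mS,\mS}$ factor once the expression is transported through $\mathrm{st}^\Theta$. A subsidiary step, needed in order to invoke naturality of $\mathrm{st}^\Theta$ in its pointed argument, is to observe that $\phi \comp \mathrm{var}_\mS$ is the canonical point of $\mS \monto \mS$ (the transpose of $\id_\mS$), which follows directly from the left unit law of the monoid $(\mS,\mathrm{var}_\mS,\phi^\sharp)$.
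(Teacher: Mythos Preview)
Your proposal is correct and follows essentially the same route as the paper: the paper decomposes the diagram into three cells, where cell~(2) is the $\Theta$-monoid identity from Lemma~\ref{lem:phi_prop}, cell~(3) is the computation of $\gamma\comp\iota^\Theta$ from the operational-model equation via $\rho_{22}$ (yielding exactly your $\mathrm{triv}_{B(\mS,\mS)}\comp\theta_{\mS,\mS}\comp\Theta\phi$), and cell~(1) is handled by precomposing the pentagon~\eqref{eq:thetacompat} with $\Theta\phi\mon\phi\mon\mathrm{var}_\mS$ (after inserting a tensor unit) together with naturality of $\mathrm{st}^\Theta$ and the monoid associativity of~$\phi^\sharp$. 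Your anticipated obstacle---the unit/associator bookkeeping connecting $\mathrm{triv}$ to evaluation at $\mathrm{var}_\mS$---is precisely what the paper's diagram for cell~(1) spells out, and your observation that $\phi\comp\mathrm{var}_\mS$ is the canonical point of $\mS\monto\mS$ is the ingredient needed to invoke pointed naturality of the strength there.
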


\begin{proof}
The proof is displayed in the diagram:
\[
  \adjustbox{scale=0.90,center}{
    \begin{tikzcd}[column sep=0em, row sep=4ex]
      \Theta\mS\mon\mS
      \ar[ddddr, phantom, "(1)\qquad"] 
      \ar[ddrrrr, phantom, "\hspace{5em}(2)"]
      \ar[ddddd, "\Theta\phi \mon \mS"']
      \ar[dr,"{\mathrm{st}^{\Theta}_{\mS,\mS}}"]
      \ar[rrrr, "\iota^\Theta\mon\mS"]
      &[1em] &[-1em] &[.5em] & \mS\mon\mS\ar[dd, "\phi^\sharp"]
      \\[1ex]
      & \Theta(\mS\mon\mS) 
        \ar[dr,"\Theta\phi^\sharp"]
        \dar[iso] &  
      \\
      & \Theta(\mS\mon\mS)\mon\Pt\dar["\Theta(\phi\comp\phi^\sharp)\mon\Pt"'] & \Theta\mS
      \ar[d,"\Theta\phi"]
      \ar[rr, "\iota^\Theta"]
      \ar[dddrr, phantom, "\hspace{8em}(3)", pos=.3]
      & & \mS \ar[ddd, "{\gamma}"]
      \\[1ex]
      & \Theta(\mS\monto\mS)\mon\Pt\rar[iso]\ar[dr, "\theta_{\mS,\mS}\mon\Pt"]\dar["\Theta(\mS\monto\mS)\mon\mathrm{var}_{\mS}"'] & \Theta(\mS \monto \mS) 
      \ar[rd, "{\theta_{\mS,\mS}}"] & 
      \\[2ex]
      & \Theta(\mS\monto\mS)\mon\mS~~\ar[d,"\theta_{\mS,\mS}^\sharp"'] & ~~(\mS \monto B(\mS,\mS))\mon\Pt\rar[iso]\ar[drr,"{(\mathrm{var}_{\mS}\monto B(\mS,\mS))^\sharp}"'] &  \mS \monto B(\mS,\mS) \ar[dr, pos=.3, "{\mathrm{triv}_{B(\mS,\mS)}}"]  &
      \\[3ex]
      \Theta(\mS \monto \mS) \mon \mS
      \ar[r,"\theta^{\sharp}_{\mS,\mS}"] & B(\mS, \mS)\ar[rrr,equal] & & & B(\mS, \mS)
    \end{tikzcd}
    }
  \]
  Here, (1), (2) and (3) are the only non-trivially commuting cells. Commutativity of~(2)
  is shown in \Cref{lem:phi_prop}.

  To show (3), we recall that the operational model $\brks{\phi,\gamma}\c \mS  \to
  (\mS\monto\mS) \times B(\mS, \mS)$ makes the following diagram commute:
  \[
    \begin{tikzcd}[column sep=normal, row sep=normal]
      \Sigma\mS
      \ar[dd, "\iota"']
      \rar["{\Sigma\brks{\mS,\brks{\phi,\gamma}}}"]
      & \Sigma(\mS \times ((\mS \monto \mS) \times B(\mS,\mS)))
      \dar["{[\brks{\rho_{11}, \rho_{12}^\flat},
              \brks{\rho_{21}, \rho_{22}^\flat}]}"]
      \\
      & (\mS \monto \Sigma^{\star}(\mS + \mS)) \times B(\mS, \Sigma^{\star}(\mS +
      \mS))
      \dar["{(\mS\monto\hat\iota\comp\Sigma^\star\nabla)\times B(\mS,\hat\iota\comp\Sigma^\star\nabla)}"]
      \\
      \mS
      \rar["{\brks{\phi,\gamma}}"]
      & (\mS \monto \mS) \times B(\mS, \mS)
    \end{tikzcd}
  \]
We obtain (3) by precomposing both sides of the identity with $\inr$, and postcomposing 
it with $\fst$, and by calling the definition of $\rho_{2,1}$.
Finally, commutativity of $(3)$ is proven as follows:
\begin{equation*}
\begin{tikzcd}[column sep=.75em, row sep=5ex]
    \Theta\mS \mon \mS
    \ar[rr, "\mathrm{st}_{\mS,\mS}^\Theta"]
    \dar[iso]
    \dar[ddd,shiftarr = {xshift=-75}, pos=-.05, "\quad\Theta\phi\mon\mS"]
    & &
    \Theta (\mS \mon\mS)
    \dar[iso]
    \\[-2ex]
    \Theta\mS \mon \mS \mon \Pt
    \ar[rr, "\mathrm{st}_{\mS,\mS}^\Theta \mon \Pt"]
    \dar["\Theta\phi\mon\phi\mon\mathrm{var}_{\mS}"'] & &
    \Theta (\mS \mon\mS) \mon \Pt
    \dar["\Theta(\phi\mon\phi)\mon\mathrm{var}_{\mS}"]
    \dar[dd,shiftarr = {xshift=85}, pos=-.075, "\hspace{-5.5em}\Theta(\phi\comp\phi^\sharp)\mon\mathrm{var}_{\mS}"]
    \\
    \Theta (\mS\monto\mS) \mon (\mS\monto\mS) \mon \mS
    \ar[rr, "\mathrm{st}_{\mS \monto \mS,\mS \monto \mS}^\Theta \mon \mS"]
    \dar["\Theta (\mS \monto \mS) \mon \mathrm{ev}_{\mS,\mS}"'] & &
    \Theta ((\mS \monto \mS) \mon (\mS \monto \mS)) \mon \mS
    \dar["{\Theta \mathrm{comp}_{\mS,\mS,\mS} \mon \mS}"]
    \\
    \Theta (\mS \monto \mS) \mon \mS
    \ar[dr, "\theta^{\sharp}_{\mS,\mS}"']
    & &\Theta (\mS \monto \mS) \mon \mS
    \ar[dl, "\theta^{\sharp}_{\mS,\mS}"]\\[-3ex]
    & B(\mS,\mS) & 
\end{tikzcd}
\end{equation*}  
Here, we are using the fact that $\phi^\sharp$ is a monoid multiplication, and 
the pentagonal coherence condition from \Cref{def:lambdalaw}.
\end{proof}

We first prove a slightly more general statement, a corollary of which is
\Cref{thm:ind-up-to-blackquaregen}.

\begin{lem}
  \label{lem:lambdamain}
  Let $\brks{\phi, \gamma} \c \mS \to \mS \monto \mS \times B(\mS,\mS)$ be the
  operational model of a relatively flat $\lambda$-law $(\Xi, \Theta, \xi,
  \theta)$ that admits a predicate lifting. Then for every predicate $P \pred{} \mS$ and every
  locally maximal logical refinement $\logp^{\gamma,\ol{B}}P$, if
  \begin{enumerate}
  \item $\fimg{\iota^{\Xi}}{\ol{\Xi}(\logp^{\gamma,\ol{B}} P)} \leq {P}$ and
  \item $\fimg{\iota^{\Theta}}{\ol{\Theta}\blacksquare^{\gamma,\ol{B}} P}
      \leq \iimg{\phi}{{\logp^{\gamma,\ol{B}}P\monto* P}}$,
  \end{enumerate}
  we have that $\blacksquare^{\gamma,\ol{B}} P = \mS$.
\end{lem}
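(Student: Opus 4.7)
The plan is to establish $\blacksquare^{\gamma,\ol B} P = \mS$ by exhibiting $\blacksquare^{\gamma,\ol B} P$ as a subalgebra of the initial algebra $\mS$; by structural induction, this forces the predicate to be all of $\mS$. Since $\Sigma = \Xi + \Theta$, the subalgebra condition $\ol{\Sigma}(\blacksquare P) \leq \iimg{\iota}{\blacksquare P}$ splits into two subgoals, one for each summand, which I will tackle separately. Throughout, I will freely use the identity $\phi \comp \mathrm{var}_\mS = (\id_\mS)^\flat$ provided by $\mS$ being a $\mon$-monoid (\Cref{lem:phi_prop}) to dispatch the trivial case of variables, noting that $\Pt$ sits inside $\logp P$ because of the liftability of $(\xi,\theta)$ combined with condition (1).

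For the $\Xi$-part, I will adapt the argument of \Cref{th:main2}: the strategy is to combine condition (1) with relative flatness of $\xi$ to produce, for each rank $j$ in the flat decomposition $\Xi = \coprod_j \Xi_j$, an auxiliary predicate of the form $\blacksquare P \lor \fimg{(\iota \comp \inj_j)}{\ol\Xi_j \blacksquare P}$ and show that it is a $\logp P$-relative invariant. The concrete calculation is essentially the same diagram chase as in the proof of \Cref{th:main2}, with the extra bookkeeping that the co- and contravariant behaviours factor through $\phi$ via $\rho_{11}$ and $\rho_{12}$; condition (1) delivers what is needed for the $B$-component, while the pointedness of $\logp P$ and dinaturality of $\xi$ take care of the $(\mS \monto \mS)$-component.

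The genuinely new work lies in the $\Theta$-part. Given $t \in \ol\Theta(\blacksquare P)$, condition (2) already tells us that $\phi(\iota^\Theta(t))$ maps $\logp P$ into $P$, whereas we must strengthen this to a map from $\logp P$ into $\logp P$. To lift $P$ to $\logp P$ in the codomain, the plan is to define the candidate predicate
\[
R \;:=\; \fimg{(\phi^\sharp \comp (\iota^\Theta \mon \mS))}{\ol\Theta(\blacksquare P)\tensorlift \logp P} \;\pred{}\; \mS,
\]
and show that $R \lor \logp P$ is a $\logp P$-relative $\ol B$-invariant contained in $P$, so that maximality of $\logp P$ forces $R \leq \logp P$ and hence $\fimg{\iota^\Theta}{\ol\Theta(\blacksquare P)} \leq \blacksquare P$. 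Containment in $P$ follows directly from (2); the $\logp P$-relative invariance is the delicate part, and here I will invoke \Cref{lem:zetaphi}, which rewrites $\gamma \comp \phi^\sharp \comp (\iota^\Theta \mon \mS)$ as $\theta^\sharp_{\mS,\mS} \comp (\Theta\phi \mon \mS)$. Combining this identity with the liftability of $\theta$ and the pentagon coherence~\eqref{eq:thetacompat} yields that $\gamma(R)$ factors through $\ol B(\logp P,\, R \lor \blacksquare P)$, as required.

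The main obstacle is orchestrating this last factorisation: the predicate lifting $\ol\theta$ of $\theta$ must be chased through the pentagon so that the strength $\mathrm{st}^\Theta$, the evaluation $\mathrm{ev}$, and the internal composition $\mathrm{comp}$ all respect the liftings, and the output in $B(\logp P,\argument)$ must be recognised as sitting inside $\ol B(\logp P, R \lor \blacksquare P)$ rather than merely inside $\ol B(\logp P, P)$. I expect this step to require a careful diagram analogous to the one in the proof of \Cref{th:main2}, but with the rightmost column refactored through $\theta$ instead of through the flat decomposition of $\xi$. Once this is done, maximality of $\logp P$ closes the argument, and together with the $\Xi$-part above we conclude that $\blacksquare P$ is a subalgebra of $\mS$, hence $\blacksquare P = \mS$.
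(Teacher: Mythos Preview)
Your high-level strategy—show that $\blacksquare P$ is a $\Sigma$-subalgebra of $\mS$ and split along $\Sigma = \Xi + \Theta$—matches the paper, and your $\Theta$-part is essentially on the right track (though less direct than the paper's argument, see below). The $\Xi$-part, however, has a genuine gap.

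For $\Xi$ you propose to form, for each rank $j$, the auxiliary predicate $\blacksquare P \lor \fimg{(\iota \comp \inj_j)}{\ol\Xi_j \blacksquare P}$ and show it is a $\logp P$-relative $\ol B$-invariant, then invoke maximality of $\logp P$. This does not work for two reasons. First, the maximality clause only applies to predicates contained in $P$, but $\blacksquare P \leq P$ fails in general: $\blacksquare P$ is defined purely through $\phi$, so an open term whose substitution instances land in $\logp P$ lies in $\blacksquare P$ regardless of whether it itself satisfies $P$. Second, even if maximality could be invoked, it would give you $\leq \logp P$, not $\leq \blacksquare P$, and there is no a priori inclusion $\logp P \leq \blacksquare P$ to close the gap. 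Your remark that ``condition~(1) handles the $B$-component while pointedness and dinaturality of $\xi$ handle the $(\mS\monto\mS)$-component'' conflates the roles of $\gamma$ and $\phi$: membership in $\blacksquare P$ has no $B$-component at all.

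The paper's route is to first transpose the goal via the lifted adjunction $(\argument)\tensorlift\logp P \dashv \logp P \monto*(\argument)$, obtaining
\[
\fimg{(\phi^\sharp \comp (\iota^{\Xi'} \mon \mS))}{\ol\Xi'(\blacksquare P)\tensorlift\logp P} \leq \logp P,
\]
and then to apply the $(\Xi'+\Theta)$-monoid law $\phi^\sharp \comp (\iota^{\Xi'}\mon\mS) = \iota^{\Xi'} \comp \Xi'\phi^\sharp \comp \mathrm{st}^{\Xi'}$ from \Cref{lem:phi_prop}. Because the strength lifts (by pointedness of $\logp P$) and $\phi^\sharp$ evaluates $\blacksquare P \tensorlift \logp P$ into $\logp P$, the left-hand side collapses to $\fimg{\iota^{\Xi'}}{\ol\Xi'(\logp P)}$, and now \Cref{th:main2} applies verbatim using condition~(1). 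This monoid-structure reduction from $\blacksquare P$ to $\logp P$ is the key step your plan is missing.

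For $\Theta$, your definition of $R$ coincides with the paper's, and your invariant-plus-maximality route would succeed, but the paper is more direct: any locally maximal logical refinement satisfies the fixed-point identity $\logp P = P \wedge \iimg{\gamma}{\ol B(\logp P,\logp P)}$, so $R \leq \logp P$ reduces to $R \leq P$ (immediate from~(2) via the adjunction) and $\fimg{\gamma}{R} \leq \ol B(\logp P,\logp P)$, which follows directly from \Cref{lem:zetaphi} together with the lifting of $\theta$. There is no need for an auxiliary predicate $R \lor \logp P$, no separate pentagon chase (that is already encapsulated in \Cref{lem:zetaphi}), and no question of $R \lor \blacksquare P$ versus $\logp P$ in the second argument of $\ol B$—the paper lands in $\ol B(\logp P,\logp P)$ straight away.
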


\begin{proof}
  It suffices to prove that $\fimg{\iota}{(\ol\Xi + \ol{\Theta})
  (\iimg{\phi}{\logp P\monto*\logp P})}
  \leq \iimg{\phi}{\logp P\monto*\logp P}$, from which the goal 
  follows by induction. Equivalently, we proceed with proving that
  \begin{equation*}
    \fimg{(\phi \comp \iota)}{(\ol\Xi + \ol{\Theta})
      (\iimg{\phi}{\logp P\monto*\logp P})}
    \leq \logp P\monto*\logp P.
  \end{equation*}
  Using the lifted adjunction $(\argument)\mathbin{\ol{\mon}}\logp P\dashv\logp
  P\monto*(\argument)$, the above is equivalent to
  \begin{equation*}
    \fimg{(\phi \comp\iota)^{\sharp}}{(\ol\Xi + \ol{\Theta})
      (\iimg{\phi}{\logp P\monto*\logp P}) \ol\mon \logp P}
    \leq \logp P.
  \end{equation*}
  We proceed with the following sub-goals (recall $\Xi = \Pt + \Xi'$ and the
  identity $(\phi \comp \iota)^{\sharp} = \phi^{\sharp} \comp (\iota \mon \mS)$):
  \begin{align}
    & \fimg{(\phi^\sharp\comp(\mathrm{var}_{\mS}\mon\mS))}
      {\Pt \mathbin{\ol\mon} \logp P} \label{eq:mainlam1}
    \leq \logp P, \\
    & \fimg{(\phi^{\sharp} \comp (\iota^{\Xi'} \mon \mS))}{\ol\Xi'
      (\iimg{\phi}{\logp P\monto*\logp P}) \ol\mon \logp P}
      \leq \logp P, \label{eq:mainlam2}\\
    & \fimg{(\phi^{\sharp} \comp (\iota^{\Theta} \mon \mS))}{\ol\Theta
      (\iimg{\phi}{\logp P\monto*\logp P}) \ol\mon \logp P}
      \leq \logp P\label{eq:mainlam3}
  \end{align}
  For \eqref{eq:mainlam1}, we observe that $\mS$ being a monoid implies
  $(\phi^\sharp\comp(\iota^{\Xi}\mon\mS))$ is equivalent to the
  canonical isomorphism $\Pt \mon \mS \cong \mS$, hence
  \begin{equation}
    \label{eq:unitandsquare}
    \fimg{(\phi^\sharp\comp(\mathrm{var}_{\mS}\mon\mS))}{\Pt \mathbin{\ol\mon} \logp P}
    \leq \logp P.
  \end{equation}
  For \eqref{eq:mainlam2}, we simplify further by \Cref{lem:phi_prop}:
  \begin{flalign*}
    && &\fimg{(\phi^{\sharp} \comp (\iota^{\Xi'} \mon \mS))}{\ol\Xi'
         (\iimg{\phi}{\logp P\monto*\logp P}) \ol\mon \logp P}
    & \\
    &&  \;= \quad & \fimg{(\iota^{\Xi'} \comp \Xi'\phi^{\sharp} \comp
                    \mathrm{st}^{\Xi'})}{\ol\Xi'
                    (\iimg{\phi}{\logp P\monto*\logp P}) \ol\mon \logp P}
    & \by{\Cref{lem:phi_prop}} \\
    && \;\leq \quad & \fimg{(\iota^{\Xi'} \comp \Xi'\phi^{\sharp})}{\ol\Xi'
                      (\iimg{\phi}{\logp P\monto*\logp P} \ol\mon \logp P)}
    & \by{$\logp P$ is pointed}\\
    && \;\leq \quad & \fimg{(\iota^{\Xi'})}{\ol\Xi'
                      (\overline{\mathrm{ev}}\comp({(\logp P\monto*\logp P)}
                      \ol\mon \logp P))} \\
    && \;\leq \quad & \fimg{(\iota^{\Xi'})}{\ol\Xi'\logp P}
  \end{flalign*}
  By the flatness of $\xi$ and assumption~(1), we follow the same procedure as
  in \Cref{th:main2} to get
  \[
    \fimg{(\iota^{\Xi})}{\ol\Xi\logp P} \leq \logp P.
  \]
  Restricting the above to $\Xi'$ finishes the proof of \eqref{eq:mainlam2}.
  We are left to prove \eqref{eq:mainlam3}, which we break down to the sub-goals
  (recall $\logp P = P \land \iimg{\gamma}{\ol{B}(\logp P, \logp P)}$)
  \begin{align}
    & \fimg{(\phi^{\sharp} \comp (\iota^{\Theta} \mon \mS))}{\ol\Theta
      (\iimg{\phi}{\logp P\monto*\logp P}) \ol\mon \logp P}
      \leq P, \label{eq:lam_sg1}\\
    & \fimg{(\phi^{\sharp} \comp (\iota^{\Theta} \mon \mS))}{\ol\Theta
      (\iimg{\phi}{\logp P\monto*\logp P}) \ol\mon \logp P}
      \leq \iimg{\gamma}{\ol{B}(\logp P, \logp P)}. \label{eq:lam_sg2}
  \end{align}
  Adjoint correspondences allows to write \eqref{eq:lam_sg1} as
  \[
    \fimg{\iota^{\Theta}}{\ol{\Theta}
      (\iimg{\phi}{\logp P\monto*\logp P})}
    \leq \iimg{\phi}{{\logp P\monto* P}},
  \]
  which holds by assumption (2). Finally,~\eqref{eq:lam_sg2} is equivalent to
\begin{align*}
\fimg{(\gamma\comp\phi^{\sharp}\comp(\iota^{\theta}\mon\mS))}{\ol{\Theta}
      (\iimg{\phi}{\logp P\monto*\logp P}) \ol\mon \logp P}\leq&\; \ol{B}(\logp P, \logp P),
\end{align*}
and we prove it as follows:
\begin{flalign*}
  && (\gamma\comp\phi^{\sharp}&\fimg{
  \comp(\iota^{\Theta}\mon\mS))}{\ol{\Theta}
    (\iimg{\phi}{\logp P\monto*\logp P}) \mon \logp P}
  &\\
&&  &\;= \fimg{(\theta^{\sharp} \comp (\Theta\phi\mon\mS))}{\ol{\Theta}
      (\iimg{\phi}{\logp P\monto*\logp P}) \mon \logp P}
  &\by{\Cref{lem:zetaphi}}\\
&&  &\;\leq \fimg{\theta^{\sharp}}{\ol{\Theta}
        (\logp P\monto*\logp P) \mon \logp P}
  &\\
  &&  &\;\leq \ol{B}(\logp P, \logp P)
  &\qedhere
\end{flalign*}
\end{proof}

\begin{proof}[Proof of \Cref{thm:ind-up-to-blackquaregen}]
  The goal is to instantiate \Cref{lem:lambdamain}. All we need to prove is
  \[
    \fimg{\iota^{\Theta}}{\ol{\Theta}
      (\iimg{\phi}{\logp P\monto*\logp P})}
    \leq \iimg{\phi}{{\logp P\monto* P}}.
  \]
  By the lifted adjunction $(\argument)\mathbin{\ol{\mon}}\logp P\dashv\logp
  P\monto*(\argument)$, the above is equivalent to
  \begin{equation*}
    \fimg{(\phi^{\sharp} \comp (\iota^{\Theta} \mon \mS))}{\ol\Theta
      (\iimg{\phi}{\logp P\monto*\logp P}) \ol\mon \logp P}
    \leq P.
  \end{equation*}
  We finish the proof with
  \begin{flalign*}
    && &\fimg{(\phi^{\sharp} \comp (\iota \mon \mS))}{\ol\Theta
         (\iimg{\phi}{\logp P\monto*\logp P}) \ol\mon \logp P}
    & \\
    &&  \;= \quad & \fimg{(\iota^{\Theta} \comp \Theta\phi^{\sharp} \comp \mathrm{st}^{\Theta})}{\ol\Theta
                    (\iimg{\phi}{\logp P\monto*\logp P}) \ol\mon \logp P}
    & \by{\Cref{lem:phi_prop}} \\
    && \;\leq \quad & \fimg{(\iota^{\Theta} \comp \Theta\phi^{\sharp})}{\ol\Theta
                      (\iimg{\phi}{\logp P\monto*\logp P} \ol\mon \logp P)}
    & \by{$\logp P$ is pointed}\\
    && \;\leq \quad & \fimg{(\iota^{\Theta})}{\ol\Theta
                      (\overline{\mathrm{ev}}\comp({(\logp P\monto*\logp P)}
                      \ol\mon \logp P))} \\
    && \;\leq \quad & \fimg{(\iota^{\Theta})}{\ol\Theta\logp P} \leq P. \qedhere
  \end{flalign*}
\end{proof}

\subsection{\texorpdfstring{\stlc{} as a $\lambda$-law}
  {\stlc{} as a lambda-law}}

\paragraph{The category $(\Set^{\fset/{\Tyl}})^{\Tyl}$.}
Let $\fset/{\Tyl}$ be the slice category of the category of finite
cardinals $\fset$, the skeleton of the category of finite sets, over the set of
types $\Tyl$. Category $\fset/{\Tyl}$ can be thought of as the category of
variable contexts over $\Tyl$: an object $\Gamma \in \fset/{\Tyl}$ is a map
$n=\{0,1,\dots,{n-1}\} \to \Tyl$ that assigns variables, represented by natural numbers, to  
their respective types (instead of finite ordinals, we could just as well use any
finite sets, for the price of slightly complicating the mathematics). Let
$|\Gamma|$ denote the domain of $\Gamma$ 
and let $\varnothing$ denote the empty variable context. Morphisms $\Gamma_{1} \to \Gamma_{2}$, i.e.
functions $r \c |\Gamma_{1}| \to |\Gamma_{2}|$ such that
$\Gamma_{1} = \Gamma_{2} \comp r$, are type-respecting \emph{renamings}. Each
type $\tau \in \Tyl$ determines the obvious single-variable context $\check \tau\c
\to \Tyl$; coproducts in~$\fset/{\Tyl}$ are formed by copairing:
\[
  (\Gamma_{1} \c |\Gamma_{1}| \to \Tyl) + (\Gamma_{2} \c |\Gamma_{1}| \to \Tyl)
  = [\Gamma_{1}, \Gamma_{2}] \c |\Gamma_{1}| + |\Gamma_{2}| \to \Tyl.
\]
For an important example, consider the operation of \emph{context extension} $(-
+ \check\tau) \c \fset/{\Tyl} \to \fset/{\Tyl}$ which, as the name suggests,
extends a variable context with a new variable that has a type of $\tau \in \Tyl$,
i.e. $\Gamma, x \c \tau$ in a type-theoretic notation.

Following~\cite{DBLP:conf/fossacs/Fiore05,DBLP:conf/csl/FioreH10,DBLP:journals/mscs/Fiore22}, $(\Set^{\fset/{\Tyl}})^{\Tyl}$ is the
mathematical universe used to model the simply type $\lambda$-calculus . The objects of
this category are type-indexed covariant presheaves over $\fset/{\Tyl}$ (equivalently,
of course, covariant presheaves over $(\fset/{\Tyl})\times\Tyl$) -- in
other words sets indexed by types and variable contexts that furthermore respect
context renaming. We briefly recall the core constructions in $(\Set^{\fset/{\Tyl}})^{\Tyl}$ 
that enable categorical modelling of typed languages with variable binding.
First, the so-called presheaf of
\emph{variables} $V_{\tau} \in \Set^{\fset/{\Tyl}}$ is given in terms of
the Yoneda embedding $V_{\tau} = \mathbf{y}(\check\tau)$. More explicitly,
\[
  V_{\tau}(\Gamma) \cong \{x \in |\Gamma| \mid \Gamma(x) = \tau\}.
\]
Exponentials $Y^{X}$ and their evaluation morphisms $\ev\c Y^X \times X \to Y$
in $\Set^{\fset/{\Tyl}}$
are respectively computed by the following formulas, standard for presheaf toposes~\cite[Sec.~I.6]{MacLaneMoerdijk92}:
\[
  Y^{X}(\Gamma) = \Set^{\fset/{\Tyl}}\bigl(\Set^{\fset/{\Tyl}}(\Gamma,
  \argument) \times X, Y\bigr) \quad \text{and} \quad
  \ev_{\Gamma}(f, x) = f_{\Gamma}(\id_{\Gamma}, x) \in Y(\Gamma),
\]
for $f\c (\argument)^{\Gamma} \times X \to Y$ and an element $x
\in X(\Gamma)$. For simplicity, we write \(f(x) \,:=\,\ev_{\Gamma}(f,x).\)

The context extension $(- + \check\tau) \c \fset/{\Tyl} \to
\fset/{\Tyl}$ operator yields an endofunctor $\delta^{\tau} \c
(\Set^{\fset/{\Tyl}})^{\Tyl} \to (\Set^{\fset/{\Tyl}})^{\Tyl}$ in
the category of presheaves by precomposition:
\begin{equation*}
  (\delta^{\tau}_{\tau'}X)(\Gamma) = X_{\tau'}(\Gamma + \check\tau).
\end{equation*}
This provides a categorical mechanism for typed \emph{variable binding}:
Informally, with $(\delta^{\tau}_{\tau' }X)(\Gamma)$, we render 
terms in context $\Gamma$ from terms in context $\Gamma,x\c\tau$, by abstracting 
the last variable. Category $(\Set^{\fset/{\Tyl}})^{\Tyl}$ is monoidal closed by
the so-called \emph{substitution monoidal structure}~([p.
6]\cite{DBLP:conf/csl/FioreH10}). The tensor product is given by
\begin{equation}
  (X \otimes_{\tau} Y) (\Gamma) = \int^{\Delta \in \Set^{\fset/{\Tyl}}} X_{\tau}(\Delta)
  \times \prod_{i \in |\Delta|}Y_{\Delta(i)}(\Gamma).
\end{equation}
The tensor unit is the presheaf of variables $V$. The internal hom
functor $\llangle -,-\rrangle \c ( (\Set^{\fset/{\Tyl}})^{\Tyl})^{\opp} \times
(\Set^{\fset/{\Tyl}})^{\Tyl} \to (\Set^{\fset/{\Tyl}})^{\Tyl}$, which models
\emph{simultaneous substitution}, is
([Eq. (4)]\cite{DBLP:conf/csl/FioreH10}):
\begin{equation}
  \llangle X,Y \rrangle_{\tau}(\Gamma) = \Set^{\fset/{\Tyl}}\bigl(\prod_{i \in
    |\Gamma|}X_{\Gamma(i)}, Y_{\tau}\bigr).
\end{equation}
An element of $\llangle X,Y \rrangle_{\tau}(\Gamma)$ is thus a natural in $\Delta$ family 
$(f_{\Delta} \c \prod_{i \in |\Gamma|}X_{\Gamma(i)}(\Delta) \to Y_{\tau}(\Delta))_{\Delta\in \fset/{\Tyl}}$, which,
given a tuple of terms $\sigma=(\sigma_{0} \in X_{\Gamma(0)}(\Delta), \ldots,
\sigma_{n-1} \in X_{\Gamma(n-1)}(\Delta))$ that
represents the substitution $[\sigma_0/0,\ldots,\sigma_{n-1}/n-1]$ with
$n=|\Gamma|$, returns a term $t \in Y_{\tau}(\Delta)$. The naturality condition
means that $f_{\Delta}$ commutes with context renaming. The substitution tensor
with the hom $\llangle -,-\rrangle$ forms the following adjoint situation:
\[
  \begin{tikzcd}
    (\Set^{\fset/{\Tyl}})^{\Tyl}(Y \otimes X, Z) \ar[bend left=1em]{r}{(\argument)^{\flat}}
    & (\Set^{\fset/{\Tyl}})^{\Tyl}(Y, \llangle X , Z \rrangle) \ar[bend
    left=1em]{l}{(\argument)^{\sharp}}
\end{tikzcd}
\]

\paragraph{Categorical modelling of \stlc{}.}
We shall now construct a $\lambda$-law in $(\Set^{\fset/{\Tyl}})^{\Tyl}$ for the
semantics of \stlc{}. Recall the syntax endofunctor $\Sigma$ and
behaviour bifunctor $B^{\lambda}$:
\begin{equation*}
  \begin{aligned}
    \Sigma_{\utype}X =
    &\; V_{\utype} + K_{1}
      + \coprod_{\tau \in \Tyl}X_{\arty{\tau}{\utype}}
      \times X_{\tau},\\
    \Sigma_{\arty{\tau_{1}}{\tau_{2}}} X=
    &\; V_{\arty{\tau_{1}}{\tau_{2}}} + \delta^{\tau_{1}}_{\tau_{2}} X
      + \coprod_{\tau \in \Tyl}X_{\arty{\tau}{\arty{\tau_{1}}{\tau_{2}}}}
      \times X_{\tau},
  \end{aligned}
\end{equation*}
and
\begin{flalign*}
  &&B^{\lambda}(X,Y)=\;&\llangle X,Y\rrangle \times B(X,Y)&&\\[1ex] 
  \text{where}&& 
                 \llangle X,Y \rrangle_{\tau}(\Gamma) =\;& \Set^{\fset/{\Tyl}}\Bigl(\prod_{x \in
                                                           |\Gamma|}X_{\Gamma(x)}, Y_{\tau}\Bigr),\notag\\
  && B(X,Y) =\;& (K_{1} + Y + D(X,Y)),\notag\\
  && D_{\utype}(X,Y) =\;& K_{1} \qquad \text{and}
                          \qquad D_{\arty{\tau_{1}}{\tau_{2}}}(X,Y) = Y_{\tau_{2}}^{X_{\tau_{1}}},\notag
\end{flalign*}

First, we break the signature endofunctor $\Sigma \c
(\Set^{\fset/{\Tyl}})^{\Tyl} \to (\Set^{\fset/{\Tyl}})^{\Tyl}$ into the three parts
$\Sigma = V + \Xi' + \Theta$, with $V \c (\Set^{\fset/{\Tyl}})^{\Tyl} \to
(\Set^{\fset/{\Tyl}})^{\Tyl}$ being the presheaf of variables, $\Theta$ the part
of $\lambda$-abstraction and $\Xi'$ that of application and the $\utype$
expression.
Typed signature endofunctors such as $\Xi'$ and $\Theta$ canonically admit a
$V$-pointed strength~\cite[p. 6]{DBLP:conf/csl/FioreH10} (see also
\cite{DBLP:conf/lics/Fiore08}). The most interesting clause is that of
$\Theta$, which underlines the need for the point $\nu \c V \to Y$,
representing the inclusion of variables in some presheaf $Y$ (skipping the
void case on $\utype$):
\[
  \begin{aligned}
    \mathrm{st}^{\Theta}_{X,\nu \c V \to Y}
    & \c \Theta X \otimes Y \to \Theta (X
      \otimes Y) \\
    \mathrm{st}^{\Theta}_{X,\nu \c V \to Y,\arty{\tau_{1}}{\tau_{2}},\Gamma}(\Delta, t \in
    X_{\tau_{2}}(\Delta + \check\tau_{1}), \vec{s} \in \prod_{\tau' \in
    \Delta + \check\tau_{1}}Y_{\tau'}(\Gamma))
    & = (\Delta + \check\tau_{1}, t,
      (\vec{\mathsf{wk}}_{\tau_{1}}(\vec{s}),
      \nu_{\Gamma + \check\tau_{1}}(\mathrm{new}))),
  \end{aligned}
\]
where $\vec{\mathsf{wk}}_{\tau_{1}}$ extends the typing context
of each member $\Gamma \vdash s_{i} \c \tau'$ of the substitution $\vec{s}$
by $\tau_{1}$: $\Gamma, x \c \tau_{1} \vdash \mathsf{wk}_{\tau_{1}}(s_{i}) \c
\tau'$ for each $\tau' \in \Delta$. 

Let $\Gamma,x \c \tau_{1} \vdash t \c \tau_{2}$ be a $\lambda$-term. The
substitution structure of $t$ is the function accepting a
$|\Gamma + \check\tau_{1}|$-sized substitution, meaning a tuple of terms
$(\Delta \vdash \sigma_{0} \c (\Gamma + \check\tau_{1})(0),\ldots,\Delta \vdash
\sigma_{|\Gamma + \check\tau_{1}|- 1} \c \tau_{1})$ and
returning the term $\Delta \vdash t[x_{0}/\sigma_{0},\dots,x_{|\Gamma + \check\tau_{1}|
  - 1}/\sigma_{|\Gamma + \check\tau_{1}| - 1}] \c \tau_{2}$. In this example of \stlc{}, the
natural transformation  $\theta \c \Theta\llangle X, Y \rrangle \to
B(X,\llangle X,Y \rrangle)$ expresses the fact that the substitution structure
of $t$ is equivalently a function accepting a term $\Gamma \vdash e \c
\tau_{1}$ and returning the substitution structure
\[
  \sigma \in \prod_{i \in \Gamma}\Trl_{\Gamma(i)}(\Delta) \mapsto \Delta \vdash
  t[x_{|\Gamma + \check\tau_{1}| - 1}/e][x_{0}/\sigma_{0},\dots] \c \tau_{2}.
\]

In more abstract terms, $\theta$ is defined as follows (again skipping the
void case of $\theta_{\utype}$):
\begin{equation*}
  \begin{aligned}
    \mathrm{\theta}_{\arty{\tau_{1}}{\tau_{2}},\Gamma}
    \c \Theta_{\arty{\tau_{1}}{\tau_{2}}}\llangle X, Y
    \rrangle(\Gamma)
    & = \delta_{\tau_{2}}^{\tau_{1}}\llangle X , Y\rrangle(\Gamma)
      = \Set^{\fset/{\Tyl}}(\prod_{i \in
      |\Gamma + \check\tau_{1}|}X_{(\Gamma + \check\tau_{1})(i)}, Y_{\tau_{2}}\bigr)
    \\
    & =
      \Set^{\fset/{\Tyl}}(\prod_{i \in
      |\Gamma|}X_{\Gamma(i)} \times X_{\tau_{1}}, Y_{\tau_{2}})
      \cong
      \Set^{\fset/{\Tyl}}(\prod_{i \in
      |\Gamma|}X_{\Gamma(i)}, Y_{\tau_{2}}^{X_{\tau_{1}}}) \\
    & =
      \llangle X,Y^{X_{\tau_{1}}}\rrangle_{\tau_{2}}(\Gamma)
      \hookrightarrow \llangle X, B(X,Y)\rrangle_{\arty{\tau_{1}}{\tau_{2}}}(\Gamma).
  \end{aligned}
\end{equation*}
where the last isomorphism is given by the cartesian closed structure of
$\Set^{\fset/{\Tyl}}$. The law $\xi_{X,Y} \c (I + \Xi')(X \times B(X,Y)) \to B(X, (I +
\Xi')^{\star}(X+Y))$ specifies the semantics of variables, the unit expression
and application.
\begin{align*}
  \xi_{X,Y,\tau,\Gamma} \c \quad
  & (I + \Xi')(X \times B(X,Y))_{\tau}(\Gamma)
  & \to \quad
  & B(X, (I + \Xi')^\star (X+Y))(\Gamma) \\
  \xi_{X,Y,\tau,\Gamma}(tr) = \quad
  & \texttt{case}~tr~\texttt{of} \\
  & \mathsf{e} & \mapsto \quad & * & \\
  & x \c \tau & \mapsto \quad & \inl(*) & \\
  & (p,f)\app{}{}(q,g) & \mapsto \quad
  &
    \begin{cases}
      f(q) \text{\quad if $f \c Y_{\tau}^{X_{\tau_{1}}}(\Gamma)$} \\
      f \app{}{} q \text{\quad if $f \c Y_{\arty{\tau_{1}}{\tau}}(\Gamma)$} \\
      \inl(*) \text{\quad if $f \in 1$}.
    \end{cases} &
\end{align*}

Finally, we explain the compatibility
condition \eqref{eq:thetacompat}. Consider the following data:
\begin{enumerate}
\item Types $\tau_{1}$, $\tau_{2}$, typing contexts $\Gamma, \Delta$ and $E$,
  and a term $\Gamma \vdash e \c \tau_{1}$.
\item A natural transformation $f \in \Set^{\fset/{\Tyl}}\bigl(\prod_{i \in
    |E+\check\tau_{1}|}X_{(E + \check\tau_{1})(i)} , Y_{\tau_{2}}\bigr)$.
\item A substitution $(g_{0},g_{1},\dots) \in \prod_{i \in |E|}
  \Set^{\fset/{\Tyl}}\bigl(\prod_{j \in
    |\Delta|}X_{\Delta(j)} , X_{E(i)}\bigr)$.
\item A substitution $(\sigma_{0},\sigma_{1},\dots)
  \in \prod_{i \in |\Delta|}X_{\Delta(i)}(\Gamma)$.
\end{enumerate}
Recall that the \emph{weakening} of a natural transformation $g \in
\Set^{\fset/{\Tyl}}\bigl(\prod_{j \in
  |\Delta|}X_{\Delta(j)} , X_{\tau}\bigr)$ is
\[
  {wk}^{\tau_{1}}_{\tau,\Delta}(g) \in
  \Set^{\fset/{\Tyl}}\Bigl(\prod_{j \in
    |\Delta + \check\tau_{1}|}X_{(\Delta + \check\tau_{1})(j)} , X_{\tau}\Bigr),
\]
given by
$
\mathsf{wk}^{\tau_{1}}_{\tau,\Delta}(g)(\sigma_{0},\dots,e) =
g(\sigma_{0},\dots).
$
The compatibility condition implies that
\[
  \begin{aligned}
    & f(\mathsf{wk}^{\tau_{1}}_{\tau_{2},\Delta}(g_{0})(\sigma_{0},\sigma_{1},\dots,e),\mathsf{wk}^{\tau_{1}}_{\tau_{2},\Delta}(g_{1})(\sigma_{0},\sigma_{1},\dots,e),\dots,\pi_{|\Delta|}(\sigma_{0},\sigma_{1},\dots,e))
      = \\
    & f(g_{0}(\sigma_{0},\sigma_{1},\dots),g_{1}(\sigma_{0},\sigma_{1},\dots),\dots,e).
  \end{aligned}
\]

\end{adjustwidth}
\end{document}